\providecommand{\tabularnewline}{\\}
\theoremstyle{definition}
\newtheorem*{example*}{\protect\examplename}
\theoremstyle{plain}
\newtheorem{thm}{\protect\theoremname}
\theoremstyle{definition}
\newtheorem{defn}[thm]{\protect\definitionname}
\theoremstyle{remark}
\newtheorem{rem}[thm]{\protect\remarkname}
\theoremstyle{plain}
\newtheorem{lem}[thm]{\protect\lemmaname}
\theoremstyle{plain}
\newtheorem{fact}[thm]{\protect\factname}
\theoremstyle{plain}
\newtheorem{cor}[thm]{\protect\corollaryname}
\theoremstyle{remark}
\newtheorem*{rem*}{\protect\remarkname}
\definecolor{cyan1}{HTML}{33c5ff}
\definecolor{cyan2}{HTML}{33ffff}
\definecolor{cyan3}{HTML}{33ff74}
\providecommand{\corollaryname}{Corollary}
\providecommand{\definitionname}{Definition}
\providecommand{\examplename}{Example}
\providecommand{\factname}{Fact}
\providecommand{\lemmaname}{Lemma}
\providecommand{\remarkname}{Remark}
\providecommand{\theoremname}{Theorem}
\begin{document}
\title{A full complexity dichotomy for immanant families}
\author{Radu Curticapean\thanks{
IT University of Copenhagen, Basic Algorithms Research Copenhagen.  Supported by VILLUM Foundation grant 16582. racu@itu.dk}}

\maketitle
\global\long\def\weight{\mathrm{wt}}%
\global\long\def\height{\mathrm{ht}}%
\global\long\def\bpar{b}%
\global\long\def\imbpar{\iota}%

\global\long\def\per{\mathrm{per}}%
\global\long\def\imm{\mathrm{imm}}%
\global\long\def\OnionCyc{\mathrm{\#OnionCycles}}%
\global\long\def\CC{\mathrm{\#CC}}%
\global\long\def\PerfMatch{\mathrm{\#PerfMatch}}%
\global\long\def\match{\mathrm{\#Match}}%
\global\long\def\immProb{\mathrm{Imm}}%

\global\long\def\tablLR{\mathcal{L}}%

\global\long\def\coeff#1#2{[#1]\,#2}%

\global\long\def\FPT{\mathsf{FPT}}%
\global\long\def\sharpETH{\mathsf{\#ETH}}%
\global\long\def\sharpP{\mathsf{\#P}}%
\global\long\def\VP{\mathsf{VP}}%
\global\long\def\VW{\mathsf{VW[1]}}%
\global\long\def\VFPT{\mathsf{VFPT}}%
\global\long\def\VNP{\mathsf{VNP}}%
\global\long\def\P{\mathsf{FP}}%
\global\long\def\NP{\mathsf{NP}}%
\global\long\def\sharpWone{\mathsf{\#W[1]}}%
\global\long\def\Wone{\mathsf{W[1]}}%
\global\long\def\sharpSAT{\mathrm{\#SAT}}%

\global\long\def\leqFPT{\preceq_{\mathit{fpt}}^{T}}%
\global\long\def\leqP{\preceq_{\mathit{p}}^{T}}%
\global\long\def\abs#1{\left\Vert #1\right\Vert }%

\ytableausetup{smalltableaux}

\global\long\def\tSquare{\,\raisebox{4pt}{\scalebox{0.55}{\ydiagram{2,2}}}\,}%

\global\long\def\tSnake{\,\raisebox{5pt}{\scalebox{0.45}{\ydiagram{1+1,2,1}}}\,}%

\global\long\def\tLine{\,\raisebox{0pt}{\scalebox{0.55}{\ydiagram{4}}}\,}%

\global\long\def\tL{\,\raisebox{3pt}{\scalebox{0.55}{\ydiagram{3,1}}}\,}%

\global\long\def\tHDom{\,\raisebox{1pt}{\scalebox{0.55}{\ydiagram{2}}}\,}%

\global\long\def\tVDom{\,\raisebox{3pt}{\scalebox{0.55}{\ydiagram{1,1}}}\,}%

\global\long\def\makeRed#1{\textcolor{red}{#1}}%

\newcommand{\SSTabOne}{
\begin{ytableau} 
*(cyan)    & *(red)     & *(yellow)  & *(green) \\ 
*(cyan)    & *(red)     & *(yellow)  & *(green) \\ 
*(cyan)    & *(red)     & *(yellow) \\
*(red)     & *(red)     & *(yellow) \\
*(yellow)  & *(yellow)  & *(yellow) \\
*(green)   & *(green)   
\end{ytableau}
}

\newcommand{\SSTabTwo}{
\begin{ytableau} 
*(cyan)    & *(red)  	& *(green)	& *(green) \\ 
*(cyan)    & *(red) 	& *(green) & *(green) \\ 
*(yellow)  & *(yellow)  & *(green)  \\
*(yellow)  & *(yellow)  & *(green)  \\
*(green)   & *(green)   & *(green)  \\
*(green)   & *(green)   
\end{ytableau}
}

\newcommand{\SSTabThree}{
\begin{ytableau} 
*(cyan)    & *(red)     & *(red)     & *(green) \\ 
*(cyan)    & *(red)     & *(green)   & *(green) \\ 
*(cyan)    & *(red)     & *(green)  \\
*(yellow)  & *(yellow)  & *(green)  \\
*(yellow)  & *(green)   & *(green)  \\
*(green)   & *(green)   
\end{ytableau}
}

\newcommand{\SSLineOne}{
\begin{ytableau} 
*(cyan) \\
*(cyan) \\
*(cyan) \\
*(cyan)
\end{ytableau}
}

\newcommand{\SSLineTwo}{
\begin{ytableau} 
*(red) \\
*(red) \\
*(green) \\
*(green)
\end{ytableau}
}

\newcommand{\SSLineEmpty}{
\ydiagram{1,1,1,1}
}

\begin{abstract}
Given an integer $n\geq1$ and an irreducible character $\chi_{\lambda}$
of $S_{n}$ for some partition $\lambda$ of $n$, the immanant $\mathrm{imm}_{\lambda}:\mathbb{C}^{n\times n}\to\mathbb{C}$
maps matrices $A\in\mathbb{C}^{n\times n}$ to 
\[
\mathrm{imm}_{\lambda}(A)=\sum_{\pi\in S_{n}}\chi_{\lambda}(\pi)\prod_{i=1}^{n}A_{i,\pi(i)}.
\]
Important special cases include the \emph{determinant }and \emph{permanent},
which are the immanants associated with the \emph{sign} and \emph{trivial}
character, respectively.

It is known that immanants can be evaluated in polynomial time for
characters that are ``close'' to the sign character: Given a partition
$\lambda$ of $n$ with $s$ parts, let $b(\lambda):=n-s$ count the
boxes to the right of the first column in the Young diagram of $\lambda$.
For a family of partitions $\Lambda$, let $b(\Lambda):=\max_{\lambda\in\Lambda}b(\lambda)$
and write $\immProb(\Lambda)$ for the problem of evaluating $\imm_{\lambda}(A)$
on input $A$ and $\lambda\in\Lambda$.
\begin{itemize}
\item If $b(\Lambda)<\infty$, then $\immProb(\Lambda)$ is known to be
polynomial-time computable. This subsumes the case of the determinant.
\item If $b(\Lambda)=\infty$, then previously known hardness results suggest
that $\immProb(\Lambda)$ cannot be solved in polynomial time. However,
these results only address certain restricted classes of families
$\Lambda$.
\end{itemize}
In this paper, we show that the parameterized complexity assumption
$\FPT\neq\Wone$ rules out polynomial-time algorithms for $\immProb(\Lambda)$
for any computationally reasonable family of partitions $\Lambda$
with $b(\Lambda)=\infty$. We give an analogous result in algebraic
complexity under the assumption $\VFPT\neq\VW$. Furthermore, if $b(\lambda)$
even grows polynomially in $\Lambda$, we show that $\immProb(\Lambda)$
is hard for $\sharpP$ and $\VNP$. This concludes a series of partial
results on the complexity of immanants obtained over the last 35 years.
\end{abstract}
\newpage{}

\section{Introduction}

The determinant and permanent of an $n\times n$ matrix $X=(x_{i,j})$
can be defined by the sum-product formulas
\begin{align*}
\det(X) & =\sum_{\pi\in S_{n}}\mathrm{sgn}(\pi)\prod_{i=1}^{n}x_{i,\pi(i)},\\
\per(X) & =\sum_{\pi\in S_{n}}\prod_{i=1}^{n}x_{i,\pi(i)}.
\end{align*}
The similarity between these formulas is deceiving: While determinants
admit polynomial-size circuits and can be evaluated in polynomial
time, only exponential-size circuits and exponential-time algorithms
are known for permanents. Valiant~\cite{DBLP:journals/tcs/Valiant79}
underpinned this divide by proving that evaluating permanents is $\sharpP$-hard:
Any polynomial-time algorithm for this problem would entail a polynomial-time
algorithm for counting (and thus deciding the existence of) satisfying
assignments to Boolean formulas, thereby collapsing $\mathsf{P}$
and $\NP$\emph{.} With the $\VNP$-completeness of the permanent
family~\cite{DBLP:conf/stoc/Valiant79a}, an analogous statement
holds in algebraic complexity theory. 

Unconditional lower bounds for the complexity of permanents however
remain elusive, with only a quadratic lower bound on the determinantal
complexity of permanents known~\cite{8178470,DBLP:conf/stoc/CaiCL08}.
That is, expressing the permanent of an $n\times n$ matrix $X$ as
the determinant of an $m\times m$ matrix (whose entries are linear
forms in the entries of $X$) is known to require $m=\Omega(n^{2})$.
One of the core objectives in algebraic complexity theory lies in
proving that $m$ must grow super-polynomially~\cite{DBLP:conf/stoc/Valiant79a,DBLP:books/daglib/0090316,DBLP:books/daglib/0025071},
and this can be viewed as an algebraic version of the $\mathsf{P}\neq\NP$
problem.

\paragraph{The family of immanants.}

To understand the relationship between determinants and permanents
better, it may help to recognize them as part of a larger family:
The \emph{immanants} are matrix forms that are arranged on a spectrum
in which the determinant and permanent represent extreme cases. These
forms were studied by Schur~\cite{Schur1918,schur_thesis} in the
context of group character theory, and Littlewood and Richardson later
explicitly introduced them as immanants~\cite{10.2307/91293}.

Given any \emph{class function} $f:S_{n}\to\mathbb{C}$, i.e., a function
of permutations that depends only on the (multiset of) cycle lengths
of the input permutation, the immanant $\imm_{f}:\mathbb{C}^{n\times n}\to\mathbb{C}$
is defined by replacing the permutation sign $\mathrm{sgn}(\pi)$
in the determinant expansion with $f(\pi)$:

\[
\imm_{f}(X)=\sum_{\pi\in S_{n}}f(\pi)\prod_{i=1}^{n}x_{i,\pi(i)}.
\]

In the literature, immanants are typically defined by requiring $f$
to be an \emph{irreducible character} of $S_{n}$, i.e., an element
from a particular basis for the vector space of class functions.\footnote{The resulting immanants are sometimes also called \emph{character}
immanants, as opposed to other types of immanants, such as the Kazhdan-Lusztig
immanants~\cite{RHOADES2006793}.} General $f$-immanants can then be expressed as linear combinations
of such immanants. Two extremal examples of irreducible characters
are the \emph{trivial }character $\mathbf{1}:S_{n}\to\{1\}$ and the
\emph{sign} character $\mathrm{sgn}:S_{n}\to\{-1,1\}$, which induce
\begin{align*}
\det(X) & =\imm_{\mathrm{sgn}}(X),\\
\per(X) & =\imm_{\mathbf{1}}(X).
\end{align*}
In general, the irreducible characters of $S_{n}$ correspond naturally
to partitions of $n$, as outlined in Section~\ref{sec: rep-th}.
To see the existence of such a correspondence, note that the dimension
of the space of class functions on $S_{n}$ is the number of different
cycle length formats of $n$-permutations, that is, different partitions
of the integer $n$. For now, let us remark that the refinement-wise
minimal and maximal partitions $(1,\ldots,1)$ and $(n)$ naturally
correspond to the sign and trivial character, respectively. As another
example, we have
\begin{equation}
\chi_{(2,1,\ldots,1)}(\pi)=\mathrm{sgn}(\pi)\cdot(\#\{\text{fixed points of }\pi\}-1).\label{eq: chi_21111}
\end{equation}

Abbreviating $\imm_{\lambda}=\imm_{\chi_{\lambda}}$, we have $\det=\imm_{(1,\ldots,1)}$
and $\per=\imm_{(n)}$. Likewise, $\imm_{(2,1,\ldots,1)}$ sums over
row-column permutations of a matrix with weights as given in (\ref{eq: chi_21111}).
For a more applied example, it is known that the number of Hamiltonian
cycles in a directed $n$-vertex graph $G$, i.e., the immanant associated
with the indicator function for cyclic permutations (evaluated on
the adjacency matrix of $G$) can be written as a linear combination
of the \emph{hook} immanants $\imm_{(r,1^{n-r})}$ for $1\leq r\leq n$.

Beyond their theoretical origins in group character theory, immanants
have been applied in combinatorial chemistry~\cite{DBLP:journals/jcisd/Cash03}
and linear optics~\cite{PhysReview}, and they feature in (conjectured)
inequalities in matrix analysis~\cite{SHCHESNOVICH2016196}. In this
paper, we focus on complexity-theoretic aspects of immanants and their
role as an interpolating family between determinants and permanents.

\paragraph{The complexity of immanants.}

It is known that irreducible characters of $S_{n}$ can be evaluated
in polynomial time~\cite{DBLP:journals/siamcomp/Burgisser00a,DBLP:books/daglib/0025071}.
Using this, any character immanant of an $n\times n$ matrix can be
evaluated in $n!\cdot n^{O(1)}$ time by brute-force, or in $2^{n+o(n)}$
time by a variant of the Bellman--Held--Karp dynamic programming
approach for Hamiltonian cycles. For some immanants however, among
them the determinant, this exponential running time is far from optimal:
Hartmann~\cite{doi:10.1080/03081088508817680} gave an algorithm
for evaluating $\imm_{\lambda}$ in $O(n^{6b(\lambda)+4})$ time,
where $b(\lambda):=n-s$ for a partition $\lambda=(\lambda_{1},\ldots,\lambda_{s})$
with $s$ parts. In visual terms, the quantity $b(\lambda)$ counts
the boxes to the right of the first column in the Young diagram of
$\lambda$, which is a left-aligned shape whose $i$-th row contains
$\lambda_{i}$ boxes, when $\lambda$ is ordered non-increasingly:
\begin{table}[H]
\centering{}%
\begin{tabular}{c}
\ydiagram{4,4,3,3,3,2}\tabularnewline
\tabularnewline
$\lambda=(4,4,3,3,3,2)\text{ with }b(\lambda)=13$\tabularnewline
\end{tabular}
\end{table}

In particular, partitions $\lambda$ with $b(\lambda)=O(1)$ induce
polynomial-time solvable immanants. Barvinok~\cite{barvinok} and
Bürgisser~\cite{DBLP:journals/siamcomp/Burgisser00} later gave algorithms
with improved running times $O(n^{2}d_{\lambda}^{4})$ and $O(n^{2}s_{\lambda}d_{\lambda})$,
where $s_{\lambda}$ and $d_{\lambda}$ denote the numbers of \emph{standard}
and \emph{semi-standard} tableaux of shape $\lambda$.\footnote{Given a partition $\lambda$ of $n$, a standard tableau of shape
$\lambda$ is an assignment of the numbers $1,\ldots,n$ to the boxes
in the Young diagram of $\lambda$ such that all rows and columns
are strictly increasing. In a semi-standard tableau, rows are only
required to be non-decreasing.} These algorithms give better running times in the exponential-time
regime, but they do not identify new polynomial-time solvable immanants.
One is therefore naturally led to wonder whether $b(\lambda)$ is
indeed the determining parameter for the complexity of immanants.
To investigate this formally, we consider \emph{families} of partitions
$\Lambda$ and define $\immProb(\Lambda)$ as the problem of evaluating
$\imm_{\lambda}(A)$ on input a matrix $A$ and a partition $\lambda\in\Lambda$.
As discussed above, the problem $\immProb(\Lambda)$ is polynomial-time
solvable if the quantity 
\[
b(\Lambda):=\max_{\lambda\in\Lambda}b(\lambda)
\]
is finite. On the other hand, for various families $\Lambda$ with
unbounded $b(\Lambda)$, the problem $\immProb(\Lambda)$ is indeed
known to be hard for the counting complexity class $\sharpP$ and
its algebraic analog $\VNP$:
\begin{itemize}
\item Bürgisser~\cite{DBLP:journals/siamcomp/Burgisser00a} showed $\VNP$-completeness
and $\sharpP$-hardness of $\immProb(\Lambda)$ for any family $\Lambda$
of hook partitions $(t(n),1^{n-t(n)})$, provided that $t=\Omega(n^{\alpha})$
with $\alpha>0$ can be computed in polynomial time. A similar result
appears in Hartmann's work~\cite{doi:10.1080/03081088508817680}.
\item In the same paper, Bürgisser showed similar hardness results for families
of rectangular partitions of polynomial width. (The width is the largest
entry in the partition.)
\end{itemize}
In his 2000 monograph~\cite{DBLP:books/daglib/0025071}, Bürgisser
conjectures that $\immProb(\Lambda)$ is hard for \emph{any} reasonable
family $\Lambda$ of polynomial width. He also asks about the complexity
status of partitions of width $2$, and overall deems the complexity
of immanants to be ``still full of mysteries''. Some of these mysteries
have since been resolved:
\begin{itemize}
\item In 2003, Brylinski and Brylinski~\cite{DBLP:journals/corr/cs-CC-0301024}
showed $\VNP$-completeness for any family of partitions $\Lambda$
with a gap of width $\Omega(n^{\alpha})$ for $\alpha>0$. Here, a
gap is the difference between two consecutive rows.
\item In 2013, Mertens and Moore~\cite{DBLP:journals/toc/MertensM13} proved
$\sharpP$-hardness for the family $\Lambda$ of \emph{all} partitions
of width $2$, that is, the partitions containing only entries $1$
and $2$. They also proved $\mathsf{\oplus P}$-hardness for the more
restricted family of partitions containing only the entry $2$.
\item In the same year, de Rugy-Altherre~\cite{DBLP:conf/cie/Rugy-Altherre13}
gave a dichotomy for partition families $\Lambda$ of constant width
and polynomial growth of $b(\lambda)$, confirming for such families
that boundedness of $b(\Lambda)$ indeed determines the complexity
of $\immProb(\Lambda)$.
\end{itemize}
However, an exhaustive complexity classification of $\immProb(\Lambda)$
for general partition families $\Lambda$ still remained open, even
35 years after Hartmann's initial paper~\cite{doi:10.1080/03081088508817680}
and despite several appearances as an open problem~\cite{DBLP:journals/toc/MertensM13,DBLP:conf/cie/Rugy-Altherre13},
also in a monograph~\cite{DBLP:books/daglib/0025071}. In fact, even
very special cases like $\immProb(\Lambda)$ for the staircase partitions
$(k,k-1,\ldots,1)$ remained unresolved~\cite{DBLP:conf/cie/Rugy-Altherre13}.

\subsection{Our results\label{sec: intro-results}}

We classify the complexity of the problems $\immProb(\Lambda)$ for
partition families $\Lambda$ satisfying natural computability and
density conditions that are satisfied by all families studied in the
literature. Under the assumption $\FPT\neq\sharpWone$ from parameterized
complexity~\cite{DBLP:journals/siamcomp/FlumG04}, we confirm that
$\immProb(\Lambda)$ is polynomial-time solvable iff $b(\Lambda)$
is unbounded. An algebraic analogue holds under the assumption $\VFPT\neq\VW$
introduced by Bläser and Engels~\cite{DBLP:conf/iwpec/BlaserE19}.
(Please consider Section~\ref{subsec: prelim-complexity} for a brief
introduction to the relevant complexity classes.) Our classification
holds even if $b(\lambda)$ only grows sub-polynomially in $\Lambda$,
which allows us to address families such as 
\begin{equation}
\Lambda_{\mathrm{log}}=\{(\lceil\log n\rceil,1^{n})\mid n\in\mathbb{N}\}.\label{eq: Lambda_log}
\end{equation}
Note that $\immProb(\Lambda_{\mathrm{log}})$ can be solved in $n^{O(\log n)}$
time by the $n^{O(b(\lambda))}$ time algorithms discussed before,
which likely prevents hardness for $\sharpP$ or $\VNP$. At the same
time, a polynomial-time algorithm seems unlikely. Thus, partition
families like $\immProb(\Lambda_{\mathrm{log}})$ fall into the ``blind
spot'' of classical dichotomies, an issue that is also alluded to
in~\cite{DBLP:conf/cie/Rugy-Altherre13}.

Our sanity requirements on $\Lambda$ are encapsulated as follows:
We say that $\Lambda$ \emph{supports growth }$g:\mathbb{N}\to\mathbb{N}$
if every $n\in\mathbb{N}$ admits a partition $\lambda^{(n)}\in\Lambda$
with $b(\lambda^{(n)})\geq g(n)$ and total size $\Theta(n)$. This
ensures that $\Lambda$ is dense enough and that $\Lambda$ supplies
sufficiently many boxes both in the first column \emph{and} to the
right of it. We may also require that $\lambda^{(n)}$ can be computed
in polynomial time on input $n\in\mathbb{N}$ and then say that $\Lambda$
\emph{computationally }supports growth $g$. This condition is not
required for the algebraic completeness results.
\begin{example*}
The family of staircase partitions $(n,n-1,\ldots,1)$ supports growth
$\Omega(n)$. The partition families $(\lceil\log n\rceil,1^{n})$
and $(n,1^{2^{n}})$ for $n\in\mathbb{N}$ support growth $\Omega(\log n)$,
even though the second family is exponentially sparse. On the other
hand, partition families whose sizes grow \emph{doubly} exponentially
do not support any growth by our definition. It might still be possible
to address such families via ``infinitely often'' versions of $\sharpP$
or $\VNP$, but we currently see no added value in doing so.
\end{example*}
In the polynomial growth regime for $b(\lambda)$, we obtain classical
$\sharpP$-hardness and $\VNP$-completeness results. As a bonus,
we also obtain the expected quantitative lower bounds under the exponential-time
hypothesis $\sharpETH$, which postulates that counting satisfying
assignments to $n$-variable $3$-CNFs takes $\exp(\Omega(n))$ time.
\begin{thm}
\label{thm: main-poly}For any family of partitions $\Lambda$:
\begin{itemize}
\item If $b(\Lambda)<\infty$, then $\immProb(\Lambda)\in\P$ and $\immProb(\Lambda)\in\VP$.
\item Otherwise, if $\Lambda$ supports growth $\Omega(n^{\alpha})$ for
some $\alpha>0$, then $\immProb(\Lambda)$ is $\VNP$-complete. If
$\Lambda$ computationally supports growth $\Omega(n^{\alpha})$,
then $\immProb(\Lambda)$ is $\sharpP$-hard and admits no $\exp(o(n^{\alpha}))$
time algorithm unless $\sharpETH$ fails. 
\end{itemize}
\end{thm}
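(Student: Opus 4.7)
The upper-bound direction is immediate from Hartmann's $O(n^{6b(\lambda)+4})$-time algorithm for $\imm_{\lambda}$: if $b(\Lambda)\leq B$ then this running time is polynomial, so $\immProb(\Lambda)\in\P$. Because that algorithm is a straight-line arithmetic procedure of size polynomial in $n$, the same construction witnesses membership in $\VP$ (viewing the matrix entries as formal variables).

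For the hardness direction I would reduce the permanent---which is simultaneously $\sharpP$-hard and $\VNP$-complete---to $\immProb(\Lambda)$. Given an $m\times m$ permanent input $A$, the growth hypothesis supplies a partition $\lambda^{(n)}\in\Lambda$ of total size $\Theta(n)$ with $b(\lambda^{(n)})\geq m$, where $n=\Theta(m^{1/\alpha})$. The first column of $\lambda^{(n)}$ then has length $s=n-b(\lambda^{(n)})\leq n-m$, while at least $m$ boxes lie to its right. The plan is to construct an $n\times n$ matrix $B$ that splits into an active $m\times m$ block carrying $A$ and a passive $(n-m)\times(n-m)$ block of specially chosen form, so that $\imm_{\lambda^{(n)}}(B)$ equals $c\cdot\per(A)$ for a known nonzero scalar $c$.

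The main obstacle is that $\chi_{\lambda^{(n)}}$ does not factor across block decompositions of permutations, so one cannot naively multiply the immanants of the two blocks. To handle this, I would either (i) invoke the branching rule for $S_{n}\downarrow S_{m}\times S_{n-m}$ to decompose $\chi_{\lambda^{(n)}}$ into a signed combination of products $\chi_{\mu}\boxtimes\chi_{\nu}$ with $\mu\vdash m$ and $\nu\vdash n-m$, then use polynomial interpolation over a scalar parameter planted in the passive block to isolate the single term $\mu=(m)$ (the trivial character on the active block, whose immanant is exactly $\per(A)$); or (ii) apply the Jacobi--Trudi and Murnaghan--Nakayama machinery to rewrite $\imm_{\lambda^{(n)}}$ as a signed sum of products of power-sum symmetric functions in the eigenvalues of $B$ that, for a carefully chosen passive block, collapse to a constant multiple of $\per(A)$ modulo terms that vanish or can be cancelled by interpolation. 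In both routes, the condition $b(\lambda^{(n)})\geq m$ provides exactly the combinatorial budget needed so that the permanent-like term survives.

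This reduction is a $p$-projection in the algebraic setting and does not rely on computability of $\lambda^{(n)}$, giving $\VNP$-completeness; when $\Lambda$ computationally supports growth $\Omega(n^{\alpha})$ the reduction becomes polynomial time, yielding $\sharpP$-hardness. For the quantitative $\sharpETH$ bound, chain with the standard $\sharpSAT\leq\per$ reduction: an $N$-variable $\sharpSAT$ instance becomes a permanent of size $O(N)$, which the above construction embeds into an immanant instance of size $\Theta(N^{1/\alpha})$. An $\exp(o(n^{\alpha}))$-time algorithm for $\immProb(\Lambda)$ would then decide $\sharpSAT$ in $\exp(o(N))$ time, contradicting $\sharpETH$.
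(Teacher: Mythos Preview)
Your upper-bound argument and the overall reduction skeleton (permanent $\to$ immanant of size $\Theta(m^{1/\alpha})$, then chain with $\sharpSAT\to\per$ for the $\sharpETH$ bound) are fine and match the paper. The gap is in the heart of the reduction, which you correctly flag as ``the main obstacle'' but do not actually resolve.

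Your approach (i) breaks concretely. For a block-diagonal matrix $B=\mathrm{diag}(A,C)$, the restriction of $\chi_\lambda$ to $S_m\times S_{n-m}$ is governed by Littlewood--Richardson coefficients, so that $\imm_\lambda(B)=\sum_{\mu,\nu}c^\lambda_{\mu\nu}\,\imm_\mu(A)\,\imm_\nu(C)$. To isolate the summand $\mu=(m)$ you need $c^\lambda_{(m),\nu}\neq 0$ for some $\nu$, which by Pieri's rule means a horizontal $m$-strip must be removable from $\lambda$. But $b(\lambda)\geq m$ does \emph{not} guarantee this: for the width-$2$ family $\lambda=(2^k,1^\ell)$ one has $b(\lambda)=k$ arbitrarily large, yet no horizontal strip of size exceeding $2$ can be removed. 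The permanent term is simply absent from the expansion, and no amount of interpolation over the passive block can produce it. (Even when the term is present, isolating one $\mu$ among the $p(m)$ many partitions of $m$ via a single scalar parameter would need further justification.) Approach (ii) is too vague to evaluate; power sums of eigenvalues are traces of matrix powers, not permanents, and there is no evident mechanism by which they would ``collapse to a constant multiple of $\per(A)$''.

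This is precisely why the paper's proof is as elaborate as it is. The paper does not use a block decomposition at all. Instead it proves a structural win-win (Lemma~\ref{lem: mine-main}): any $\lambda$ with large $b(\lambda)$ either has a large \emph{staircase} or allows one to peel many \emph{non-vanishing tetrominos}. It then gives two separate graph-gadget reductions (Sections~\ref{sec: staircase} and~\ref{sec: tetrominos}), one for each resource, building a digraph $G$ with $\PerfMatch(H)=c\cdot\imm_\lambda(G)$ for a computable $c\neq 0$. The width-$2$ partitions that defeat your approach fall into the tetromino case; the non-vanishing of $c$ there rests on a delicate parity argument about domino tilings (Lemma~\ref{lem: domino-tiling-sign} and Fact~\ref{fact: nonvan-tetro}) that has no counterpart in your outline.
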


Theorem~\ref{thm: main-poly} subsumes all known $\VNP$-hardness
and $\sharpP$-hardness results for immanant families, confirms the
conjecture from \cite{DBLP:journals/toc/MertensM13}, and settles
the case of staircases. Using parameterized complexity theory, we
also address the sub-polynomial growth regime for $b(\lambda)$. To
this end, we consider \emph{parameterized problems}, whose instances
$(x,k)$ come with a \emph{parameter} $k$. The corresponding objects
in the algebraic setting are \emph{parameterized polynomial families}
$(p_{n,k})$, where the second index $k$ is a parameter. A parameterized
problem (or polynomial family) is said to be \emph{fixed-parameter
tractable} if it can be solved in $f(k)\cdot n{}^{O(1)}$ time with
$n=|x|$ (or admits circuits of that size) for some computable function
$f$. The problem (or polynomial family) is then said to be contained
in $\FPT$ (or $\VFPT$). The classes $\sharpWone\supseteq\FPT$ (and
$\VW\supseteq\VFPT$) contain problems (and polynomial families) that
are believed not to be fixed-parameter tractable.
\begin{thm}
\label{thm: main-param}For any family of partitions $\Lambda$:
\begin{itemize}
\item If $b(\Lambda)<\infty$, then $\immProb(\Lambda)\in\P$ and $\immProb(\Lambda)\in\VP$.
\item Otherwise, if $\Lambda$ supports growth $g\in\omega(1)$, then $\immProb(\Lambda)\notin\VP$
unless $\VFPT=\VW$. If $\Lambda$ computationally supports growth
$g$, then $\immProb(\Lambda)\notin\P$ unless $\FPT=\sharpWone$.
\end{itemize}
\end{thm}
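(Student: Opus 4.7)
The first bullet follows immediately from Hartmann's $O(n^{6b(\lambda)+4})$-time algorithm (or the improved algorithms of Barvinok and B\"urgisser cited in the introduction), which yields both $\P$ and $\VP$ membership whenever $b(\Lambda)<\infty$. The whole difficulty therefore lies in the hardness direction, and my plan is to split it into two steps: first, a \emph{parameterized core hardness} result stating that $\imm_\lambda(A)$, parameterized by $k=b(\lambda)$, is $\sharpWone$-hard (and $\VW$-hard algebraically) already for a single canonical family of partitions; second, a \emph{padding/embedding} argument that converts such a core-hard instance into an instance of $\immProb(\Lambda)$ for an arbitrary $\Lambda$ supporting growth $g\in\omega(1)$.

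For the core hardness, the natural target is a family with a simple combinatorial interpretation, e.g.\ hooks $(k,1^{n-k})$ or a family of $(k,1)$-like shapes. The plan is to reduce from a standard $\sharpWone$-hard problem such as $\#k$-$\mathrm{Clique}$ or $\#k$-$\mathrm{Matching}$ by exploiting that $\chi_\lambda(\pi)$ for $\lambda$ with $b(\lambda)=k$ depends on only $O(k)$ bits of cycle-type information, and by interpreting the resulting weighted sum over permutations as a generating function for cycle covers whose ``non-trivial'' cycles encode copies of the target subgraph. Standard polynomial interpolation (evaluating the immanant at $f(k)\cdot n^{O(1)}$ many matrices obtained by scaling a few rows) then isolates the contribution counting the subgraphs of interest. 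Known hardness results of Brylinski--Brylinski, Mertens--Moore, and B\"urgisser for hooks and related shapes strongly suggest that such a reduction can be made parameter-preserving.

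The second step uses the growth condition on $\Lambda$: given the parameter $k$, choose $N=N(k)$ minimal such that some $\lambda^{(N)}\in\Lambda$ has $b(\lambda^{(N)})\geq k$; since $g\in\omega(1)$, this $N$ depends only on $k$, which is exactly what FPT reductions allow. Writing $\lambda'=\lambda^{(N)}$ and letting $\lambda$ be the core-hard shape of the same $b$-parameter, I intend to construct a matrix of the form $A\oplus B$ (block-diagonal with variable entries in $B$) and use the Littlewood--Richardson expansion
\[
\imm_{\lambda'}(A\oplus B)=\sum_{\mu,\nu}c^{\lambda'}_{\mu,\nu}\,\imm_{\mu}(A)\,\imm_{\nu}(B),
\]
choosing $B$ so that only finitely many terms contribute and varying indeterminates in $B$ so that polynomial interpolation over $f(k)\cdot n^{O(1)}$ evaluations isolates $\imm_{\lambda}(A)$. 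The computational vs.\ purely existential versions of the growth condition correspond exactly to whether $\lambda^{(N)}$ must be produced in polynomial time (needed for the combinatorial FPT reduction giving $\FPT=\sharpWone$) or merely exist (sufficient for the $\VFPT=\VW$ reduction, which is non-uniform).

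The main obstacle, I expect, is the embedding step: the partitions $\lambda'$ supplied by $\Lambda$ may be very irregular, so I cannot hope to recover the core immanant from a single $\imm_{\lambda'}$ in closed form. The plan is to replace exact embedding with a polynomial-interpolation reduction that only requires the Littlewood--Richardson decomposition to contain the targeted summand with a nonzero, efficiently computable coefficient, and to show that any $\lambda'$ with enough boxes both in the first column and to its right (which is precisely what the growth condition $\Theta(n)$ total size together with $b(\lambda^{(n)})\geq g(n)$ guarantees) admits such a decomposition. A secondary challenge is to arrange the core hardness reduction to be flexible enough that the shape produced matches one of the $\mu$ that appear with nonzero Littlewood--Richardson coefficient inside $\lambda'$; I expect to address this by parameterizing the core family over sufficiently many shapes so that at least one always fits.
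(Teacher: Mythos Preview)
Your plan has a genuine gap in the embedding step, and the paper takes a completely different route that sidesteps this obstacle.

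The Littlewood--Richardson expansion $\imm_{\lambda'}(A\oplus B)=\sum_{\mu,\nu}c^{\lambda'}_{\mu,\nu}\,\imm_{\mu}(A)\,\imm_{\nu}(B)$ is correct, but your hope of isolating a single $\imm_{\lambda}(A)$ with $f(k)\cdot n^{O(1)}$ evaluations is not justified. If $A$ has size $m$, then $m$ must grow with the input size $n$ (the core instance encodes an $n$-vertex graph), and $\mu$ then ranges over all partitions of $m$ contained in $\lambda'$, of which there can be exponentially many. You give no mechanism to force ``only finitely many terms'' to contribute, and no argument that the coefficient of your target $\mu$ is nonzero and efficiently computable. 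Relatedly, your choice of $N=N(k)$ depending only on $k$ conflicts with needing $\lVert\lambda'\rVert\geq m\geq n$: the partition size must scale with the input, not just with the parameter, so the growth condition has to be used differently from how you use it.

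The paper never passes through a ``canonical'' immanant family at all. It works directly with the supplied $\lambda\in\Lambda$: a structural lemma (Lemma~\ref{lem: mine-main}) shows that any $\lambda$ with large $b(\lambda)$ has either a large staircase $w(\lambda)\geq\Omega(\sqrt{b(\lambda)})$ or many non-vanishing tetrominos $s(\lambda)\geq\Omega(b(\lambda))$. Each case comes with its own graph construction (Lemmas~\ref{lem: finalred-staircase} and~\ref{lem: finalred-tetro}) turning a $\match(H,k)$ instance into a digraph $G$ with $\match(H,k)=c\cdot\coeff{x^{t}}{\imm_{\lambda}(G)}$ for an explicitly nonzero $c$; the nonvanishing of $c$ is the delicate point and is proved via a domino-tiling parity argument (Lemma~\ref{lem: domino-tiling-sign}), not via Littlewood--Richardson bookkeeping. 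In the sub-polynomial regime one picks $\lambda\in\Lambda$ with $b(\lambda)\geq 24k$ and enough first-column boxes to guarantee $d(\lambda)\geq n^{2}$ dominos, then reduces from $\match^{(h)}$ with $h(n)=\sqrt{g(n)}/24$; this is a genuine polynomial-time (resp.\ polynomial-size circuit) reduction on inputs of size $n$, and the stated collapses follow.
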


Note that we do not prove $\immProb(\Lambda)$ to be \emph{hard} for
$\sharpWone$ or $\VW$ as a parameterized problem with parameter
$b(\lambda)$, even though this might seem natural. Indeed, problems
like $\immProb(\Lambda_{\mathrm{log}})$ are trivially fixed-parameter
tractable in the parameter $b(\lambda)$. We only show that \emph{polynomial-time}
algorithms for $\immProb(\Lambda)$ would render $\sharpWone$-hard
problems fixed-parameter tractable.

\subsection{Proof outline\label{sec: intro:proof}}

We establish Theorems~\ref{thm: main-poly} and \ref{thm: main-param}
by reduction from $\match$, the problem of counting $k$-matchings
in bipartite graphs $H$. When parameterized by $k$, this problem
is $\sharpWone$-complete~\cite{DBLP:conf/iwpec/BlaserC12,DBLP:conf/icalp/Curticapean13,DBLP:conf/focs/CurticapeanM14,DBLP:conf/stoc/CurticapeanDM17},
with an analogous statement in the algebraic setting~\cite{DBLP:conf/iwpec/BlaserE19}.
When $k$ grows polynomially in $|V(H)|$, counting $k$-matchings
is complete for $\sharpP$ and $\VNP$ by a trivial reduction from
the permanent~\cite{DBLP:journals/tcs/Valiant79}.

To reduce counting matchings to immanants, we proceed in three stages:
First, we identify two types of ``exploitable resources'' in partitions,
then we show how to exploit them for a reduction, and finally we wrap
the proof up in complexity-theoretic terms.

\subsubsection*{Extracting resources (Section~\ref{sec:Staircases-versus-skew})}

Our construction relies on two types of resources that can supplied
by a given partition: A large \emph{staircase }or a large number of
\emph{non-vanishing tetrominos}.

\begin{figure}[H]
\centering{}\includegraphics[width=10cm]{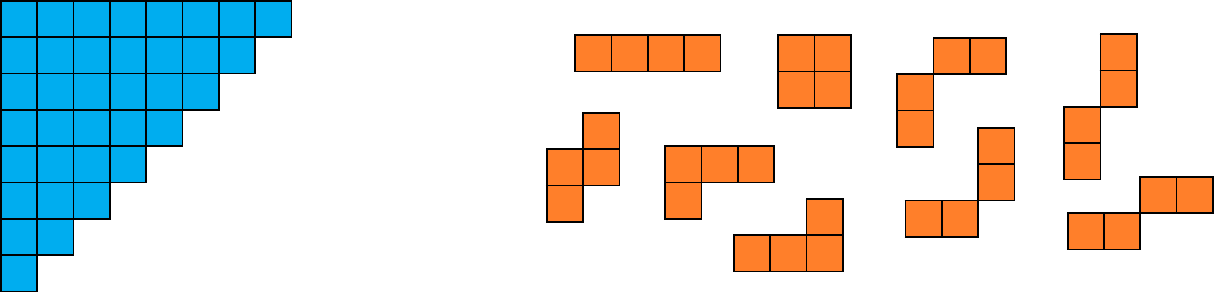}\label{fig: tetrofig}
\end{figure}
To define these notions, consider successively ``peeling'' dominos
$\tHDom$ and $\tVDom$ from $\lambda$, that is, removing them from
the south-eastern border of $\lambda$ while ensuring that the shape
obtained after each step has non-increasing row lengths. After peeling
the maximum number of dominos this way, we reach some (possibly empty)
staircase $\mu$, which is easily seen to be unique. The \emph{domino
number} $d(\lambda)$ is this maximum number of removable dominos,
and $\mu$ is the \emph{staircase of} $\lambda$; we write $w(\lambda)$
for its width.

Now consider peeling \emph{two} dominos from a partition $\lambda$.
Some of the shapes that can arise this way are shown above in orange;
we call them ``non-vanishing tetrominos'' for reasons that will
become evident in the proof. For the four corner-connected domino
pairs, we adopt the convention that their two dominos must be peeled
successively from disjoint rows and columns of $\lambda$. The\emph{
non-vanishing tetromino number} $s(\lambda)$ then is the maximum
number of non-vanishing tetrominos that can be peeled from $\lambda$.
Note that this number is $0$ for the determinant-inducing partition
$(1,\ldots,1)$ and $\lfloor n/4\rfloor$ for the permanent-inducing
partition $(n)$.

In Section~\ref{sec:Staircases-versus-skew}, we establish a ``win-win
situation'' for these resources: For any partition $\lambda$, at
least one of $w(\lambda)\in\Omega(\sqrt{b(\lambda)})$ or $s(\lambda)\in\Omega(b(\lambda))$
must hold.

\subsubsection*{Exploiting resources (Sections~\ref{sec: staircase} and \ref{sec: tetrominos})}

Next, we outline how to exploit staircases and non-vanishing tetrominos
in a partition $\lambda$ for reductions from counting $k$-matchings
to evaluating $\imm_{\lambda}$. Throughout this paper, the immanant
of a directed graph $G$ refers to the immanant of its adjacency matrix
$A$. This way, we can view immanants as character-weighted sums over
the cycle covers of digraphs.

Given an $n$-vertex graph $H$ and $k\in\mathbb{N}$, we construct
a digraph $G$ such that $\imm_{\lambda}(G)$ counts the $k$-matchings
in $H$ up to a constant factor $c_{\lambda,k}$ that can be computed
in polynomial time. In a second step, we show that the constant factor
$c_{\lambda,k}$ is non-zero if $\lambda$ supplies enough resources.
Both the construction of $G$ and the handling of $c_{\lambda,k}$
differ for staircases and tetrominos, as outlined below.

\paragraph{Non-vanishing tetrominos (Section~\ref{sec: tetrominos})}

Each non-vanishing tetromino peeled from $\lambda$ enables a particular
\emph{edge gadget}: To count the $k$-matchings in a graph $H$, we
replace each edge $uv\in E(H)$ by the gadget shown below; the weight
$w$ of $uv$ appears on two edges of the gadget. Together with additional
constructions detailed in Section~\ref{sec: tetrominos}, this results
in a directed graph $G$. 

\begin{figure}[h]
\centering \def\svgwidth{7cm} 
\begingroup%
  \makeatletter%
  \providecommand\color[2][]{%
    \errmessage{(Inkscape) Color is used for the text in Inkscape, but the package 'color.sty' is not loaded}%
    \renewcommand\color[2][]{}%
  }%
  \providecommand\transparent[1]{%
    \errmessage{(Inkscape) Transparency is used (non-zero) for the text in Inkscape, but the package 'transparent.sty' is not loaded}%
    \renewcommand\transparent[1]{}%
  }%
  \providecommand\rotatebox[2]{#2}%
  \newcommand*\fsize{\dimexpr\f@size pt\relax}%
  \newcommand*\lineheight[1]{\fontsize{\fsize}{#1\fsize}\selectfont}%
  \ifx\svgwidth\undefined%
    \setlength{\unitlength}{100.69920217bp}%
    \ifx\svgscale\undefined%
      \relax%
    \else%
      \setlength{\unitlength}{\unitlength * \real{\svgscale}}%
    \fi%
  \else%
    \setlength{\unitlength}{\svgwidth}%
  \fi%
  \global\let\svgwidth\undefined%
  \global\let\svgscale\undefined%
  \makeatother%
  \begin{picture}(1,0.35686507)%
    \lineheight{1}%
    \setlength\tabcolsep{0pt}%
    \put(0,0){\includegraphics[width=\unitlength,page=1]{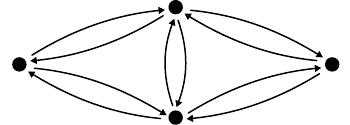}}%
    \put(0.54254466,0.16803972){\color[rgb]{0,0,0}\makebox(0,0)[lt]{\lineheight{1.25}\smash{\begin{tabular}[t]{l}$-1$\end{tabular}}}}%
    \put(0.0310358,0.23775458){\color[rgb]{0,0,0}\makebox(0,0)[lt]{\lineheight{1.25}\smash{\begin{tabular}[t]{l}$u$\end{tabular}}}}%
    \put(0.94132248,0.23775458){\color[rgb]{0,0,0}\makebox(0,0)[lt]{\lineheight{1.25}\smash{\begin{tabular}[t]{l}$v$\end{tabular}}}}%
    \put(0.20687899,0.30294607){\color[rgb]{0,0,0}\makebox(0,0)[lt]{\lineheight{1.25}\smash{\begin{tabular}[t]{l}$w$\end{tabular}}}}%
    \put(0.29574506,0.13272596){\color[rgb]{0,0,0}\makebox(0,0)[lt]{\lineheight{1.25}\smash{\begin{tabular}[t]{l}$w$\end{tabular}}}}%
  \end{picture}%
\endgroup%
\label{fig: eq-gadget}
\end{figure}
The edge gadget effectively constrains the set of cycle covers counted
by the immanant, as undesired cycle covers cancel out in pairs of
opposite signs. In the remaining cycle covers of $G$, each gadget
is either in the \emph{passive state} (shown below in cyan) or in
one of four \emph{active states }(two are shown below in green, two
more are symmetric versions thereof). 

\begin{figure}[h]
\centering \def\svgwidth{12cm} 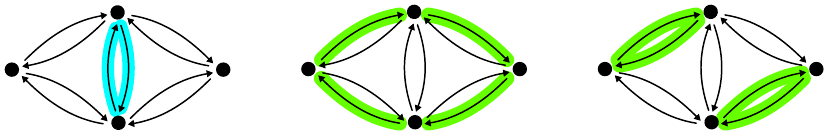
\end{figure}

This allows us to simulate matchings $M$ in $H$ via cycle covers
in $G$: We interpret active gadgets as matching edges $e\in M$ and
passive gadgets as edges $e\in E(H)\setminus M$. Intuitively speaking,
each active gadget ``uses up'' one non-vanishing tetromino of $\lambda$,
while passive gadgets only require a domino. Overall, if we can peel
$O(k)$ non-vanishing tetrominos and some number of dominos from $\lambda$,
then $\imm_{\lambda}$ can be used to count $k$-matchings in $H$.

\paragraph{Large staircase (Section~\ref{sec: staircase})}

If the staircase $\mu$ of $\lambda$ contains $\Omega(k)$ boxes,
then properties of the staircase character $\chi_{\mu}$ enable an
ad-hoc reduction from counting $k$-matchings in bipartite graphs
to the $\lambda$-immanant. More specifically, we observe and use
that cycle covers with even cycles vanish in staircase characters
$\chi_{\mu}$. After discarding irrelevant dominos from $\lambda$,
we can then use this fact together with a particular graph construction
to compute a sum over cycle covers with one \emph{particular fixed
}cycle length format by reduction to $\imm_{\lambda}$. This way of
exploiting staircase characters to avoid even-length cycles may also
be useful for other algorithmic applications.

\subsubsection*{Wrap-up (Section~\ref{sec:Completing-the-proof})}

For a streamlined presentation, the two reductions above are encapsulated
as mere mathematical formulas relating the number of $k$-matchings
in a graph $H$ with the immanant of a digraph $G$ constructed from
$H$. In Section~\ref{sec:Completing-the-proof}, we add the necessary
``wrapper code'' to obtain the (parameterized and polynomial-time,
algebraic and computational) reductions required to prove Theorems~\ref{thm: main-poly}
and \ref{thm: main-param}.

\subsection{Proof highlights}

Most of our arguments rely on making non-vanishing tetrominos and
staircases come together in\emph{ }just the right way. Additionally,
the following can be pointed out:
\begin{itemize}
\item In the tetromino-based reduction, the particular form of active and
passive states in edge gadgets ensures that we only need to understand
character values $\chi_{\lambda}(\rho)$ on cycle formats $\rho$
with cycle lengths $1$, $2$, and $4$. This allows us to sidestep
more involved representation-theoretic arguments that occur in related
works. It should be noted that \emph{equality} and \emph{exclusive-or}
gadgets that enforce consistency properties of cycle covers are common
in algebraic and counting complexity, dating back to Valiant~\cite{DBLP:conf/stoc/Valiant79a,DBLP:journals/tcs/Valiant79}.
The particular idea of repurposing an equality gadget into an \emph{edge}
gadget was also already used in the author's very first paper with
Bläser~\cite{DBLP:conf/mfcs/BlaserC11}.
\item Parameterized complexity assumptions allow us to handle cases that
cannot be addressed in classical frameworks, such as the family $\immProb(\Lambda_{\mathrm{log}})$
discussed before. By basing hardness on the assumptions $\FPT\neq\sharpWone$
and $\VFPT\neq\VW$, we can still argue about such families.
\item Two inconspicuous but crucial proof steps (Lemmas~\ref{lem: domino-tiling-sign}
and Fact~\ref{fact: nonvan-tetro}) rely on cute arguments involving
``dominos on chessboards'' that one would rather expect in the context
of recreational mathematics. This can again be credited to the edge
gadget, as such arguments would likely fail for cycle lengths other
than $2$ and $4$.
\end{itemize}

\section{Preliminaries}

We start with basic definitions for graphs and partitions in Section~\ref{subsec:Basic-notions}.
In Sections~\ref{subsec: prelim-complexity} and \ref{subsec:Counting-matchings},
we present the complexity-theoretic preliminaries used in this paper.
A bare-bones introduction to the relevant character theory of symmetric
groups can be found in Section~\ref{sec: rep-th}.

\subsection{Basic notions\label{subsec:Basic-notions}}

\paragraph{Graphs.}

We consider undirected graphs (when counting matchings) and directed
graphs (when evaluating immanants). Both graph types may feature (indeterminate)
edge-weights, and directed graphs may feature self-loops. For a graph
$G$ with adjacency matrix $A$, we write $\imm(G)$ instead of $\imm(A)$
and view the immanant as a sum over cycle covers: A \textbf{partial
cycle cover}\emph{ }in a directed graph $G$ is an edge-set $C\subseteq E(G)$
such that each vertex has one incoming and one outgoing edge in $C$.
It is a \textbf{cycle cover} if all of $V(G)$ is covered. Directed
$2$-cycles will also be called \textbf{digons}\emph{.}

\paragraph{Partitions.}

A \textbf{partition} of a positive integer $n\in\mathbb{N}$ is a
multi-set $\lambda$ of positive integers such that $\sum_{i\in\lambda}i=n$.
Its elements are called \textbf{parts}, and we write $\lambda\vdash n$
to indicate that $\lambda$ is a partition of $n$. Several notations
will be used for partitions:
\begin{center}
\begin{tabular}{ccccc}
sorted tuple &  & compact notation &  & Young diagram\tabularnewline
\vspace{-0.3cm}
 &  &  &  & \tabularnewline
$(4,4,3,3,3,2)$ &  & $(4^{2},3^{3},2^{1})$ &  & \ydiagram{4,4,3,3,3,2}\tabularnewline
\vspace{-0.2cm}
 &  &  &  & \tabularnewline
\end{tabular}
\par\end{center}

As depicted above, the \textbf{Young diagram} of a partition $\lambda=(\lambda_{1},\ldots,\lambda_{s})$
is a left-aligned shape consisting of $\lambda_{i}$ boxes in row
$i$. We define the \textbf{gap} $\delta_{i}$ of row $i$ as $\delta_{i}:=\lambda_{i}-\lambda_{i+1}$,
where we consider $\lambda_{s+1}:=0$.

Given partitions $\lambda\vdash n$ and $\lambda'\vdash n'$, we sometimes
abuse notation and write $(\lambda,\lambda')$ for the partition $\lambda\cup\lambda'$
of $n+n'$. Sometimes we also specify an ordering for the elements
in a partition: An \textbf{ordered partition} of $n$ (also called
\textbf{composition}) is a tuple of positive integers that sum to
$n$. We will state explicitly when partitions are considered to be
ordered.

Let us stress that any permutation $\pi$ of $n$ elements (and any
cycle cover $C$ of an $n$-vertex graph) naturally induces a partition
$\lambda\vdash n$ through its cycle lengths, the \textbf{cycle format}
$\rho(\pi)$ of $\pi$. For example, the identity permutation has
cycle format $(1,\ldots,1)$ and cyclic permutations have cycle format
$(n)$. We also say that a cycle cover is a $\rho$-cycle cover if
its format is $\rho$.

\subsection{Complexity-theoretic notions\label{subsec: prelim-complexity}}

We recall basic notions from complexity theory. For a more comprehensive
overview, please consider \cite{DBLP:journals/tcs/Valiant79,DBLP:journals/siamcomp/FlumG04,DBLP:phd/dnb/Curticapean15}
for (parameterized) counting complexity and \cite{DBLP:conf/stoc/Valiant79a,DBLP:conf/iwpec/BlaserE19,DBLP:books/daglib/0090316,DBLP:books/daglib/0025071}
for (parameterized) algebraic complexity.

\paragraph{Counting complexity.}

We view functions $f:\{0,1\}^{*}\to\mathbb{Q}$ as \textbf{counting
problems}. For example, the problem $\sharpSAT$ maps (binary encodings
of) Boolean formulas $\varphi$ to the number of satisfying assignments
in $\varphi$. Properly encoded, the permanent of rational-valued
matrices is a counting problem. A counting problem is contained in
$\P$ if it can be solved in polynomial time.

A \textbf{polynomial-time Turing reduction }from a counting problem
$\mathrm{\#A}$ to another counting problem $\mathrm{\#B}$ is a polynomial-time
algorithm that solves $\mathrm{\#A}$ with an oracle for $\mathrm{\#B}$.
We say that $\mathrm{\#B}$ is $\sharpP$\textbf{-hard} (under Turing
reductions) if $\sharpSAT$ admits a polynomial-time Turing reduction
to $\mathrm{\#B}$. While more stringent reduction notions exist,
they are not relevant for the purposes of this paper. Assuming $\P\neq\sharpP$,
no $\sharpP$-hard problem can be solved in polynomial time.

A \textbf{parameterized counting problem} features inputs $(x,k)$
for $x\in\{0,1\}^{*}$ and $k\in\mathbb{N}$. It is \textbf{fixed-parameter
tractable} if it can be solved in time $f(k)\cdot|x|^{O(1)}$ for
some computable function $f$, and we write $\FPT$ for the class
of such problems. A \textbf{parameterized Turing reduction} from a
parameterized counting problem $\mathrm{\#A}$ to another parameterized
problem $\mathrm{\#B}$ is an algorithm that solves any instance $(x,k)$
for $\mathrm{\#A}$ in time $f(k)\cdot|x|^{O(1)}$ with an oracle
for $\mathrm{\#B}$ that is only called on instances $(y,k')$ with
$k'\leq g(k)$. Here, both $f$ and $g$ are computable functions.
We say that $\mathrm{\#B}$ is \emph{$\sharpWone$}\textbf{-hard}
if the problem of counting $k$-cliques in a graph admits a parameterized
Turing reduction to $\mathrm{\#B}$. Assuming $\FPT\neq\sharpWone$,
no $\sharpWone$-hard problem is fixed-parameter tractable.

The (counting version of the) \textbf{exponential-time hypothesis}
$\sharpETH$ postulates that no $\exp(o(n))$ time algorithm solves
$\sharpSAT$ on $n$-variable formulas. If a problem $\mathrm{\#A}$
cannot be solved in $\exp(o(n))$ time, and it admits a polynomial-time
Turing reduction to a problem $\mathrm{\#B}$ such that every invoked
oracle query has size $O(n^{c})$ for $c\geq0$, then $\mathrm{\#B}$
cannot be solved in $\exp(o(n^{1/c}))$ time. Likewise, if a parameterized
problem $\mathrm{\#A}$ cannot be solved in $f(k)\cdot n^{o(k)}$
time and it admits a parameterized Turing reduction to a problem $\mathrm{\#B}$
such that every invoked query $(y,k')$ satisfies $k'\leq O(k)$,
then $\mathrm{\#B}$ also cannot be solved in $f(k)\cdot n^{o(k)}$
time. The hypothesis $\sharpETH$ rules out an $f(k)\cdot n^{o(k)}$
time algorithm for counting $k$-cliques and thus implies $\FPT\neq\sharpWone$.

\paragraph{Algebraic complexity.}

In the algebraic setting, \textbf{p-families} play the role of counting
problems: A sequence of multivariate polynomials $f=(f_{1},f_{2},\ldots)$
over some field is a p-family\emph{ }if, for all $n\in\mathbb{N}$,
the degree and the number of variables of $f_{n}$ are bounded by
$n^{O(1)}$. For example, the sequences of determinants and permanents
of $n\times n$ matrices with indeterminates are p-families. A p-family
is contained in $\VP$ if it admits a polynomial-size arithmetic circuit.

A p-family $f$ admits a \textbf{c-reduction} to another p-family
$g$ if there is an arithmetic circuit of polynomial size that computes
each $f_{n}$ with oracle gates for $g_{1},\ldots g_{n^{O(1)}}$.
A p-family $g$ is \emph{$\VNP$}\textbf{-hard}\emph{ }if the permanent
family admits a c-reduction to $g$. Assuming $\VP\neq\VNP$, no $\VNP$-hard
family is contained in $\VP$.

A \textbf{parameterized p-family} is a family $f=(f_{n,k})_{n,k\in\mathbb{N}}$
with two indices such that $f_{n,k}$ uses $n^{O(1)}$ variables and
has degree $(n+k)^{O(1)}$. A family $f$ is contained in $\VFPT$
if it admits an arithmetic circuit of size $h(k)\cdot n^{O(1)}$ for
some function $h$. An example for a parameterized p-family is given
by the \textbf{partial permanents} $\per_{n,k}=\sum_{\pi\in S(n,k)}\prod_{i=1}^{n}x_{i,\pi(i)}$
for $n,k\in\mathbb{N}$, where $S(n,k)$ is the set of all permutations
with $n-k$ fixed points. A parameterized family $f=(f_{n,k})_{n,k\in\mathbb{N}}$
admits a \textbf{parameterized c-reduction} to another family $g=(g_{n,k})_{n,k\in\mathbb{N}}$
if there is an arithmetic circuit of size $h(k)\cdot n^{O(1)}$ that
computes $f_{n,k}$ with oracle gates for polynomials $g_{n',k'}$
satisfying $n'\leq h(k)\cdot n^{O(1)}$ and $k'\leq h(k)$ for some
function $h$. We say that $g$ is \emph{$\VW$}\textbf{-hard} if
the partial permanents admit a parameterized c-reduction to $g$.
Assuming $\VFPT\neq\VW$, no $\VW$-hard problem is contained in $\VFPT$.

\subsection{Counting matchings\label{subsec:Counting-matchings}}

A \textbf{matching} in an undirected graph $H$ is a set $M\subseteq E(H)$
of pairwise disjoint edges. We write $\mathcal{M}_{k}(H)$ for the
set of matchings with $k$ edges in $H$. A matching $M$ is a \textbf{perfect
matching }if every vertex $v\in V(H)$ is contained in some edge of
$M$. Given an $n$-vertex graph $H$ with edge-weights $w:E(H)\to\mathbb{\mathbb{Q}}$,
we define 
\begin{align*}
\match(H,k) & =\sum_{M\in\mathcal{M}_{k}(H)\,}\prod_{e\in M}w(e),\\
\PerfMatch(H) & =\match(H,|V(H)|/2).
\end{align*}
Note that $\PerfMatch(H)$ is only defined for graphs $H$ with an
even number of vertices. We prove hardness of immanants by reduction
from (restrictions of) the problem of evaluating $\match$ on bipartite
graphs.
\begin{defn}
The counting problem $\PerfMatch$ asks to compute $\PerfMatch(H)$
for \emph{bipartite} graphs $H$ with edge-weights $w:E(H)\to\mathbb{Q}$. 

For any fixed polynomial-time computable function $g:\mathbb{N}\to\mathbb{N}$,
we define the counting problem $\match^{(g)}$: Given a pair $(H,k)$
consisting of an undirected bipartite graph $H$ with edge-weights
$w:E(H)\to\mathbb{Q}$ and a number $k\leq g(|V(H)|)$, compute $\match(H,k)$.
\end{defn}

On the complete bipartite graphs $K_{n,n}$ with indeterminate edge-weights,
$\PerfMatch$ induces the p-family of permanent polynomials, and $\match^{(g)}$
likewise induces a restriction of the partial permanent family with
$\match_{n,k}^{(g)}=0$ for $k>g(n)$. Abusing notation, we call these
p-families $\PerfMatch$ and $\match^{(g)}$ as well. Note that no
graphs are given as inputs to these families; the numbers of $k$-matchings
for given bipartite graphs $H$ can be obtained by evaluating the
polynomials at points whose non-zero coordinates encode the edges
of $H$.

The hardness results for $\PerfMatch$ and $\match^{(g)}$ required
in the remainder of the paper are either known in the literature or
can be derived easily. We collect the relevant results below.
\begin{thm}
\label{thm: match-bit-hardness}The following holds:
\begin{enumerate}
\item The problem $\PerfMatch$ is $\VNP$-complete and $\sharpP$-hard
and admits no $2^{o(n)}$ time algorithm under $\sharpETH$, even
on bipartite graphs of maximum degree $3$.
\item For any unbounded and polynomial-time computable function $g$, the
problem $\match^{(g)}$ is $\sharpWone$-hard and $\VW$-complete.
\end{enumerate}
\end{thm}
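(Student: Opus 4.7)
For Part (1), the plan is to stitch together Valiant's classical reductions \cite{DBLP:journals/tcs/Valiant79,DBLP:conf/stoc/Valiant79a} with a bounded-degree gadget. The permanent of a $0/1$ matrix equals $\PerfMatch$ of its associated bipartite graph, so Valiant's $\sharpSAT\to\per$ construction immediately produces, from an $n$-variable 3-CNF with $m$ clauses, a bipartite graph with $O(n+m)$ vertices whose $\PerfMatch$ encodes the number of satisfying assignments. Combined with the sparsification lemma, this already yields $\sharpP$-hardness and a $2^{o(n)}$-time lower bound under $\sharpETH$ on bipartite graphs without the degree restriction. To enforce maximum degree $3$, I would apply a standard bipartite degree-reduction gadget: replace every vertex $v$ of degree $d>3$ by a chain of small bipartite gadgets whose internal perfect matchings biject with the choices of which single edge incident to $v$ is selected. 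This preserves bipartiteness and multiplies the vertex count by a constant, so the $\sharpETH$ lower bound is preserved. The same chain lifts to the polynomial level with edge weights kept as indeterminates: Valiant's $\VNP$-completeness proof for the permanent family together with the gadget gives $\VNP$-completeness of the $\PerfMatch$ p-family on bipartite graphs of maximum degree $3$.

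For Part (2), the plan is to invoke existing hardness results and address only the constraint $k \le g(|V(H)|)$. The $\sharpWone$-hardness of counting $k$-matchings in bipartite graphs, with the parameter $k$, is established in \cite{DBLP:conf/iwpec/BlaserC12,DBLP:conf/icalp/Curticapean13,DBLP:conf/focs/CurticapeanM14,DBLP:conf/stoc/CurticapeanDM17} via reductions from counting $k$-cliques. The $\VW$-completeness is due to Bl\"aser and Engels \cite{DBLP:conf/iwpec/BlaserE19}, who show that the partial permanent family c-reduces to counting $k$-matchings. To incorporate the bound $g$, use padding: given a bipartite instance $(H,k)$, compute the least $N$ with $g(N)\ge k$ (which exists and is computable from $k$ since $g$ is unbounded and polynomial-time computable), and form $H'$ from $H$ by adding $\max(0,N-|V(H)|)$ isolated vertices. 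Then $\match(H',k)=\match(H,k)$ and $k\le g(|V(H')|)$, so a single oracle call to $\match^{(g)}$ on $(H',k)$ suffices. The argument is a parameterized Turing reduction (respectively a parameterized c-reduction in the algebraic setting), so $\sharpWone$-hardness and $\VW$-completeness transfer to $\match^{(g)}$.

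The main obstacle is purely bookkeeping: one must check that Valiant's permanent reduction and the degree-reduction gadget both stay linear in size (for the $\sharpETH$ bound in Part (1)) and that both lift to polynomial families over indeterminate edge weights (for $\VNP$-completeness and $\VW$-completeness). All ingredients are present in the cited literature; no conceptually new argument is required. The role of the theorem is simply to collect the precise hardness statements for $\PerfMatch$ and $\match^{(g)}$ in the exact form demanded by the reductions carried out in the subsequent sections.
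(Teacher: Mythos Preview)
Your proposal is correct. For Part~(1), the paper simply cites the relevant references without reproving anything; your sketch of Valiant's reduction together with a degree-reduction gadget is precisely the content behind those citations, so the two agree in substance.

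For Part~(2), your route genuinely differs from the paper's. The paper branches on whether $k\le g(n)$: if so, it calls the $\match^{(g)}$ oracle directly; if not, it argues that $n$ is bounded by a function of $k$ alone and brute-forces the instance in time depending only on $k$ (hard-coding this choice into the circuit family in the algebraic setting). You instead \emph{pad} $H$ with isolated vertices until the size constraint $k\le g(|V(H')|)$ is met and make a single oracle call. Your version is arguably cleaner---no case distinction, no separate brute-force branch---while the paper's version avoids inflating the instance size when $g$ grows very slowly. One small caveat: as written, your padding guarantees $k\le g(|V(H')|)$ only when $|V(H)|\le N$; if $|V(H)|>N$ and $g$ is not monotone, the claimed inequality can fail. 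The paper's phrase ``$n<g^{-1}(k)$'' carries the same implicit monotonicity assumption, so this is a shared technicality rather than a flaw particular to your argument, and it is easily repaired by taking $N$ to be the least integer $\ge|V(H)|$ with $g(N)\ge k$.
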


\begin{proof}
See \cite{DBLP:journals/tcs/Valiant79,DBLP:conf/stoc/Valiant79a}
for the $\sharpP$-hardness and $\VNP$-completeness of $\PerfMatch$
and \cite{DBLP:conf/icalp/Curticapean15} for the lower bound under
$\sharpETH$. The $\sharpWone$-hardness of the cardinality-unrestricted
problem $\match$ is shown in \cite{DBLP:conf/focs/CurticapeanM14},
and and $\VW$-completeness is shown in \cite[Lemma 8.7]{DBLP:conf/iwpec/BlaserE19}.

For any polynomial-time computable function $g$, we give a parameterized
reduction from $\match$ to $\match^{(g)}$: Any instance $(H,k)$
with an $n$-vertex graph and $k\leq g(n)$ can be solved directly
with a call to $\match^{(g)}$. On the other hand, if $k>g(n)$, then
we have $n<g^{-1}(k)$, so we can count $k$-matchings in $H$ by
brute-force in $g'(k)$ time for some computable function $g'$. This
satisfies the requirements of a parameterized reduction. In the algebraic
setting, we need not branch on $g(n)$ within a circuit, but instead
hard-code into the circuit family realizing $\match$ whether to (i)
perform a brute-force sum over matchings, or (ii) use the oracle gate
for $\match^{(g)}$.
\end{proof}

\section{Characters of the symmetric group\label{sec: rep-th}}

We give a minimal introduction to representations and characters of
the symmetric group; this material is covered thoroughly in classical
textbooks~\cite{fulton1991representation,macdonald1998symmetric,sagan2013symmetric,lorenz2018tour}.
For our purposes, a \textbf{representation} of $S_{n}$ is a homomorphism
$f$ from $S_{n}$ to the group $\mathrm{GL}_{t}(\mathbb{C})$ of
invertible $t\times t$ matrices, for some dimension $t\in\mathbb{N}$.
Examples include the trivial representation that maps all of $S_{n}$
to $1$, the sign representation that maps permutations to their sign,
and the permutation matrix representation that maps each permutation
to its $n\times n$ permutation matrix.

A representation $f:S_{n}\to\mathrm{GL}_{t}(\mathbb{C})$ is \textbf{irreducible
}if no proper subspace of $\mathbb{C}^{t}$ is invariant under all
the transformations $f(\pi)$ for $\pi\in S_{n}$. Among the examples
given before, this holds trivially for the trivial and sign representations.
The permutation matrix representation however is not irreducible for
$n>1$, as every permutation matrix maps the $1$-dimensional subspace
of $\mathbb{C}^{n}$ spanned by $(1,\ldots,1)$ to itself.

\subsection{Characters}

Characters condense essential information about representations $f:S_{n}\to\mathrm{GL}_{t}(\mathbb{C})$
into scalar-valued functions $\chi_{f}:S_{n}\to\mathbb{C}$. For us,
they play the role of ``generalized signs'' in the sum-product definition
of immanants.
\begin{defn}
The \textbf{character} of a representation $f$ is the function $\chi_{f}:S_{n}\to\mathbb{C}$
that maps $\pi\in S_{n}$ to the trace of the matrix $f(\pi)$.
\end{defn}

The trivial and sign representations coincide trivially with their
characters. The character of the permutation matrix representation
counts the fixed points of a permutation.

Characters of representations are \textbf{class functions}, which
are functions $f:S_{n}\to\mathbb{C}$ that depend only on the cycle
format of the input. These functions form a vector space by point-wise
linear combinations, and a particularly useful basis for this space
is given by the \textbf{irreducible characters}, which are the characters
of irreducible representations. The set of irreducible characters
corresponds bijectively to the partitions of $n$.

For $S_{2}$, the only irreducible characters are the trivial character
$\chi_{(2)}$ and sign character $\chi_{(1,1)}$. There are five irreducible
characters for $S_{4}$; their values $\chi_{\lambda}(\rho)$ are
shown below as a character table.
\begin{center}
\begin{tabular}{cc|c|c|c|c|c|}
 & \multicolumn{1}{c}{} & \multicolumn{5}{c}{$\rho$}\tabularnewline
 & \multicolumn{1}{c}{} & \multicolumn{5}{c}{\vspace{-5bp}
}\tabularnewline
 &  & $(1^{4})$ & $(2^{1},1^{2})$ & $(2^{2})$ & $(3^{1},1^{1})$ & $(4^{1})$\tabularnewline
\cline{2-7} \cline{3-7} \cline{4-7} \cline{5-7} \cline{6-7} \cline{7-7} 
\multirow{5}{*}{$\lambda\quad$} & $(4^{1})$ & $1$ & $1$ & $1$ & $1$ & $1$\tabularnewline
\cline{2-7} \cline{3-7} \cline{4-7} \cline{5-7} \cline{6-7} \cline{7-7} 
 & $(3^{1},1^{1})$ & $3$ & $1$ & $-1$ & $0$ & $-1$\tabularnewline
\cline{2-7} \cline{3-7} \cline{4-7} \cline{5-7} \cline{6-7} \cline{7-7} 
 & $(2^{2})$ & $2$ & $0$ & $2$ & $-1$ & $0$\tabularnewline
\cline{2-7} \cline{3-7} \cline{4-7} \cline{5-7} \cline{6-7} \cline{7-7} 
 & $(2^{1},1^{2})$ & $3$ & $-1$ & $-1$ & $0$ & $1$\tabularnewline
\cline{2-7} \cline{3-7} \cline{4-7} \cline{5-7} \cline{6-7} \cline{7-7} 
 & $(1^{4})$ & $1$ & $-1$ & $1$ & $1$ & $-1$\tabularnewline
\cline{2-7} \cline{3-7} \cline{4-7} \cline{5-7} \cline{6-7} \cline{7-7} 
 & \multicolumn{1}{c}{} & \multicolumn{1}{c}{} & \multicolumn{1}{c}{} & \multicolumn{1}{c}{} & \multicolumn{1}{c}{} & \multicolumn{1}{c}{}\tabularnewline
\end{tabular}
\par\end{center}

In the next subsections, we describe the Murnaghan-Nakayama rule,
a combinatorial method for calculating character values $\chi_{\lambda}(\rho)$,
together with a simple extension thereof that applies to particular
linear combinations of character values. To state these rules, we
need to introduce several types of \emph{tableaux}.

\subsection{Skew shapes and their tableaux}

Recall that partitions $\lambda$ can be described by Young diagrams,
such as

\begin{table}[h]
\centering{}%
\begin{tabular}{c}
\ydiagram{4,4,3,3,3,2}\tabularnewline
\tabularnewline
$(4^{2},3^{3},2^{1})$.\tabularnewline
\end{tabular}
\end{table}

Given such a diagram, a tableau is obtained by writing numbers (or
other objects) into the boxes, subject to some specified rules. The
representation theory of $S_{n}$ abounds in different types of tableaux---we
introduce yet another such type, the \emph{skew shape tableaux.}
\begin{defn}
Let $\lambda,\mu$ be partitions such that $\mu_{i}\leq\lambda_{i}$
for all rows $i$ of $\lambda$. We say that $\mu$ is \emph{contained}
in $\lambda$ and define the \textbf{skew shape} $\lambda/\mu$ by
removing the diagram of $\mu$ from $\lambda$.
\end{defn}

Consider the two examples below. The right example shows that skew
shapes need not be connected; we call a skew shape \textbf{connected}
if each pair of boxes can be reached by a path in the interior of
the shape.

\begin{table}[h]
\centering{}%
\begin{tabular}{ccc}
\ydiagram{2+2,2+2,1+2,1+2,3,2} & $\qquad$ & \ydiagram{3+1,3+1,3+0,1+2,1+2,2}\tabularnewline
 &  & \tabularnewline
$(4^{2},3^{3},2^{1})/(2^{2},1^{2})$ &  & $(4^{2},3^{3},2^{1})/(3^{3},1^{2})$\tabularnewline
\end{tabular}
\end{table}

Table~\ref{tab: conn-shapes} lists the connected skew shapes on
$4$ boxes. Any general skew shape on $b$ boxes is obtained by choosing
connected skew shapes with a total of $b$ boxes and arranging them
in disjoint rows and columns.
\begin{table}
\begin{centering}
\begin{tabular}{ccccccccc}
\ydiagram{1,1,1,1} & \ydiagram{2,1,1} & \ydiagram{1+1,1+1,2} & \ydiagram{2,2} & \ydiagram{1+1,2,1} & \ydiagram{3,1} & \ydiagram{2+1,3} & \ydiagram{1+2,2} & \ydiagram{4}\tabularnewline
\end{tabular}
\par\end{centering}
\caption{\label{tab: conn-shapes}All connected skew shapes on $4$ boxes.}
\end{table}

We will often peel skew shapes $\gamma$ from other skew shapes $\lambda/\mu$.
Our definition of this process allows for peeling different components
of $\gamma$ from different places.
\begin{defn}
A skew shape $\gamma$ can be \textbf{peeled} from $\lambda/\mu$
if there is a partition $\lambda'$ contained between $\mu$ and $\lambda$
such that $\lambda/\lambda'$ equals $\gamma$ after deleting empty
rows and columns from both $\lambda/\lambda'$ and $\gamma$.
\end{defn}

A skew shape \emph{tableau} is obtained by successively peeling skew
shapes from $\lambda$.
\begin{defn}
Let $\tilde{\Gamma}=(\Gamma_{1},\ldots,\Gamma_{s})$ be such that
$\Gamma_{i}$ for $i\in[s]$ is a set of skew shapes on the same number
$n_{i}$ of boxes. Let $\lambda$ be a partition of $n=\sum_{i}n_{i}$.
A \textbf{skew shape tableau} of $\lambda$ with format $\tilde{\Gamma}$
is obtained by successively peeling skew shapes $\gamma_{1},\ldots,\gamma_{s}$
with $\gamma_{i}\in\Gamma_{i}$ from $\lambda$ and labeling $\gamma_{i}$
with $i$ in $\lambda$. We write $\mathcal{S}(\lambda,\tilde{\Gamma})$
for the set of such tableaux.
\end{defn}

Three skew shape tableaux with different formats are shown below.
The boxes are \emph{colored} rather than numbered. The third tableau
is even a \emph{border strip} tableau, as defined in the next subsection.
\begin{table}[h]
\centering{}%
\begin{tabular}{ccccc}
\SSTabOne & $\ $ & \SSTabTwo & $\ $ & \SSTabThree\tabularnewline
\end{tabular}
\end{table}

\subsection{\label{subsec: The-Murnaghan-Nakayama-rule}An extension of the Murnaghan-Nakayama
rule}

The Murnaghan-Nakayama rule expresses the character value $\chi_{\lambda}(\rho)$
for partitions $\lambda$ and $\rho$ as a signed sum over particular
skew shape tableaux of $\lambda$. The sign of a tableau is determined
by the parity of odd-height shapes. In our later arguments, $\rho$
will always be the format of a cycle cover. 
\begin{defn}
A \textbf{border strip} is a connected skew shape not containing any
$2\times2$ square. A \textbf{border strip tableau} of $\lambda\vdash n$
is a skew shape tableau consisting only of border strips. Given a
partition $\lambda$ and an ordered partition $\kappa$, we write
$\mathcal{B}(\lambda,\kappa)$ for the set of border strip tableaux
of $\lambda$ in which the $i$-th shape has $\kappa_{i}$ boxes.

The \textbf{height} $\height(\gamma)$ of a border strip $\gamma$
is the number of occupied rows in $\gamma$ minus $1$. Given a skew
shape tableau $T$ with skew shapes $\gamma_{1},\ldots,\gamma_{s}$,
the \textbf{height sign} $\height(T)$ is defined as $\height(T)=\prod_{i=1}^{s}(-1)^{\height(\gamma_{i})}.$
\end{defn}

For the border strip tableau in the above example, the heights are
$5$ (green), $1$ (yellow), $2$ (red), and $2$ (blue), resulting
in an overall height sign of $+1$. Having established these preliminaries,
we can state the Murnaghan-Nakayama rule. For proofs, we refer to
textbooks~\cite{fulton1991representation,lorenz2018tour}.
\begin{thm}[Murnaghan-Nakayama rule]
\label{thm: mn-rule}For partitions $\lambda$ and $\rho$ and any
ordering $\kappa$ of $\rho$, we have 
\[
\chi_{\lambda}(\rho)=\sum_{T\in\mathcal{B}(\lambda,\kappa)}\height(T).
\]
\end{thm}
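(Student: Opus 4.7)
The plan is to recast the statement via the Frobenius characteristic map, which identifies the space of class functions on $S_n$ with the degree-$n$ component of the ring of symmetric functions. Under this identification, the irreducible character $\chi_\lambda$ corresponds to the Schur function $s_\lambda$, and the indicator class function of cycle type $\rho$ corresponds (up to normalization) to the power sum $p_\rho = p_{\kappa_1} p_{\kappa_2}\cdots p_{\kappa_s}$. The content of the Frobenius formula is that
$$\chi_\lambda(\rho) \;=\; [s_\lambda]\, p_{\kappa_1} p_{\kappa_2}\cdots p_{\kappa_s},$$
that is, the coefficient of $s_\lambda$ in the Schur-basis expansion of $p_\rho$. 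Thus the theorem reduces to a purely symmetric-function statement about multiplying power sums in the Schur basis.

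I would then prove the claim by induction on the number $s$ of parts of $\kappa$, powered by a single-step Pieri-type identity: for any partition $\mu$ and any $k \geq 1$,
$$p_k \cdot s_\mu \;=\; \sum_\lambda (-1)^{\height(\lambda/\mu)}\, s_\lambda,$$
where $\lambda$ runs over all partitions such that $\lambda/\mu$ is a border strip with exactly $k$ boxes. Iterating this identity, starting from $s_\emptyset = 1$ and appending $p_{\kappa_1}, p_{\kappa_2}, \ldots, p_{\kappa_s}$ one at a time, unfolds the coefficient $[s_\lambda]\,p_\rho$ as a signed sum over sequences of nested partitions $\emptyset = \mu^{(0)} \subset \mu^{(1)} \subset \cdots \subset \mu^{(s)} = \lambda$ where each $\mu^{(i)}/\mu^{(i-1)}$ is a border strip of size $\kappa_i$. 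Such a sequence is exactly an element $T \in \mathcal{B}(\lambda,\kappa)$, and the accumulated sign $\prod_{i=1}^{s}(-1)^{\height(\gamma_i)}$ is precisely $\height(T)$.

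The main obstacle is therefore the single-step identity for $p_k \cdot s_\mu$. My route would use the Jacobi--Trudi determinantal formula $s_\mu = \det(h_{\mu_i - i + j})$ together with Newton's identities expressing $p_k$ in terms of the complete homogeneous $h_j$'s. Multiplying and expanding the determinant produces a sum of determinants indexed by the row to which one added $k$ extra boxes; the key combinatorial lemma is that a skew shape obtained from a legal row addition either is a border strip (contributing $\pm s_\lambda$ with sign $(-1)^{\height}$) or else contains a $2\times 2$ square, in which case two terms in the expanded determinant cancel in pairs. This cancellation is the delicate step; it is a standard but nontrivial sign-reversing involution that I would verify by tracking how swapping two adjacent ``jumps'' in the column indices of the Jacobi--Trudi matrix swaps two rows of the determinant.

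A final clean-up is the order independence: the theorem's right-hand side is a priori a function of the composition $\kappa$, not the partition $\rho$, yet the left-hand side depends only on $\rho$. Independence is automatic from the algebraic proof, since the power sums $p_{\kappa_i}$ commute in the symmetric function ring, so $p_\rho$ and hence $[s_\lambda]p_\rho$ is invariant under reordering of $\kappa$. Combinatorially, this forces the existence of a height-sign-preserving bijection $\mathcal{B}(\lambda,\kappa) \to \mathcal{B}(\lambda,\kappa')$ for any two orderings $\kappa,\kappa'$ of $\rho$, which I would not need to exhibit explicitly but which falls out of the above.
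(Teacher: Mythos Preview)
The paper does not actually prove this theorem: it states the Murnaghan--Nakayama rule as a classical fact and refers to the textbooks of Fulton--Harris and Lorenz for proofs. So there is no ``paper's own proof'' to compare against.

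Your proposal is one of the standard textbook routes and is essentially correct. The reduction to $\chi_\lambda(\rho) = [s_\lambda]\,p_\rho$ via the Frobenius formula, followed by iterated application of the single-step identity $p_k \cdot s_\mu = \sum_{\lambda/\mu \text{ border strip of size } k} (-1)^{\height(\lambda/\mu)} s_\lambda$, is exactly how this is done in, e.g., Stanley's \emph{Enumerative Combinatorics} Vol.~2 or Macdonald. Your remark that order independence is automatic from commutativity of the $p_{\kappa_i}$ is also the right observation. The only place where your sketch is a bit thin is the derivation of the single-step identity from Jacobi--Trudi: the cancellation you describe works, but the cleanest proofs of this step (e.g., Macdonald I.3, Example~11) usually phrase it via the bialternant formula $s_\mu = a_{\mu+\delta}/a_\delta$ rather than Jacobi--Trudi, since multiplying $a_{\mu+\delta}$ by $p_k = \sum_i x_i^k$ and reordering the resulting exponent vectors into decreasing form directly produces the border-strip sum with the correct signs. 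Either route is fine; just be aware that the Jacobi--Trudi version requires slightly more bookkeeping than you have indicated.
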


\begin{rem}
\label{rem: mn-rule-ordering}When invoking this rule, it makes sense
to choose a useful ordering $\kappa$ of $\rho$. For example, to
show $\chi_{(5,4,3,2,1)}(1^{3},2^{6})=0$, we can
\begin{itemize}
\item sum over a rather large number of border strip tableaux to observe
that their signs cancel, or alternatively
\item reorder $\rho=(1^{3},2^{6})$ to $\kappa=(2^{6},1^{3})$ and realize
directly that no border strip on $2$ boxes can be peeled from the
border of the staircase $(5,4,3,2,1)$.
\end{itemize}
More generally, this observation shows that $\chi_{\mu}(\rho)=0$
whenever $\mu$ is a staircase and $\rho$ is a partition that contains
at least one even part.
\end{rem}

The above remark will be crucial for the staircase-based reduction
in Section~\ref{sec: staircase}. For the tetromino-based reduction
in Section~\ref{sec: tetrominos}, we extend Theorem~\ref{thm: mn-rule}
to character evaluations on products of partition sets.
\begin{defn}
Let $F_{1},\ldots,F_{t}$ be sets of partitions such that each set
$F_{i}$ collects partitions of the same integer $d_{i}$. The\emph{
partition product }$F_{1}\times\ldots\times F_{t}$ is the multi-set
consisting of the $\prod_{i}|F_{i}|$ partitions of $d_{1}+\ldots+d_{t}$
obtained by choosing one partition from each set $F_{i}$ and concatenating
those $t$ partitions.
\end{defn}

We linearly extend class functions $f$ to multi-sets $S$ of partitions
by declaring $f(S)=\sum_{\rho\in S}f(\rho)$ and show how to calculate
$\chi_{\lambda}(F_{1}\times\ldots\times F_{t})$ combinatorially by
an extension of the Murnaghan-Nakayama rule. To this end, we define
admissible skew shapes for each $F_{i}$, each with a particular coefficient.
In analogy with the original rule, admissible skew shapes play the
role of border strips, and the coefficients of admissible skew shapes
play the role of heights of border strips.
\begin{defn}
\label{def: Gamma_F}Given a set of partitions $F$, let $\Gamma_{F}$
be the set of all skew shapes $\gamma$ that admit a border strip
tableau $T\in\mathcal{B}(\gamma,\rho)$ for some $\rho\in F$. For
$\gamma\in\Gamma_{F}$, we define the coefficient
\[
\alpha_{F}(\gamma)=\sum_{\rho\in F}\sum_{T\in\mathcal{B}(\gamma,\rho)}\height(T).
\]
\end{defn}

In preparation of Section~\ref{sec: tetrominos}, we exemplify this
definition with the cycle formats of active edge gadgets. For $F=\{(2^{2}),(4)\}$,
we observe that the set $\Gamma_{F}$ consists of all skew shapes
that can be covered with two disjoint dominos. Note that there are
shapes $\gamma\in\Gamma_{F}$ with $\alpha_{F}(\gamma)=0$: For example,
the vertical $4$-box line admits a border strip tableau for $(2^{2})$
and one for $(4)$, shown below. These tableaux have opposite height
signs and thus cancel.
\begin{table}[H]
\centering{}%
\begin{tabular}{ccccc}
\SSLineTwo &  & \SSLineOne &  & \SSLineEmpty\tabularnewline
\vspace{-5bp}
 &  &  &  & \tabularnewline
$\height(T)=1$ &  & $\height(T')=-1$ & $\quad\Rightarrow\quad$ & $\alpha_{F}(\gamma)=0$\tabularnewline
\end{tabular}
\end{table}
This outcome is expected: In Section~\ref{sec: tetrominos}, we show
that $4$-box shapes $\gamma$ with $\alpha_{F}(\gamma)\neq0$ serve
as a resource for establishing hardness of immanants. If we can peel
many such shapes from a partition $\lambda$, then we can reduce a
large permanent to the $\lambda$-immanant. Thus, if the vertical
$4$-box line $\gamma$ satisfied $\alpha_{F}(\gamma)\neq0$, our
reductions would allow us to establish hardness of the determinant.

With all relevant notions introduced, we can now turn to our generalization
of the Murnaghan-Nakayama rule. The proof is almost syntactic; it
essentially requires us to group border strips into skew shapes and
collect terms accordingly. Nevertheless, a little care is required
to ensure that the grouped border strips are indeed proper skew shapes.
\begin{lem}
\label{lem: mn-rule extended}Let $\lambda$ be a partition. Given
a partition product $F_{1}\times\ldots\times F_{t}$, abbreviate $\Gamma_{i}=\Gamma_{F_{i}}$
and $\alpha_{i}=\alpha_{F_{i}}$ for $i\in[t]$. Writing $\tilde{\Gamma}=(\Gamma_{1},\ldots,\Gamma_{t})$,
we have
\begin{equation}
\chi_{\lambda}(F_{1}\times\ldots\times F_{t})=\sum_{\substack{S\in\mathcal{S}(\lambda,\tilde{\Gamma})\\
\mathrm{with\ shapes\ }\gamma_{1}\ldots\gamma_{t}
}
}\prod_{i=1}^{t}\alpha_{i}(\gamma_{i}).\label{eq: skewproof-main}
\end{equation}
\end{lem}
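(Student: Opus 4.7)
The plan is to prove the identity by expanding the left-hand side with the ordinary Murnaghan--Nakayama rule and then regrouping terms. First, by linearity of $\chi_\lambda$ on class functions extended to multi-sets,
\[
\chi_{\lambda}(F_{1}\times\ldots\times F_{t}) \;=\; \sum_{\rho_{1}\in F_{1}}\cdots\sum_{\rho_{t}\in F_{t}} \chi_{\lambda}(\rho_{1}\cup\cdots\cup\rho_{t}).
\]
To each summand I apply Theorem~\ref{thm: mn-rule} with the ordering $\kappa=(\kappa^{(1)},\ldots,\kappa^{(t)})$ in which I first list (some chosen ordering of) the parts of $\rho_{1}$, then $\rho_{2}$, and so on (Remark~\ref{rem: mn-rule-ordering} permits any ordering). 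This turns each term into a signed sum over border strip tableaux in $\mathcal{B}(\lambda,\kappa)$.

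Next I regroup. A tableau $T\in\mathcal{B}(\lambda,\kappa)$ peels a sequence of border strips $\beta_{1,1},\beta_{1,2},\ldots$ (corresponding to $\rho_1$), then $\beta_{2,1},\ldots$ (corresponding to $\rho_2$), and so on. For each $i\in[t]$, let $\gamma_{i}$ be the union of the border strips labeled by $\rho_i$, and let $T_i$ be the restriction of $T$ to that part. The key observation, which I expect to be the main obstacle and which I will spell out carefully, is that $\gamma_i$ is a legitimate skew shape that can be peeled from $\lambda$ in the sense of the excerpt's definition: because $T$ peels border strips from the south-eastern border of the remaining shape, after exhausting all strips of $\rho_1,\ldots,\rho_{i-1}$ the remaining shape is a partition $\lambda^{(i-1)}$ sandwiched between $\mu$ (the empty peel target) and $\lambda$, and peeling $\gamma_i$ leaves another partition $\lambda^{(i)}$. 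Hence $\gamma_i=\lambda^{(i-1)}/\lambda^{(i)}$ is a skew shape, and $T_{i}$ is a border strip tableau of $\gamma_i$ with content $\rho_i$; in particular $\gamma_i\in\Gamma_{F_i}$.

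With this grouping, the map $T\mapsto(S,T_1,\ldots,T_t)$, where $S\in\mathcal{S}(\lambda,\tilde{\Gamma})$ records just the skew shapes $\gamma_1,\ldots,\gamma_t$ and $T_i\in\mathcal{B}(\gamma_i,\rho_i)$ records their internal border strip decomposition, is a bijection between the disjoint union over $(\rho_1,\ldots,\rho_t)\in F_1\times\cdots\times F_t$ of $\mathcal{B}(\lambda,\kappa)$ and the disjoint union
\[
\bigsqcup_{S\in\mathcal{S}(\lambda,\tilde{\Gamma})}\prod_{i=1}^{t}\bigsqcup_{\rho_i\in F_i}\mathcal{B}(\gamma_i,\rho_i).
\]
The height sign is multiplicative over the peeled strips of $T$, so it factors as $\height(T)=\prod_{i=1}^{t}\height(T_i)$. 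Substituting and swapping the order of summation yields
\[
\chi_{\lambda}(F_{1}\times\ldots\times F_{t})
= \sum_{S\in\mathcal{S}(\lambda,\tilde{\Gamma})}\prod_{i=1}^{t}\Bigl(\sum_{\rho_i\in F_i}\sum_{T_i\in\mathcal{B}(\gamma_i,\rho_i)}\height(T_i)\Bigr),
\]
and the inner factor is exactly $\alpha_i(\gamma_i)$ by Definition~\ref{def: Gamma_F}, proving~(\ref{eq: skewproof-main}). The only subtlety requiring care, as noted above, is the verification that the regrouped strips form a valid skew shape at each interface; everything else is a bookkeeping argument that amounts to reordering a nested sum.
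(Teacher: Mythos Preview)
Your proof is correct and follows essentially the same approach as the paper: expand $\chi_\lambda$ on the partition product by linearity, apply the Murnaghan--Nakayama rule with the block ordering $(\rho_1,\rho_2,\ldots,\rho_t)$, then group consecutively peeled border strips into skew shapes and factor the height sign to obtain the product $\prod_i\alpha_i(\gamma_i)$. The paper's write-up is slightly terser but makes the same key observation you flag, namely that consecutive peeling guarantees each $\gamma_i$ is a genuine skew shape.
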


\begin{proof}
Write $k_{i}:=|F_{i}|$ and enumerate $F_{i}=\{\rho_{i,1},\ldots,\rho_{i,k_{i}}\}$
for $i\in[t]$. We write $A=[k_{1}]\times\ldots\times[k_{t}]$ for
the set of multi-indices into $F_{1}\times\ldots\times F_{t}$. For
any multi-index $a\in A$, we define the ordered partition 
\[
\rho_{a}:=(\rho_{1,a(1)},\ldots,\rho_{t,a(t)})
\]
by concatenating the partitions $\rho_{i,a(i)}$ for $i\in[t]$. Here,
the order within $\rho_{i,a(i)}$ is not relevant, but the partitions
need to be concatenated in the specified order. We then have
\begin{align}
\chi_{\lambda}(F_{1}\times\ldots\times F_{t}) & =\sum_{a\in A}\sum_{T\in\mathcal{B}(\lambda,\rho_{a})}\height(T).\label{eq: skewproof-1}
\end{align}

Given $a\in A$ and a border strip tableau $T\in\mathcal{B}(\lambda,\rho_{a})$,
we define the skew shape tableau $S(T)$ by grouping, for each $i\in[t]$,
the different border strips corresponding to block $\rho_{i,a(i)}$
into a skew shape. Since these border strips are peeled consecutively
in $T$, the tableau $S(T)$ is indeed a skew shape tableau, and it
is contained in $\mathcal{S}(\lambda,\tilde{\Gamma})$ by definition
of the sets $\Gamma_{1},\ldots,\Gamma_{t}$. We then partition the
set of tableaux $T\in\mathcal{B}(\lambda,\rho_{a})$ according to
$S(T)$ and obtain 
\begin{equation}
\sum_{a\in A}\sum_{T\in\mathcal{B}(\lambda,\rho_{a})}\height(T)=\sum_{a\in A}\sum_{S\in\mathcal{S}(\lambda,\tilde{\Gamma})}\sum_{\substack{T\in\mathcal{B}(\lambda,\rho_{a})\\
\text{with }S(T)=S
}
}\height(T).\label{eq: skewproof-2}
\end{equation}
Let us change the order of summation and sum over $S\in\mathcal{S}(\lambda,\tilde{\Gamma})$
first. For a fixed skew shape tableau $S$ with shapes $\gamma_{1},\ldots,\gamma_{t}$,
each border strip tableau $T\in\bigcup_{a\in A}\mathcal{B}(\lambda,\rho_{a})$
with $S(T)=S$ is obtained by choosing a border strip tableau independently
for each of the shapes $\gamma_{1},\ldots,\gamma_{t}$. We obtain
\begin{align*}
\sum_{a\in A}\sum_{\substack{T\in\mathcal{B}(\lambda,\rho_{a})\\
\text{with }S(T)=S
}
}\height(T) & \ =\ \prod_{i=1}^{t}\sum_{j=1}^{k_{i}}\sum_{T\in\mathcal{B}(\gamma_{i},\rho_{i,j})}\height(T)\ =\ \prod_{i=1}^{t}\alpha_{i}(\gamma_{i}).
\end{align*}
By summing both sides over all $S\in\mathcal{S}(\lambda,\tilde{\Gamma})$
and combining the result with (\ref{eq: skewproof-2}) and (\ref{eq: skewproof-1}),
the lemma follows.
\end{proof}

\section{Staircases versus non-vanishing tetrominos\label{sec:Staircases-versus-skew}}

We associate various quantities with partitions $\lambda\vdash n$
to measure to what extent the $\lambda$-immanant lends itself to
a reduction from counting matchings. Then we investigate their interplay
with the number $b(\lambda)$ of boxes to the right of the first column
of $\lambda$.

The diagrams $\tHDom$ and $\tVDom$ of $(2^{1})$ and $(1^{2})$
will be denoted as \emph{domino}s. The domino number $d(\lambda)$
is the maximum number of dominos that can be peeled successively from
$\lambda$. As we show below, the result of peeling these dominos
from $\lambda$ is a partition of the form $\mu=(k,\ldots,1)$ for
some $k\in\mathbb{N}$. Such partitions and their associated shapes
are called \emph{staircases}, and we define $z(\lambda)=k+\ldots+1$
as the staircase size and $w(\lambda)=k$ as the staircase width of
$\lambda$. Note that $2d(\lambda)+z(\lambda)=n$ and that $z(\lambda)$
can be zero; this happens when $\lambda$ can be covered fully by
dominos.
\begin{fact}
\label{fact: extract-staircase}The shape $\mu$ obtained by peeling
$d(\lambda)$ dominos from $\lambda$ is the unique staircase on $z(\lambda)$
boxes.
\end{fact}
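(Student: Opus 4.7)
The plan is to reduce the statement to one structural observation: a Young diagram is unpeelable (admits no domino peeling) if and only if it is a staircase $(s, s-1, \ldots, 1)$ for some $s \geq 0$. Granted this, the fact follows at once. Indeed, $z(\lambda) = |\lambda| - 2 d(\lambda)$ is well-defined by maximality of $d(\lambda)$; any sequence of $d(\lambda)$ successive peelings terminates at a shape $\mu$ of size $z(\lambda)$ that is unpeelable (else $d(\lambda)$ would not be maximal); by the structural observation $\mu$ is a staircase; and staircases of a given size are unique in shape. So $\mu$ is the unique staircase on $z(\lambda)$ boxes regardless of the order in which dominos were peeled.

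To establish the structural observation, I would analyze the gaps $\delta_i := \mu_i - \mu_{i+1}$ of an unpeelable shape $\mu$ with $s$ nonzero rows, using the convention $\mu_{s+1} = 0$ so that $\delta_s = \mu_s \geq 1$ is automatic. Two obstructions can block unpeelability. First, any row with $\delta_i \geq 2$ offers a peelable horizontal domino at its two rightmost boxes, since $\mu_i - 2 \geq \mu_{i+1}$ keeps the result a valid partition. Second, if $\delta_i = 0$ for some $i < s$, I walk downward through the plateau of equal-length rows starting at $i$ until reaching the first index $j$ with $i \leq j \leq s-1$ and either $\delta_{j+1} \geq 1$ or $j+1 = s$; such a $j$ must exist because $\delta_s \geq 1$ is automatic. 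Then the rightmost boxes of rows $j, j+1$ form a peelable vertical domino, as the three partition constraints on rows $j-1, j, j+1, j+2$ are satisfied by the choice of $j$. Unpeelability therefore forces $\delta_i \leq 1$ for all $i$ (first obstruction) and $\delta_i \geq 1$ for $i < s$ (second obstruction), and $\delta_s = \mu_s = 1$ follows from $\mu_s \geq 1$ combined with $\delta_s \leq 1$. Hence $\mu_i = s - i + 1$ for all $i$, i.e., $\mu$ is the staircase of width $s$.

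The only delicate point is verifying that the downward walk in the vertical case terminates with a genuinely peelable domino; this is secured precisely by the automatic lower bound $\delta_s \geq 1$ at the bottom row, which prevents a plateau of zero gaps from running off the diagram. Everything else is a routine case analysis on gaps and requires no representation-theoretic input.
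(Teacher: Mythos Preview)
Your proposal is correct and follows essentially the same approach as the paper's proof: both argue that after peeling $d(\lambda)$ dominos the result is unpeelable by maximality, then that unpeelability forces all gaps to equal $1$, hence a staircase, and finally that a staircase is determined by its size. The paper compresses the middle step into a single clause (``no domino can be peeled from $\mu$, so all gaps between consecutive rows in $\mu$ are $1$''), whereas you spell out the horizontal ($\delta_i \geq 2$) and vertical ($\delta_i = 0$) obstructions explicitly; your downward walk could be shortened by simply taking the largest $i$ with $\delta_i = 0$, but the argument as written is fine.
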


\begin{proof}
By maximality of $d(\lambda)$, no domino can be peeled from $\mu$,
so all gaps between consecutive rows in $\mu$ are $1$. This requires
$\mu$ to be a staircase. It has $n-2d(\lambda)=z(\lambda)$ boxes,
and the number of boxes uniquely determines a staircase.
\end{proof}
We define $s(\lambda)$ to be the maximum number of \emph{non-vanishing
tetrominos} (depicted on page~\pageref{fig: tetrofig} and Table~\ref{tab: nv-tetro})
that can be peeled successively from $\lambda$. As we establish in
Sections~\ref{sec: staircase} and \ref{sec: tetrominos}, any partition
$\lambda$ with large $w(\lambda)$ or $s(\lambda)$ induces an immanant
to which the hard problem of counting $k$-matchings can be reduced.
In this subsection, we show that at least one of these numbers is
large if $b(\lambda)$ is large. Towards this, we first observe that
partitions $\lambda$ with $s(\lambda)=0$ are strongly restricted. 
\begin{lem}
\label{lem: mine-gaps}For any partition $\lambda$ with $s(\lambda)=0$,
at least one of the following holds:
\begin{enumerate}
\item The shape $\lambda$ is a staircase up to some number of additional
boxes in the first column.
\item There is an index $i^{*}$ such that, after peeling a horizontal domino
from each row $i\leq i^{*}$, the remaining shape is the staircase
of $\lambda$.
\end{enumerate}
\end{lem}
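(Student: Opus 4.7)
The plan is to translate the condition $s(\lambda)=0$ into rigid constraints on the gap sequence $\delta_i = \lambda_i - \lambda_{i+1}$ (with $\lambda_{s+1}:=0$), and show that only the two listed configurations survive. Concretely, I would first enumerate the \emph{non-vanishing} tetrominos: by the argument already carried out for the vertical $4$-box line (which has $\alpha_F=0$), by the analogous cancellation for the $2\times 2$ square $\tSquare$, and by a similar cancellation for $\tL$-shape $\ydiagram{2,1,1}$, the only tetrominos that \emph{do} vanish are the vertical line, the $2\times 2$ square, and the ``staircase-compatible'' L-shape. Every other way of peeling two dominos successively from $\lambda$ yields a non-vanishing tetromino, and hence is forbidden by the assumption $s(\lambda)=0$.

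Next I would examine the two kinds of peelable dominos available in $\lambda$. A horizontal domino is peelable from row $i$ iff $\lambda_i\ge 2$ and either $i=s$ or $\delta_i\ge 2$. A vertical domino is peelable in the pair $(i,i+1)$ iff $\lambda_i=\lambda_{i+1}$ and either $i+1=s$ or $\lambda_{i+2}<\lambda_i$. Using the enumeration above, I would show: (a) if two rows $i_1<i_2$ each support a peelable horizontal domino, then the two dominos, peeled successively (outermost first), are always disjoint in rows, and their columns are either disjoint or overlap in exactly two columns only when $i_2=i_1+1$ and $\lambda_{i_1}=\lambda_{i_2}$, in which case they form the vanishing $2\times 2$ square; in every other case they form a non-vanishing tetromino (horizontal line, S/Z shape, J-shape, or corner-disjoint pair). (b) A peelable horizontal domino \emph{below} column $1$ combined with any peelable vertical domino always produces a non-vanishing tetromino, because the two dominos will sit in disjoint rows and disjoint columns. (c) Two peelable vertical dominos in distinct columns produce a non-vanishing corner-disjoint pair, while two vertical dominos in the same column (stacked) form the vanishing vertical line.

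Putting these restrictions together, I would argue that the peelable-domino set of $\lambda$ has one of two extremely restricted forms. If all peelable dominos are vertical and live in column $1$, then by a short inductive argument peeling them in order reveals $\lambda$ to be a staircase with extra single-box rows appended beneath---this is case~(1). Otherwise, some peelable horizontal domino exists: by (a) applied to any candidate pair of horizontal-domino rows $i_1<i_2$, the only way both can coexist with $s(\lambda)=0$ is $i_2=i_1+1$ with $\lambda_{i_1}=\lambda_{i_2}+1$ being impossible (it yields an S-shape), forcing the peelable horizontal dominos to occupy a contiguous prefix $1,\dots,i^*$ of rows with $\lambda_i=\lambda_{i+1}+1$ differences outside them; combined with (b), no peelable vertical domino exists outside column~$1$. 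Successively peeling the horizontal dominos of rows $1,\dots,i^*$ then leaves the unique staircase of $\lambda$---this is case~(2).

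\textbf{Main obstacle.} The delicate point is the precise bookkeeping of which two-domino peels produce \emph{non-vanishing} tetrominos versus the three vanishing shapes $\tVDom\tVDom$, $\tSquare$, and $\tL$. These three exceptions are exactly what allows the two families in the conclusion to occur at all; any mistake in identifying them would either forbid legitimate partitions or admit spurious ones. The argument in step~(a) above is subtle because the \emph{order} of peeling matters and because one must verify that after peeling the outer domino the inner domino remains peelable in a way compatible with the resulting partition shape.
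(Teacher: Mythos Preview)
Your overall strategy---translate $s(\lambda)=0$ into gap constraints and case-split---matches the paper's approach, but your enumeration of vanishing versus non-vanishing tetrominos is wrong, and the error propagates through the argument.

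Specifically, the $2\times 2$ square $\tSquare$ is \emph{non-vanishing}: it is not a border strip, so it admits only the two domino tilings (two horizontal or two vertical), both of even height sign, giving $\alpha_F(\tSquare)=2$. (See Fact~\ref{fact: nonvan-tetro} and Table~\ref{tab: nv-tetro}, where $\tSquare$ is listed among the positive shapes.) You also omit two further vanishing connected shapes, namely $\ydiagram{1+1,1+1,2}$ and $\ydiagram{1+2,2}$, so your claimed list of vanishing tetrominos is incorrect in both directions.

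The square error breaks step~(a) of your plan. You write that when two peelable horizontal dominos sit in adjacent rows with overlapping columns ``they form the vanishing $2\times 2$ square,'' hence this configuration is permitted under $s(\lambda)=0$. In fact it is forbidden, and the paper uses exactly this: when analyzing the interval of gap-$0$ rows $[j,k]$, the case that the unique row of gap $>1$ is $i'=k+1$ is eliminated by peeling $\tSquare$ from rows $k,k+1$. Your plan would let this case through, and you would then be unable to conclude that the gap-$0$ rows satisfy $\lambda_i=1$ (the step that yields case~(1)). Similarly, your case-(2) argument relies on the square being an allowed escape hatch for adjacent horizontal dominos; without it, you must instead argue directly about the unique row of gap $>1$, as the paper does (ruling out gap $\geq 4$ via $\tLine$, gap $2$ with a successor via $\tL$, and concluding gap $3$ or a terminal gap $2$).

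Fixing the tetromino classification and rerunning your gap analysis will essentially reproduce the paper's proof.
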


\begin{proof}
There is at most one row $i$ with gap $\delta_{i}>1$: If $\delta_{i},\delta_{i'}\geq2$
for $i\neq i'$, then we could peel $\tHDom$ from these two rows.
As these dominos share no rows or columns, they constitute a non-vanishing
tetromino, contradicting $s(\lambda)=0$. Similarly, we see that the
set of rows $i$ with $\delta_{i}=0$, if non-empty, forms a contiguous
interval $[j,k]$: If the set contained two disjoint intervals, we
could peel two $\tVDom$ from their ends.

In the following, we distinguish whether $\lambda$ contains a row
with gap $0$ or not. If such a row exists, we show that the first
case of the lemma statement applies. Otherwise, we establish the second
case.
\begin{enumerate}
\item If there is a row $i$ with gap $\delta_{i}=0$, let $[j,k]\ni i$
be the single interval of rows with gap $0$. We rule out the existence
of a row $i'$ with $\delta_{i'}>1$, as it would lead to the following
contradictions:
\begin{itemize}
\item If $i'<j$ or $i'>k+1$, then we could peel $\tHDom$ from row $i'$
and $\tVDom$ from rows $k-1$ and $k$. This contradicts $s(\lambda)=0$.
\item If $i'=k+1$, then we could peel $\tSquare$ from rows $k$ and $k+1$.
\end{itemize}
It follows that $\delta_{i'}=1$ for all $i'\notin[j,k]$. Next, we
show that $\lambda_{i}=1$ for all $i\in[j,k]$: If $\lambda_{j}=\lambda_{k}>1$,
then $\tSnake$ could be peeled from rows $k$ to $k+2$ for a contradiction.
(Indeed, we would have $\delta_{k+1}=1$, since $\delta_{i'}=1$ for
all rows $i'\notin[j,k]$, and $\lambda_{k}>1$ would that row $k+2$
exists.) Hence, the first case of the lemma applies.
\item Otherwise, we have $\delta_{i}>0$ for all rows $i$. As discussed
in the first paragraph of the proof, there is at most one row $i^{*}$
with $\delta_{i^{*}}>1$.
\begin{itemize}
\item If $\delta_{i^{*}}\geq4$, we could peel $\tLine$ from row $i^{*}$,
contradicting $s(\lambda)=0$.
\item If $\delta_{i^{*}}=2$ and row $i^{*}+1$ exists, we could peel $\tL$
from rows $i^{*}$ and $i^{*}+1$.
\end{itemize}
It follows that $\delta_{i^{*}}=3$, or that $i^{*}$ is the last
row and $\delta_{i^{*}}=2$. As $i^{*}$ is the only row with gap
$\delta_{i^{*}}\neq1$, the second case of the lemma applies.

\end{enumerate}
Since at least one of the two above cases applies, the lemma is proven.
\end{proof}
It follows that partitions $\lambda$ with $s(\lambda)=0$ contain
only few boxes outside their first column and staircase.
\begin{cor}
\label{cor: mine-tetro}Let $\lambda$ be a partition with $s(\lambda)=0$
and staircase $\mu$ of width $w\in\mathbb{N}$. Then $b(\lambda/\mu)\leq2w+1$.
\end{cor}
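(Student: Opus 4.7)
My plan is to apply Lemma~\ref{lem: mine-gaps}, which splits the analysis into two structural cases, and to combine this with the identity $b(\lambda/\mu) = b(\lambda) - b(\mu)$ together with the formula $b(\mu) = \binom{w}{2}$ for a staircase $\mu$ of width $w$.

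In Case~2 of the lemma, $\lambda_i = \mu_i + 2$ for $i \leq i^*$ and $\lambda_i = \mu_i$ for $i > i^*$. I would use the fact, established inside the proof of the lemma, that $\lambda$ has strictly decreasing row lengths in this case; this forces $i^* \leq w + 1$, since otherwise rows $w+1$ and $w+2$ would both have length~$2$ and violate strict decrease. In each row $i \leq i^*$, the two peeled boxes lie at columns $\mu_i + 1$ and $\mu_i + 2$, contributing $2$ to $b(\lambda/\mu)$ when $\mu_i \geq 1$ (equivalently $i \leq w$) and only $1$ when $\mu_i = 0$ (which can occur only for $i = w+1$). Summing over the rows yields $b(\lambda/\mu) \leq 2w + 1$.

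In Case~1 of the lemma, $\lambda = (k, k-1, \ldots, 1, 1^{t})$. Since the extra column-$1$ boxes contribute nothing to $b$, I would obtain $b(\lambda) = \binom{k}{2}$, so the target inequality $\binom{k}{2} - \binom{w}{2} \leq 2w+1$ simplifies to $k \leq w+2$. Everything then reduces to showing that the staircase of $\lambda$ has width $w \geq k-2$. My approach to this key claim is by contradiction through domino tileability: if some maximal peeling produced a staircase $\mu$ of width $w \leq k-3$, then $\lambda/\mu$ would have to admit a domino tiling, which I plan to rule out. Setting $\nu := (k,k-1,\ldots,1)$, each anti-diagonal box $(i,k-i+1)$ of $\nu$ has inside $\nu$ only two neighbors, a left-neighbor $(i,k-i)$ and an above-neighbor $(i-1,k-i+1)$, with the right and below neighbors falling outside $\nu$. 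Starting from $(1,k)$, whose unique available partner is $(1,k-1)$, induction would then force the horizontal pair $(i,k-i+1)$--$(i,k-i)$ for every $i = 1,\ldots,k-1$, because the above-neighbor is always committed one step earlier. The box $(k,1)$ would subsequently pair vertically with the first column-$1$ extra $(k+1,1)$; in the subcase $t = 0$ it is orphaned immediately, which already forces $w = k$.

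The hard part will be closing this contradiction. After the forced chain, the \emph{second} anti-diagonal $(i, k-i-1)$ for $i = 1, \ldots, k-2$ still lies in $\lambda/\mu$ whenever $w \leq k-3$, since this is exactly the condition $k > w+2$. Each such box has had its right- and below-neighbors consumed by the forced chain, while in this regime its left- and above-neighbors both fall inside $\mu$, and the remaining column-$1$ extras pair only among themselves or else leave a parity obstruction. Every second-anti-diagonal box therefore becomes an orphan, contradicting tileability. The genuine bookkeeping hurdle is to verify these ``which neighbor is where'' claims cleanly across the two sub-regions $i \leq w$ and $w < i \leq k-2$ of $\lambda/\mu$; once that is organized, the contradiction closes the claim $w \geq k-2$ and completes the proof.
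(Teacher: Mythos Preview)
Your treatment of Case~2 is correct and essentially coincides with the paper's (one-line) handling of that case.

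Your Case~1 plan has a genuine gap. The claim that the left- and above-neighbors of every box $(i,k-i-1)$ on the ``second anti-diagonal'' fall inside $\mu$ is only true when $w=k-3$. If $w\leq k-4$, then for $i\leq w$ one has $\mu_i=w-i+1<k-i-2$, so the left neighbor $(i,k-i-2)$ lies in $\lambda/\mu$, not in $\mu$; the same holds for the above neighbor. Hence these boxes are \emph{not} immediately orphaned, and the contradiction does not close where you say it does. The obstacle is not the split $i\leq w$ versus $i>w$ that you flag; it is a further case split on the value of $w$ itself, requiring the forcing to be iterated across successive anti-diagonals until an orphan finally appears (at the box $(k-2,1)$, after a second full round). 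This can be made to work, but it is an additional layer of induction that your proposal does not supply.

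The paper avoids all of this with a direct two-line argument: for $\lambda=(k,k-1,\ldots,1,1^{t})$, simply \emph{exhibit} a maximal domino peeling, splitting on the parity of $t$. If $t$ is even, peel $t/2$ vertical dominos from the bottom of the first column to reach the staircase of width $k$, so $b(\lambda/\mu)=0$. If $t$ is odd, peel $(t+1)/2$ vertical dominos from the first column (the last of which contains the box $(k,1)$), leaving $(k,k-1,\ldots,2)$; then peel one horizontal domino from each of the remaining $k-1=w+1$ rows to reach the staircase of width $k-2$, whence $b(\lambda/\mu)\leq 2w+1$. Since by Fact~\ref{fact: extract-staircase} the staircase reached by any maximal peeling is \emph{the} staircase of $\lambda$, there is nothing further to prove. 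This is both shorter and cleaner than the tileability-by-contradiction route.
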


\begin{proof}
If the second case of Lemma~\ref{lem: mine-gaps} applies to $\lambda$,
then $b(\lambda/\mu)\leq2w+1$. If the first case applies, let $\mu'$
denote the largest staircase contained in $\lambda$. (This need not
be the staircase of $\lambda$.) Then $\lambda$ is $\mu'$ with some
number $a\in\mathbb{N}$ of additional boxes in the first column.
When peeling dominos from $\lambda$, the first dominos must be contained
in the first column.
\begin{itemize}
\item If $a$ is even, then $\mu'$ is the staircase of $\lambda$, so $b(\lambda/\mu)=0$:
Peeling $\tVDom$ from the first column results in $\mu'$.
\item If $a$ is odd, then the lowest box of $\mu'$ is contained in the
last $\tVDom$ that can be peeled from the first column. We can then
peel a horizontal domino from each of the remaining $w+1$ rows. The
remainder is the staircase of $\lambda$. It follows that $b(\lambda/\mu)\leq2w+1$.
\end{itemize}
In both cases, we obtain $b(\lambda/\mu)\leq2w+1$, thus proving the
corollary.
\end{proof}
We are ready to prove the main lemma of this section.
\begin{lem}
\label{lem: mine-main}For any partition $\lambda$, at least one
of $s(\lambda)\geq b(\lambda)/8$ or $w(\lambda)\geq\sqrt{b}-1$ holds.
\end{lem}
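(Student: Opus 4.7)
The plan is to peel non-vanishing tetrominos from $\lambda$ maximally, obtaining a partition $\lambda'$ with $s(\lambda')=0$, and then invoke Corollary~\ref{cor: mine-tetro} on $\lambda'$. A key observation is that peeling a tetromino consists of removing two dominos (i.e., two $2$-strips), and standard $2$-core theory tells us that removing dominos preserves the $2$-core of a partition; in the notation of Fact~\ref{fact: extract-staircase}, the quantity $z(\cdot)$ is unchanged by each domino removal. Hence $\lambda$ and $\lambda'$ have the same staircase, and in particular $w(\lambda')=w(\lambda)$.

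Once this invariance is in place, the argument reduces to two easy estimates. First, since each of the $s(\lambda)$ tetrominos removes at most four boxes lying outside the first column, we obtain $b(\lambda)\leq b(\lambda')+4\,s(\lambda)$. Second, applying Corollary~\ref{cor: mine-tetro} to $\lambda'$ gives $b(\lambda'/\mu)\leq 2w+1$, where $\mu$ is the common staircase of width $w:=w(\lambda)=w(\lambda')$; combined with the elementary count $b(\mu)=w(w-1)/2$, this yields $b(\lambda')\leq(w+1)(w+2)/2$. Chaining the two estimates produces the master inequality
\[
b(\lambda)\,\leq\,4\,s(\lambda)+\tfrac{(w+1)(w+2)}{2}.
\]

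The conclusion then follows by contradiction. Suppose simultaneously that $s(\lambda)<b/8$ and $w<\sqrt{b}-1$, writing $b:=b(\lambda)$. The first hypothesis gives $4\,s(\lambda)<b/2$, which after substitution into the master inequality forces $b<(w+1)(w+2)$; the second hypothesis rearranges to $(w+1)^{2}<b$. Together these squeeze $b$ into the narrow window $\bigl((w+1)^{2},\,(w+1)(w+2)\bigr)$, whose width is only $w+1$, and the resulting quadratic estimate yields the desired lower bound on $w$ (possibly absorbing an additive constant into the statement). The delicate step I expect is the interplay of the ``$+1$'' in Corollary~\ref{cor: mine-tetro}, which is attained in the odd-parity sub-case of Lemma~\ref{lem: mine-gaps}, with the tetromino peeling: one must rule out that an extremal $\lambda'$ attaining this sub-case could correspond to a $\lambda$ slipping through the window argument, and this is where the $2$-core invariance together with a careful accounting of the allowable shapes of non-vanishing tetrominos will be needed.
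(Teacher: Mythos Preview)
Your approach is essentially identical to the paper's. Both peel non-vanishing tetrominos until none remain, use the invariance of the $2$-core (which the paper leaves implicit in the phrase ``applying the contraposition of Corollary~\ref{cor: mine-tetro} repeatedly''), and arrive at exactly the same master inequality
\[
b(\lambda)\ \leq\ 4\,s(\lambda)+\tfrac{(w+1)(w+2)}{2}.
\]
From here the paper simply observes that one of the two summands on the right must be at least $b(\lambda)/4$ divided by two, i.e.\ $\geq b(\lambda)/8$, and reads off the two alternatives.

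Your final paragraph, however, is misdirected. The window $\bigl((w+1)^{2},(w+1)(w+2)\bigr)$ is \emph{not} empty and does not yield a contradiction; you already have what you need from $b<(w+1)(w+2)<(w+2)^{2}$, which gives $w>\sqrt{b}-2$ (and the paper's stated bound $w\geq\sqrt{b}-1$ is off by the same additive slack). More importantly, there is no ``delicate step'' involving extremal shapes of $\lambda'$ or a ``careful accounting of the allowable shapes of non-vanishing tetrominos'': once the master inequality is established, the argument is pure arithmetic and the structure of tetrominos plays no further role. You can simply drop that paragraph.
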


\begin{proof}
Let $\mu$ be the staircase of $\lambda$ and $w=w(\lambda)$. The
number of boxes to the right of the first column in $\lambda/\mu$
is $b(\lambda/\mu)=b(\lambda)-b(\mu)=b(\lambda)-{w \choose 2}.$ By
applying the contraposition of Corollary~\ref{cor: mine-tetro} repeatedly,
it follows that
\[
s(\lambda)\geq\frac{b(\lambda/\mu)-(2w+1)}{4}=\frac{b(\lambda)-{w \choose 2}-(2w+1)}{4}=\frac{b(\lambda)}{4}-\frac{w^{2}+3w+2}{8}.
\]
We obtain
\[
s(\lambda)+\frac{w^{2}+3w+2}{8}\geq\frac{b(\lambda)}{4},
\]
and hence, at least one of the two terms is larger than $b(\lambda)/8$.
This implies the lemma.
\end{proof}

\section{Exploiting a staircase\label{sec: staircase}}

We show how to count $k$-matchings in $n$-vertex graphs $H$ with
access to the $\lambda$-immanant for a partition $\lambda$ with
staircase size $z(\lambda)\in\Omega(k)$. If $k\ll n$, we also require
a large domino number $d(\lambda)$. Our reduction relies on the intermediate
problem of counting cycle covers with a particular ``onion'' format,
outlined in Section~\ref{subsec: onion}, which we then reduce to
the $\lambda$-immanant using a graph construction that has a favorable
interplay with staircase characters, as established in Section~\ref{subsec: staircase-characters}.
We collect the steps in Section~\ref{subsec: staircase-hardness}.

\subsection{\label{subsec: onion}Onion partitions}

We first describe a particular ``onion partition'' derived from
$\lambda$, depicted in Figure~\ref{fig: onion-part}. To define
this partition, recall that $w(\lambda)$ and $z(\lambda)$ denote
the staircase width and size of $\lambda$, respectively.
\begin{defn}
\label{def: onion-cycle-formats}Let $\lambda\vdash n'$ be a partition
with staircase $\mu$. For $\ell\leq w(\lambda)/2$, let $\theta$
be the partition obtained by peeling $\ell$ maximal-length border
strips from $\mu$ and recording their lengths. We write $\left\Vert \theta\right\Vert $
for the number of boxes in $\theta$. Then the \emph{$\ell$-layer
onion} $\rho$ of $\lambda$ is the partition $(2^{d(\lambda)},\theta,1^{z(\lambda)-\left\Vert \theta\right\Vert })$.
For $k\in\mathbb{N}$, we say that $\rho$\emph{ accommodates} $k$
\emph{edges} if $\left\Vert \theta\right\Vert \geq2k+\ell$.
\begin{figure}
\begin{centering}
\ydiagram
[*(cyan2)]{6+2,5+2,4+2,3+2,2+2,1+2,2,1}
*[*(cyan1)]{8+2,7+2,6+2,5+2,4+2,3+2,2+2,1+2,2,1}
*[*(orange)]{10,9,8,7,6,5,4,3,2,1}
*[*(gray)]{14,13,12,9,8,5,4,3,2,1}
\par\end{centering}
\caption{\label{fig: onion-part}A partition $\lambda$ with $d(\lambda)=8$
dominos, staircase width $w(\lambda)=10$ and size $z(\lambda)=55$.
The $2$-layer onion $\rho=({\color{gray} 2^8},{\color{cyan1} 19},{\color{cyan2} 15},{\color{orange} 1^{21}})$
of $\lambda$ is shown as a border strip tableau; parts corresponding
to dominos and singletons are not distinguished in this figure. All
border strip tableaux for $\rho$ in $\lambda$ have the depicted
form. This onion accommodates $16$ edges; the two layers contain
$34$ boxes, and one box per layer is required for \textquotedblleft closing\textquotedblright{}
the layer.}
\end{figure}
\end{defn}

In words, the $\ell$-layer onion $\rho$ of $\lambda$ is obtained
as follows: First peel all $d(\lambda)$ dominos from $\lambda$ to
expose the staircase $\mu$. Then peel $\ell$ maximal-length border
strips from $\mu$. Finally, peel the remaining $z(\lambda)-\left\Vert \theta\right\Vert $
boxes of $\mu$ as singletons. If $\rho$ accommodates $k$ edges,
then the border strips contain $2k$ boxes and one extra box for each
peeled border strip; such partitions allow us to reduce counting $k$-matchings
to counting $\rho$-cycle covers.

Next, we describe two related ways of constructing a digraph $G$
whose cycle covers of onion format correspond to $k$-matchings in
$H$. Given an edge-weighted digraph $G$, let 
\[
\CC(G,\rho)=\sum_{\substack{\rho\text{-cycle cover }\\
C\,\text{in }G
}
}\prod_{e\in C}w(e).
\]
Furthermore, we write $\coeff{x^{s}}p$ for the coefficient of $x^{s}$
in a polynomial $p\in\mathbb{Q}[x]$. We also say that a vertex-set
$T$ in a graph $G$ is an \emph{odd-cycle transversal} if $G-T$
is bipartite up to self-loops. (Note that not requiring self-loops
to be removed by $T$ may not be the standard way of defining odd-cycle
transversals.)
\begin{lem}
\label{lem: staircase-graph}Let $H$ be a bipartite $n$-vertex graph
and let $\lambda$ be an integer partition.
\begin{enumerate}
\item If there are onions of $\lambda$ that accommodate $n/2$ edges, let
$\rho$ be such an onion with the minimal number of layers. In polynomial
time, we can construct a digraph $G$ with $\CC(G,\rho)=(n/2)!\cdot\PerfMatch(H).$
\item For any $k\in\mathbb{N}$ with $d(\lambda)\geq n-2k$: If there are
onions of $\lambda$ that accommodate $k$ edges and contain $\geq2k$
copies of $1$, let $\rho$ be such an onion with the minimal number
of layers. In polynomial time, we can construct a digraph $G$ (whose
edge-weights may be constant multiples of an indeterminate $x$) such
that $\coeff{x^{2k}}{\CC(G,\rho)}=k!\cdot\match(H,k)$.
\end{enumerate}
In both cases, let $\rho=(2^{d(\lambda)},\theta,1^{z(\lambda)-\left\Vert \theta\right\Vert })$
be the relevant $\ell$-layer onion, for $\ell\in\mathbb{N}$. Then
the following holds:
\begin{itemize}
\item $G$ admits an odd-cycle transversal $T\subseteq V(G)$ of cardinality
$\ell$.
\item Every cycle cover in $G$ (in the second case, every cycle cover whose
weight is a constant multiple of $x^{2k}$) contains exactly $z(\lambda)-\left\Vert \theta\right\Vert $
self-loops and at least $d(\lambda)$ digons.
\end{itemize}
\begin{figure}
\centering \def\svgwidth{12cm} 
\begingroup%
  \makeatletter%
  \providecommand\color[2][]{%
    \errmessage{(Inkscape) Color is used for the text in Inkscape, but the package 'color.sty' is not loaded}%
    \renewcommand\color[2][]{}%
  }%
  \providecommand\transparent[1]{%
    \errmessage{(Inkscape) Transparency is used (non-zero) for the text in Inkscape, but the package 'transparent.sty' is not loaded}%
    \renewcommand\transparent[1]{}%
  }%
  \providecommand\rotatebox[2]{#2}%
  \newcommand*\fsize{\dimexpr\f@size pt\relax}%
  \newcommand*\lineheight[1]{\fontsize{\fsize}{#1\fsize}\selectfont}%
  \ifx\svgwidth\undefined%
    \setlength{\unitlength}{223.44392071bp}%
    \ifx\svgscale\undefined%
      \relax%
    \else%
      \setlength{\unitlength}{\unitlength * \real{\svgscale}}%
    \fi%
  \else%
    \setlength{\unitlength}{\svgwidth}%
  \fi%
  \global\let\svgwidth\undefined%
  \global\let\svgscale\undefined%
  \makeatother%
  \begin{picture}(1,0.4552922)%
    \lineheight{1}%
    \setlength\tabcolsep{0pt}%
    \put(0,0){\includegraphics[width=\unitlength,page=1]{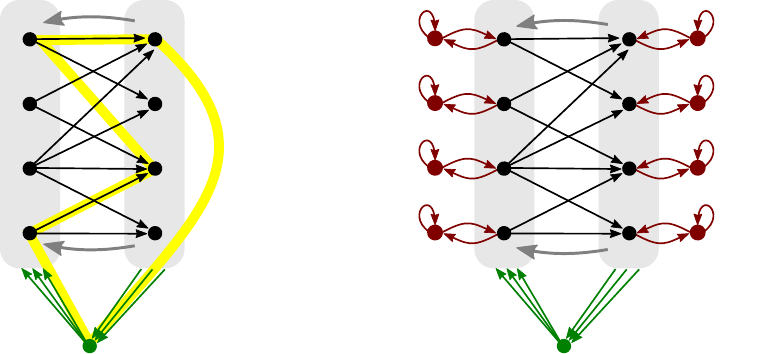}}%
    \put(0.50528332,0.41830651){\makebox(0,0)[lt]{\lineheight{1.25}\smash{\begin{tabular}[t]{l}$x$\end{tabular}}}}%
    \put(0.50528332,0.33465525){\makebox(0,0)[lt]{\lineheight{1.25}\smash{\begin{tabular}[t]{l}$x$\end{tabular}}}}%
    \put(0.50528332,0.25100395){\makebox(0,0)[lt]{\lineheight{1.25}\smash{\begin{tabular}[t]{l}$x$\end{tabular}}}}%
    \put(0.50528332,0.16735266){\makebox(0,0)[lt]{\lineheight{1.25}\smash{\begin{tabular}[t]{l}$x$\end{tabular}}}}%
    \put(0.93663202,0.42162898){\makebox(0,0)[lt]{\lineheight{1.25}\smash{\begin{tabular}[t]{l}$x$\end{tabular}}}}%
    \put(0.93663202,0.33797772){\makebox(0,0)[lt]{\lineheight{1.25}\smash{\begin{tabular}[t]{l}$x$\end{tabular}}}}%
    \put(0.93663202,0.25432642){\makebox(0,0)[lt]{\lineheight{1.25}\smash{\begin{tabular}[t]{l}$x$\end{tabular}}}}%
    \put(0.93663202,0.17067513){\makebox(0,0)[lt]{\lineheight{1.25}\smash{\begin{tabular}[t]{l}$x$\end{tabular}}}}%
  \end{picture}%
\endgroup%

\caption{The graphs from Lemma~\ref{lem: staircase-graph}. To reduce clutter,
edges from $R$ to $L$ are only hinted, and only one transit vertex
is shown. In the left graph, a cycle corresponding to a $2$-matching
is displayed in yellow color.}
\end{figure}
\end{lem}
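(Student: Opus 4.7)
The plan is to construct $G$ on vertex set $L \cup R \cup T \cup P$, where $L \cup R$ is the bipartition of $H$, $T = \{t_1, \ldots, t_\ell\}$ is a set of $\ell$ transit vertices serving as the odd-cycle transversal, and $P$ is padding calibrated so that $|V(G)| = |\lambda|$. For each edge $\{u,v\} \in E(H)$ with $u \in L$ and $v \in R$ I add a forward arc $u \to v$ of weight $w(u,v)$; for all $(v,u) \in R \times L$ I add unit-weight backward arcs; and for each $v \in R$, $u \in L$, $i \in [\ell]$ I add transit arcs $v \to t_i$ and $t_i \to u$ of weight $1$ in Case~1 and weight $x$ in Case~2. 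The padding supplies $z(\lambda) - \|\theta\|$ self-looped singleton vertices and rigid digon gadgets accounting for the $d(\lambda)$ digons of $\rho$, together with any extra vertices needed to absorb slack when the accommodation is not exactly tight; in Case~2 I additionally install unit-weight $L$--$R$ digons, using the hypotheses $d(\lambda) \geq n - 2k$ and $z(\lambda) - \|\theta\| \geq 2k$ to ensure that the $n - 2k$ unmatched vertices of $H$ are absorbed correctly.

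The next step is to analyze any $\rho$-cycle cover of $G$. Since $V(G) \setminus T$ is bipartite up to self-loops and digon gadgets, and $\rho$ contains exactly $\ell$ odd parts, each odd cycle passes through exactly one transit vertex. An odd cycle of length $\theta_i = 2w_i - 1$ through $t_i$ --- the lengths are odd because $\theta$ records maximal border-strip lengths peeled from the staircase $\mu$ of $\lambda$ --- traverses $w_i - 1$ forward $H$-arcs alternating with $w_i - 1$ backward arcs, closed by $v \to t_i \to u$. Collecting the forward arcs across all $\ell$ cycles yields a matching in $H$ of size $(\|\theta\| - \ell)/2 \geq n/2$ in Case~1 or $\geq k$ in Case~2; the minimality of $\ell$ forces this matching to be perfect in Case~1 and of size exactly $k$ in Case~2. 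Conversely, every matching of the right size can be assembled into a $\rho$-cycle cover in exactly $(n/2)!$ (respectively $k!$) ways, since the transit vertex $t_i$ fixes the starting point of its cycle and the only remaining choice is the ordered assignment of matched edges to cycle positions: a multinomial count gives $\binom{n/2}{w_1 - 1, \ldots, w_\ell - 1} \cdot \prod_i (w_i - 1)! = (n/2)!$. In Case~2, the $x$-weights on transit arcs are balanced so that each matching edge contributes $x^2$ to the cycle-cover weight; extracting $\coeff{x^{2k}}{\CC(G,\rho)}$ isolates the $k$-matching configurations and yields $k! \cdot \match(H,k)$, while Case~1 gives $\CC(G,\rho) = (n/2)! \cdot \PerfMatch(H)$ directly.

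The structural bullet points are immediate from the construction: $T$ is the odd-cycle transversal of cardinality $\ell$, every $\rho$-cycle cover uses exactly the $z(\lambda) - \|\theta\|$ self-looped padding vertices as singletons, and the $\geq d(\lambda)$ digons come from the rigid gadgets (plus the auxiliary $L$--$R$ digons in Case~2). The hard part will be ruling out spurious $\rho$-cycle covers outside the intended bijection --- for instance, odd cycles routing through the wrong transit vertex, or stray digons appearing between $L$ and $R$ in Case~1. This tightness check will rest on the arithmetic $2 d(\lambda) + z(\lambda) = |\lambda|$ leaving no slack in the vertex count, the minimality of $\ell$ saturating the accommodation inequality, and the fact that the odd parts of $\theta$ are precisely the maximal border-strip lengths of $\mu$, which rules out odd cycles of disallowed lengths. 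Once verified, the two identities follow.
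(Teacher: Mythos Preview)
Your Case~1 construction is essentially the paper's, and the counting argument (multinomial times ordering) matches. One small point: minimality of $\ell$ alone does \emph{not} force the matching size to be exactly $n/2$; the layer sizes of a staircase jump by $4$, so the minimal $\ell$ with $\|\theta\|\ge n+\ell$ may overshoot. The paper fixes this by first padding $H$ with isolated edges so that the accommodation is tight. Your ``extra vertices to absorb slack'' is too vague to do the same job.

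Your Case~2, however, has a genuine gap in the placement of the indeterminate $x$. You put weight $x$ on the transit arcs $v\to t_i$ and $t_i\to u$. But in \emph{any} cycle cover each of the $\ell$ transit vertices is visited exactly once, so exactly $2\ell$ transit arcs are used, and the total weight is always a constant times $x^{2\ell}$. Thus $\coeff{x^{2k}}{\CC(G,\rho)}=0$ whenever $k\neq\ell$, and $\ell$ is typically of order $\sqrt{k}$, not $k$. Your sentence ``each matching edge contributes $x^{2}$'' is simply not what your construction does: a cycle through $t_i$ carrying $w_i-1$ matching edges still contributes only $x^{2}$, not $x^{2(w_i-1)}$.

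The paper's mechanism for Case~2 is different and is the missing idea: for each vertex $v\in V(H)$ it adds a \emph{switch vertex} $s_v$ with a self-loop of weight $x$ and a digon between $v$ and $s_v$. A cycle cover then either uses the digon (so $v$ is unmatched) or uses the self-loop at $s_v$ (freeing $v$ to sit on an odd cycle). Exactly $2k$ self-loops of weight $x$ are used precisely when $2k$ vertices of $H$ are matched, which is what makes the $x^{2k}$ coefficient isolate $k$-matchings. Your ``unit-weight $L$--$R$ digons'' do not reproduce this selection mechanism, and without it the identity $\coeff{x^{2k}}{\CC(G,\rho)}=k!\cdot\match(H,k)$ fails.
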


\begin{proof}
Let $V(H)=L\cup R$ and write $\rho=(2^{d(\lambda)},\theta,1^{z(\lambda)-\left\Vert \theta\right\Vert })$
for the relevant $\ell$-layer onion of $\lambda$. By adding isolated
edges to $H$, we assume that $n/2$ (in the first case) or $k$ (in
the second case) is the maximum number of edges accommodated by $\lambda$.
For the \textbf{first part} of the lemma, the graph $G$ is defined
as follows:
\begin{enumerate}
\item Direct all edges in $H$ from $L$ to $R$ and add all edges $R\times L$.
\item Add a set of \emph{transit vertices} $T=\{t_{1},\ldots t_{\ell}\}$
and all edges in $R\times T$ and $T\times L$.
\item Add $d(\lambda)$ disjoint \emph{padding digons} and \emph{$z(\lambda)-\left\Vert \theta\right\Vert $
padding vertices} with self-loops.
\end{enumerate}
Any $\rho$-cycle cover of $G$ uses all padding digons and loops.
Any remaining odd-length cycle must use a vertex of $T$, as it would
otherwise be contained in a bipartite graph. This implies that $T$
is an odd-cycle transversal. It follows that the $\ell$ (odd-length)
parts in $\theta$ can only be accommodated by $\ell$ disjoint cycles
in $G$ that each include exactly one transit vertex. Hence, any $\rho$-cycle
cover $C$ in $G$ induces a perfect matching in $H$ when restricted
to edges from $L$ to $R$: Deleting transit vertices results in a
cover of $V(H)$ with paths of odd length $\geq1$ that start in $L$,
and deleting the edges from $R$ to $L$ then leaves us with a matching.

Conversely, every perfect matching in $H$ induces exactly $(n/2)!$
cycle covers of format $\rho$ in $G$: For $\theta=(2r_{1}+1,2r_{2}+1,\ldots)$,
there are ${n/2 \choose r_{1},\ldots,r_{\ell}}$ ways of partitioning
the $n/2$ matching edges into the cycles corresponding to transit
vertices $t_{1},\ldots t_{\ell}$, and there are $r_{i}!$ ways of
choosing an ordering of the edges and the $i$-th transit vertex within
the $i$-th cycle.\footnote{Each such ordering corresponds to a pair $(j,\sigma)$ for $j\in[r_{i}]$
and a cyclic permutation $\sigma$ of $[r_{j}]$. The index $j$ indicates
that the transit vertex $t_{i}$ is visited immediately after the
$j$-th edge, and $\sigma$ describes the order of the edges along
the cycle. There are $r_{i}\cdot(r_{i}-1)!=r_{i}!$ such pairs $(j,\sigma)$.} This concludes the first part of the lemma.

For the \textbf{second part}, we construct $G$ by performing steps
1 and 2 from above, followed by these steps:
\begin{enumerate}
\item[3.] For each vertex $v\in V(H)$, add a \emph{switch vertex} $s_{v}$
with a self-loop of weight $x$, and a \emph{switch digon} between
$v$ and $s_{v}$.
\item[4.] Add $d(\lambda)-(n-2k)$ \emph{padding digons} and \emph{$z(\lambda)-\left\Vert \theta\right\Vert -2k$
padding vertices} with self-loops.
\end{enumerate}
Again, any cycle cover $C$ of $G$ uses all padding elements. The
weight of $C$ is $x^{2k}$ iff it includes exactly $2k$ self-loops
at switch vertices; it then includes $n-2k$ switch digons touching
the remaining switch vertices, so it contains at least $d(\lambda)$
digons and $z(\lambda)-\left\Vert \theta\right\Vert $ self-loops.
As before, if $C$ has format $\rho$, then the $\ell$ remaining
odd-length cycles induce a $k$-matching in $H$ when restricted to
edges from $L$ to $R$, and any $k$-matching in $H$ can be extended
to $k!$ such cycle covers.
\end{proof}
The odd-cycle transversal of $G$ and special properties of staircase
characters ensure that we can determine $\CC(G,\rho)$ for an onion
$\rho$ of $\lambda$ by reduction to $\imm_{\lambda}(G)$. We prove
this in the next subsection.

\subsection{Staircase characters\label{subsec: staircase-characters}}

Given a partition $\lambda$ with staircase $\mu$, consider any partition
$\rho=(2^{d(\lambda)},\rho')$ obtained by peeling the maximum number
of dominos from $\lambda$, followed by some other partition $\rho'$
of the staircase $\mu$ thusly exposed. We show how to relate $\chi_{\lambda}(\rho)$
to $\chi_{\mu}(\rho')$ by means of \emph{domino tilings} of $\lambda/\mu$,
which are border strip tableaux that only contain the dominos $\tVDom$
and $\tHDom$.
\begin{defn}
A \emph{domino tiling }of a skew shape $\lambda/\mu$ is a border
strip tableau $T$ of $\lambda/\mu$ with format $(2^{\left\Vert \lambda/\mu\right\Vert /2})$.
The \emph{parity} of a domino tiling $T$ is the parity of the number
of $\tVDom$ in $T$.
\end{defn}

Note that the parity of a domino tiling $T$ is essentially its height
sign $\height(T)$: The parity of $T$ is even/odd iff the height
sign of $T$ is positive/negative. Curiously, these numbers do not
depend on $T$.
\begin{lem}
\label{lem: domino-tiling-sign}All domino tilings of a fixed shape
$\lambda/\mu$ have the same parity.
\end{lem}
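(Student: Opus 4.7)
The plan is to establish this by a classical row-parity coloring argument. The key observation is that each $\tHDom$ lies entirely within a single row and therefore covers two cells of the same row-parity, whereas each $\tVDom$ straddles two consecutive rows and therefore covers exactly one cell of each row-parity. This makes the count of vertical dominos modulo $2$ readable off from a purely region-dependent quantity.

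Concretely, I would color every cell $(i,j)\in\lambda/\mu$ by the parity of its row index $i$, and let $R$ denote the number of cells of $\lambda/\mu$ lying in odd-indexed rows. For an arbitrary domino tiling $T$, write $V(T)$ for the number of $\tVDom$ in $T$ and $H_{\mathrm{odd}}(T)$ for the number of $\tHDom$ sitting in odd rows. Counting contributions to $R$ tile by tile gives
\[
R \;=\; 2\,H_{\mathrm{odd}}(T) \;+\; V(T),
\]
since each odd-row horizontal domino contributes $2$, each even-row horizontal domino contributes $0$, and each vertical domino contributes exactly $1$ regardless of whether its upper cell sits in an odd or even row. Reducing modulo $2$ yields $V(T)\equiv R\pmod{2}$, and since $R$ depends only on the shape $\lambda/\mu$, the parity of $V(T)$ is an invariant of $\lambda/\mu$, as claimed.

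I do not foresee a real obstacle here: the argument is essentially a one-line parity count once the right coloring is fixed. The only items worth a sanity check are that the lemma's notion of parity (number of $\tVDom$ modulo $2$) matches what the argument controls, which is immediate from the definition, and that no connectedness hypothesis is required, which matters because skew shapes $\lambda/\mu$ need not be connected; both $R$ and $V(T)$ are additive over connected components, so the identity above can even be verified component by component if desired.
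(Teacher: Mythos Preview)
Your proof is correct and is essentially identical to the paper's own argument: the paper also colors the rows alternately and observes that horizontal dominos cover an even number of colored boxes while vertical dominos cover exactly one, so the number of vertical dominos agrees modulo $2$ with the number of colored boxes in $\lambda/\mu$. Your version is slightly more explicit in writing out $R = 2H_{\mathrm{odd}}(T) + V(T)$, but the underlying idea is the same row-parity count.
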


\begin{proof}
Paint the rows of $\lambda/\mu$ black and white in an alternating
way. In any domino tiling $T$, every vertical (horizontal) domino
contains an odd (even) number of white boxes. Hence, the number of
vertical dominos in $T$ agrees in parity with the number of white
boxes in $\lambda/\mu$, which does not depend on $T$.
\end{proof}
This elementary yet crucial observation gives the desired connection
between $\chi_{\lambda}(\rho)$ and $\chi_{\mu}(\rho')$.
\begin{lem}
\label{lem: stair-tiling-quotient}Let $\lambda$ be a partition with
staircase $\mu$, and let $\rho=(2^{d(\lambda)},\rho')$ for a partition
$\rho'$. Then we have $\chi_{\lambda}(\rho)\neq0$ iff $\chi_{\mu}(\rho')\neq0$.
\end{lem}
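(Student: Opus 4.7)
The plan is to apply the Murnaghan-Nakayama rule with a carefully chosen ordering of $\rho$, factor the resulting sum as a product, and then use Lemma~\ref{lem: domino-tiling-sign} to argue that the domino-factor is a nonzero signed count.

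First, I would invoke Theorem~\ref{thm: mn-rule} with the ordering $\kappa=(2^{d(\lambda)},\rho')$, that is, peeling the $d(\lambda)$ dominos before the parts of $\rho'$. Every $T\in\mathcal{B}(\lambda,\kappa)$ then splits canonically into (i) a border strip tableau $T_1$ of shape $\lambda/\lambda'$ with format $(2^{d(\lambda)})$, for some intermediate partition $\lambda'\subseteq\lambda$, and (ii) a border strip tableau $T_2\in\mathcal{B}(\lambda',\rho')$. The key observation is that $\lambda'$ is \emph{forced} to equal $\mu$: after peeling $d(\lambda)$ dominos from $\lambda$, the resulting shape has $z(\lambda)$ boxes and admits no further domino-peeling (else $d(\lambda)$ would not be maximal), so by Fact~\ref{fact: extract-staircase} it must be the unique staircase $\mu$ on $z(\lambda)$ boxes. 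Consequently $T_1$ ranges exactly over the domino tilings of $\lambda/\mu$, while $T_2$ ranges over $\mathcal{B}(\mu,\rho')$, and since $\height(T)=\height(T_1)\cdot\height(T_2)$, the sum factors as
\[
\chi_{\lambda}(\rho)\ =\ \Bigl(\sum_{T_1\text{ domino tiling of }\lambda/\mu}\height(T_1)\Bigr)\cdot\Bigl(\sum_{T_2\in\mathcal{B}(\mu,\rho')}\height(T_2)\Bigr).
\]
The second factor equals $\chi_{\mu}(\rho')$ by Theorem~\ref{thm: mn-rule}.

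Next, I would handle the first factor. By Lemma~\ref{lem: domino-tiling-sign}, all domino tilings of $\lambda/\mu$ share a common parity, so the first factor is $\pm D$, where $D$ is the total number of domino tilings of $\lambda/\mu$. Since at least one such tiling exists (namely, the one witnessing that $d(\lambda)$ dominos can be peeled from $\lambda$ down to $\mu$), we have $D\geq 1$, so the first factor is nonzero. Combining both observations yields $\chi_{\lambda}(\rho)=\pm D\cdot\chi_{\mu}(\rho')$, from which the desired equivalence $\chi_{\lambda}(\rho)\neq 0\iff\chi_{\mu}(\rho')\neq 0$ is immediate.

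The only subtle step is the claim that the intermediate shape $\lambda'$ is forced to be $\mu$; the rest is a routine regrouping of the Murnaghan-Nakayama sum. This forcing hinges on both the maximality in the definition of $d(\lambda)$ and the uniqueness clause in Fact~\ref{fact: extract-staircase}, and without it, one could worry about ``bad'' intermediate shapes polluting the factorization. Once this is nailed down, the sign-invariance from Lemma~\ref{lem: domino-tiling-sign} does all remaining heavy lifting, and no further representation-theoretic machinery is needed.
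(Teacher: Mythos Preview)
Your proof is correct and follows essentially the same approach as the paper: apply Murnaghan--Nakayama with the dominos first, observe (via Fact~\ref{fact: extract-staircase}) that the intermediate shape is forced to be $\mu$ so the sum factors as $c_{\lambda/\mu}\cdot\chi_{\mu}(\rho')$, and then use Lemma~\ref{lem: domino-tiling-sign} plus the existence of at least one domino tiling to conclude $c_{\lambda/\mu}\neq 0$. You spell out the forcing of $\lambda'=\mu$ more explicitly than the paper does, but the argument is the same.
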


\begin{proof}
As any way of peeling $d(\lambda)$ dominos from $\lambda$ results
in $\mu$, the Murnaghan-Nakayama rule (Theorem~\ref{thm: mn-rule})
shows that $\chi_{\lambda}(\rho)=c_{\lambda/\mu}\cdot\chi_{\mu}(\rho')$
with
\begin{equation}
c_{\lambda/\mu}=\sum_{\substack{\text{domino tiling}\\
T\text{ of }\lambda/\mu
}
}\height(T),\label{eq: def-domino-sum}
\end{equation}
By Lemma~\ref{lem: domino-tiling-sign}, each term in (\ref{eq: def-domino-sum})
has the same sign. Since $\lambda/\mu$ is obtained by peeling $d(\lambda)$
dominos from $\lambda$, there is at least one domino tiling, and
hence there is at least one term in the sum. It follows that $c_{\lambda/\mu}>0$,
thus proving the lemma.
\end{proof}
The last missing piece is to recall from Remark~\ref{rem: mn-rule-ordering}
that the staircase character $\chi_{\mu}(\rho')$ vanishes whenever
$\rho'$ contains an even part. This allows us to analyze cycle covers
in the particular graphs $G$ constructed in the last subsection:
Any such cycle cover (with the right number of self-loops and digons)
is counted by $\imm_{\lambda}(G)$ iff its format is an onion.
\begin{lem}
\label{lem: right-format}Let $\lambda\vdash n'$ be a partition and
let $\rho=(2^{d(\lambda)},\theta,1^{z(\lambda)-\left\Vert \theta\right\Vert })$
be the $\ell$-layer onion of $\lambda$. Let $G$ be an $n'$-vertex
digraph with an odd-cycle transversal of size $\ell$. For any cycle
cover $C$ in $G$ that contains at least $d(\lambda)$ digons and
exactly $z(\lambda)-\left\Vert \theta\right\Vert $ self-loops: If
$\chi_{\lambda}(C)\neq0$, then the format of $C$ is $\rho$.
\end{lem}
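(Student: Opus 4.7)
My plan is to analyze the cycle type $\sigma$ of $C$ in stages, funneling $\chi_\lambda(\sigma)$ through the Murnaghan--Nakayama rule down to the staircase character $\chi_\mu$ and then imposing combinatorial constraints from the odd-cycle transversal. First I pin down the $2$-part count of $\sigma$: the hypothesis gives at least $d(\lambda)$, and I claim it is exactly $d(\lambda)$. To see this, reorder $\sigma$ placing all $2$-parts first and apply the Murnaghan--Nakayama rule; any border-strip tableau under this ordering must start by peeling successive dominos, and by the definition of $d(\lambda)$ together with Fact~\ref{fact: extract-staircase} (the shape obtained by peeling $d(\lambda)$ dominos from $\lambda$ is uniquely the staircase $\mu$), no more than $d(\lambda)$ dominos can be peeled successively. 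Thus a format with strictly more than $d(\lambda)$ leading $2$-parts admits no border-strip tableau, forcing $\chi_\lambda(\sigma)=0$; non-vanishing of $\chi_\lambda(\sigma)$ then gives the exact count.

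Writing $\sigma=(2^{d(\lambda)},\sigma')$, Lemma~\ref{lem: stair-tiling-quotient} yields $\chi_\lambda(\sigma)=c_{\lambda/\mu}\cdot\chi_\mu(\sigma')$ with $c_{\lambda/\mu}\neq 0$ by the uniform-sign statement of Lemma~\ref{lem: domino-tiling-sign}; hence $\chi_\mu(\sigma')\neq 0$. Remark~\ref{rem: mn-rule-ordering} then rules out even parts in $\sigma'$, and combining with the exact self-loop count I write $\sigma'=(\theta',1^{z(\lambda)-\|\theta\|})$, where $\theta'$ is a partition of $\|\theta\|$ with all parts odd and $\geq 3$. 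Each part of $\theta'$ corresponds to a cycle of length $\geq 3$ in $C$, which is odd because $\sigma'$ has no even parts and hence must intersect the odd-cycle transversal $T$; vertex-disjointness of cycle-cover cycles yields $|\theta'|\leq|T|=\ell$.

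The final, crux step is to identify $\theta'=\theta=(2w-1,\,2w-5,\,\ldots,\,2w-4\ell+3)$ with $w=w(\lambda)$. My plan is to apply Murnaghan--Nakayama to $\chi_\mu(\sigma')$ with the parts of $\theta'$ placed first in weakly decreasing order, and exploit the fact that the rim of the staircase $\mu$ of width $w$ is the unique border strip of the maximum odd length $2w-1$; peeling it yields the staircase of width $w-2$, which sets up a recursion. Under the combined constraints $|\theta'|\leq\ell$, $\|\theta'\|=\|\theta\|$, odd parts $\geq 3$, and non-vanishing character, the recursion should force $\theta'_i=2w-4i+3$ for $i=1,\ldots,\ell$, matching $\theta$ and concluding that $C$'s format is $\rho$. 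I anticipate this last step to be the main obstacle: while the rim bound $\theta'_1\leq 2w-1$ is immediate, excluding smaller first peels requires a careful height-parity analysis of border-strip tableaux inside the staircase, showing that alternative rim-slicings either cancel in the signed sum or produce partitions with too many parts to meet $|\theta'|\leq\ell$. This tableau bookkeeping is where the bulk of the technical work will live.
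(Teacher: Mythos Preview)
Your approach mirrors the paper's proof very closely: reduce to $\chi_\mu(\sigma')\neq 0$ via Lemma~\ref{lem: stair-tiling-quotient}, eliminate even parts by Remark~\ref{rem: mn-rule-ordering}, bound the number of non-singleton odd parts by~$\ell$ using the odd-cycle transversal, and then pin down the non-singleton part $\theta'$ as~$\theta$. The paper dispatches that last step in a single sentence (``This can only be achieved by peeling $\ell$ maximal-length border strips from $\mu$''); you are right to flag it as the crux and to anticipate that it needs real work.

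Unfortunately, the plan you sketch for that step cannot succeed as written. The constraints you isolate---$|\theta'|\le\ell$, $\|\theta'\|=\|\theta\|$, all parts odd and $\ge 3$, and $\chi_\mu(\theta',1^{z(\lambda)-\|\theta\|})\neq 0$---do \emph{not} force $\theta'=\theta$. Take $\mu=(5,4,3,2,1)$ and $\ell=2$, so that $\theta=(9,5)$ and $\|\theta\|=14$. Then $\theta'=(7,7)$ meets every constraint: two odd parts $\ge 3$ summing to $14$, and a direct Murnaghan--Nakayama computation (order $(7,7,1)$; there are exactly two border-strip tableaux, each of height sign $-1$) gives $\chi_\mu(7,7,1)=-2\neq 0$. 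So neither the hoped-for cancellation nor the ``too many parts'' bound rules out $(7,7)$; a graph consisting of two disjoint directed $7$-cycles, $d(\lambda)$ padding digons, and one self-loop vertex has an odd-cycle transversal of size~$2$ and a cycle cover $C$ with $\chi_\lambda(C)\neq 0$ but format $\neq\rho$. The paper's one-line justification suffers from the same issue, so the lemma as stated appears to require either an additional hypothesis tied to the specific graphs of Lemma~\ref{lem: staircase-graph} or a genuinely different argument for the final identification $\theta'=\theta$.
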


\begin{proof}
Let $C$ be a cycle cover in $G$ with format $\beta=(2^{d(\lambda)},\rho')$
and $\chi_{\lambda}(\beta)\neq0$. By Lemma~\ref{lem: stair-tiling-quotient},
we have $\chi_{\mu}(\rho')\neq0$, where $\mu$ is the staircase of
$\lambda$. Since $G$ has an odd-cycle transversal of size $\ell$,
there are at most $\ell$ non-singleton odd parts in $\rho'$, as
every non-singleton odd cycle uses exactly one transversal vertex.
Since $\rho'$ contains $z(\lambda)-\left\Vert \theta\right\Vert $
singletons, the non-singleton odd parts contain $\left\Vert \theta\right\Vert $
boxes in total. This can only be achieved by peeling $\ell$ maximal-length
border strips from $\mu$, so the cycle format of the non-singleton
parts is $\theta$. Thus, we have $\rho'=(\theta,1^{z(\lambda)-\left\Vert \theta\right\Vert })$.
\end{proof}
\begin{cor}
\label{cor: onion-formats}Let $\lambda\vdash n'$ and let $\rho=(2^{d(\lambda)},\theta,1^{z(\lambda)-\left\Vert \theta\right\Vert })$
be the $\ell$-layer onion of $\lambda$. Let $G$ be an $n'$-vertex
graph with an odd-cycle transversal of size $\ell$. For $t\in\mathbb{N}$,
if every cycle cover whose weight is a constant multiple of $x^{t}$
contains at least $d(\lambda)$ digons and $z(\lambda)-\left\Vert \theta\right\Vert $
self-loops, then 
\[
\coeff{x^{t}}{\imm_{\lambda}(G)}=\chi_{\lambda}(\rho)\cdot\coeff{x^{t}}{\CC(G,\rho)}
\]
with $\chi_{\lambda}(\rho)\neq0$.
\end{cor}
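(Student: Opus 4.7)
The plan is to expand the immanant in the usual cycle-cover form, extract the $x^t$ coefficient, use the hypothesis together with Lemma~\ref{lem: right-format} to restrict attention to cycle covers of format $\rho$, and finally verify non-vanishing of $\chi_\lambda(\rho)$ by reducing to the staircase via Lemma~\ref{lem: stair-tiling-quotient} and checking the resulting Murnaghan-Nakayama sum.

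First I would write
\[
\imm_\lambda(G) \;=\; \sum_{C}\, \chi_\lambda(\beta_C)\, \prod_{e \in C} w(e),
\]
where $C$ ranges over the cycle covers of $G$ and $\beta_C$ denotes its cycle format. Extracting the coefficient of $x^t$ on both sides restricts the sum to those cycle covers whose weight is a constant multiple of $x^t$. By hypothesis, each such $C$ contains at least $d(\lambda)$ digons and exactly $z(\lambda)-\left\Vert \theta\right\Vert$ self-loops, so Lemma~\ref{lem: right-format} applies and forces $\beta_C = \rho$ whenever $\chi_\lambda(\beta_C) \neq 0$. Pulling the resulting common character factor $\chi_\lambda(\rho)$ out of the sum would then yield the claimed identity
\[
\coeff{x^t}{\imm_\lambda(G)} \;=\; \chi_\lambda(\rho)\cdot \coeff{x^t}{\CC(G,\rho)}.
\]

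The remaining task is to verify $\chi_\lambda(\rho) \neq 0$. By Lemma~\ref{lem: stair-tiling-quotient} this reduces to showing $\chi_\mu(\rho') \neq 0$, where $\mu$ is the staircase of $\lambda$ and $\rho' = (\theta, 1^{z(\lambda)-\left\Vert \theta\right\Vert})$. My plan is to invoke the Murnaghan-Nakayama rule with the ordering $\kappa$ that lists the parts of $\theta$ in decreasing order, followed by the $z(\lambda) - \left\Vert \theta\right\Vert$ singletons. The key observation is that peeling a maximal-length border strip from a staircase is forced at each step, since the south-east boundary of a staircase is itself a single hook, and each peeling leaves a smaller staircase. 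The subsequent singleton peelings contribute height sign $+1$ each, so every tableau in $\mathcal{B}(\mu,\kappa)$ will share one and the same height sign; by the very construction of the onion, at least one such tableau exists, so the sum cannot vanish.

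I expect the main obstacle to be this final non-vanishing check: one has to argue carefully that the maximal border strips are uniquely determined stage by stage and that the reduced shapes are again staircases, so that no sign cancellations occur in the Murnaghan-Nakayama sum. Once this structural observation is in place, the rest is a routine assembly of Lemmas~\ref{lem: right-format} and~\ref{lem: stair-tiling-quotient} with the cycle-cover expansion of the immanant.
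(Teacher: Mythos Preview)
Your proposal is correct and follows essentially the same route as the paper: derive the displayed identity from the cycle-cover expansion and Lemma~\ref{lem: right-format}, then reduce $\chi_\lambda(\rho)\neq 0$ to $\chi_\mu(\rho')\neq 0$ via Lemma~\ref{lem: stair-tiling-quotient} and verify the latter by observing that the $\theta$-peelings from the staircase are forced while the remaining singletons each contribute height sign $+1$. The paper's proof is terser but uses exactly these ingredients in the same order.
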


\begin{proof}
The equation follows directly from Lemma~\ref{lem: right-format}
and the requirement on $G$. To show $\chi_{\lambda}(\rho)\neq0$,
it suffices by Lemma~\ref{lem: stair-tiling-quotient} to show $\chi_{\mu}(\rho')\neq0$
for $\rho'=(\theta,1^{z(\lambda)-\left\Vert \theta\right\Vert })$.
There is exactly one way of peeling $\theta$ from $\rho'$, and each
of the $z(\lambda)-\left\Vert \theta\right\Vert $ singletons peeled
from the resulting staircase incurs positive height sign. It follows
that all border strip tableaux of format $\rho'$ in $\mu$ contribute
to $\chi_{\mu}(\rho')$ with the same sign.
\end{proof}
Note that we can consider $G$ to be unweighted and invoke the corollary
with $t=0$. This yields the unweighted setting (in $x$) from the
first part of Lemma~\ref{lem: staircase-graph}.

\subsection{\label{subsec: staircase-hardness}Reductions}

We combine the results from the previous sections into reductions
from counting matchings to evaluating immanants for partitions with
large staircases.
\begin{lem}
\label{lem: finalred-staircase}The following can be achieved in polynomial
time and with polynomial-sized arithmetic circuits: 
\begin{enumerate}
\item Given a bipartite $n$-vertex graph $H$ and a partition $\lambda$
with $w(\lambda)\geq2\sqrt{n}$, compute a digraph $G$ and a number
$c\in\mathbb{Q}$ such that $\PerfMatch(H)=c\cdot\imm_{\lambda}(G)$.
\item Given a bipartite $n$-vertex graph $H$ and $k\in\mathbb{N}$, as
well as a partition $\lambda$ with $w(\lambda)\geq4\sqrt{k}$ and
$d(\lambda)\geq n-2k$, compute a digraph $G$ and a number $c\in\mathbb{Q}$
such that $\match(H,k)=c\cdot\coeff{x^{k}}{\imm_{\lambda}(G)}$.
\end{enumerate}
\end{lem}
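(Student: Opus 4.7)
The plan is to combine the digraph construction of Lemma~\ref{lem: staircase-graph} with the character identity of Corollary~\ref{cor: onion-formats}, once we verify that the hypotheses on $w(\lambda)$ (and $d(\lambda)$) ensure that $\lambda$ admits an onion $\rho$ with the right properties.

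First, I would carry out the arithmetic feasibility check. Peeling $\ell$ successive maximal-length border strips from a staircase of width $w$ produces strips of lengths $2w-1, 2w-5, \ldots, 2w-4\ell+3$, so the $\ell$-layer onion has $\|\theta\| = \ell(2w - 2\ell + 1)$, and the accommodation condition $\|\theta\| \geq 2k + \ell$ reduces to $\ell(w-\ell) \geq k$. For Part~1, with $k = n/2$ and $w(\lambda) \geq 2\sqrt{n}$, choosing $\ell$ near $w/2$ gives $\ell(w-\ell) \geq n \geq n/2$. For Part~2, with $w(\lambda) \geq 4\sqrt{k}$, a moderate $\ell$ of order $\sqrt{k}$ satisfies $\ell(w-\ell)\geq k$, while simultaneously keeping $\|\theta\|$ only slightly larger than $2k$; since $z(\lambda) = w(w+1)/2 \geq 8k$, we have $z(\lambda)-\|\theta\| \geq 2k$ singletons as well. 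In each case, I take the onion $\rho$ with the minimum such $\ell$, which is exactly the one required by Lemma~\ref{lem: staircase-graph}.

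Then I feed this $\rho$ into Lemma~\ref{lem: staircase-graph} to obtain the digraph $G$: it has an odd-cycle transversal of size $\ell$, its relevant cycle covers contain exactly $z(\lambda)-\|\theta\|$ self-loops and at least $d(\lambda)$ digons, and it satisfies either $\CC(G,\rho) = (n/2)!\cdot \PerfMatch(H)$ or the analogous identity at the correct coefficient in $x$ for Part~2. Corollary~\ref{cor: onion-formats}, applied at the matching coefficient $t$, then yields
\[
\coeff{x^t}{\imm_{\lambda}(G)} \;=\; \chi_{\lambda}(\rho)\cdot\coeff{x^t}{\CC(G,\rho)}
\]
with $\chi_{\lambda}(\rho) \neq 0$. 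Chaining this with the equation from Lemma~\ref{lem: staircase-graph} and solving for $\PerfMatch(H)$ or $\match(H,k)$ isolates the scalar
\[
c \;=\; \bigl(\chi_{\lambda}(\rho)\cdot(n/2)!\bigr)^{-1} \quad\text{or}\quad c \;=\; \bigl(\chi_{\lambda}(\rho)\cdot k!\bigr)^{-1},
\]
which is polynomial-time computable because irreducible characters of $S_n$ can be evaluated in polynomial time. The same chain of equalities can be realized by a polynomial-size arithmetic circuit, handling the algebraic version of the statement.

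The main obstacle is the compound arithmetic check in Part~2, where $\ell$ must simultaneously clear the accommodation threshold $\ell(w-\ell) \geq k$ and keep $\|\theta\| \leq z(\lambda) - 2k$; the quadratic nature of both bounds is what forces the slightly loose hypothesis $w(\lambda) \geq 4\sqrt{k}$. Once an admissible $\ell$ is pinned down, the rest of the argument is essentially mechanical: the reduction is a black-box application of the machinery built in Sections~\ref{subsec: onion}--\ref{subsec: staircase-characters}.
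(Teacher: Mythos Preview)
Your proposal is correct and follows essentially the same route as the paper: verify via the explicit formula $\|\theta\|=\ell(2w-2\ell+1)$ that the hypotheses on $w(\lambda)$ (and $d(\lambda)$) yield an onion meeting the preconditions of Lemma~\ref{lem: staircase-graph}, then combine that lemma with Corollary~\ref{cor: onion-formats} and divide through by the nonzero character value. Your arithmetic is in fact slightly more explicit than the paper's, which simply asserts the feasibility checks without writing out $\ell(w-\ell)\geq k$.
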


\begin{proof}
For the \textbf{first part}, note that $w(\lambda)\geq2\sqrt{n}$
implies that $\lambda$ admits a $t$-layer onion $\rho=(2^{d(\lambda)},\theta,1^{z(\lambda)-\left\Vert \theta\right\Vert })$
for $t=\lfloor\sqrt{n}\rfloor$. For large enough $n$, the number
of edges accommodated by this onion is at least 
\[
\frac{{2t+1 \choose 2}-t}{2}=t^{2}\geq n-2\sqrt{n}+1\geq n/2.
\]
Hence, Lemma~\ref{lem: staircase-graph} allows us to construct a
digraph $G$ with $\CC(G,\rho)=(n/2)!\cdot\PerfMatch(H)$ such that
$G$ admits an odd-cycle transversal of cardinality $t$, and every
cycle cover in $G$ contains $z(\lambda)-\left\Vert \theta\right\Vert $
self-loops and at least $d(\lambda)$ digons. By Corollary~\ref{cor: onion-formats},
we then have $\imm_{\lambda}(G)=\chi_{\lambda}(\rho)\cdot\CC(G,\rho)$
with $\chi_{\lambda}(\rho)\neq0$. It follows from the last two equations
that 
\[
\PerfMatch(H)=\frac{\imm_{\lambda}(G)}{(n/2)!\cdot\chi_{\lambda}(\rho)}.
\]
Note that $\chi_{\lambda}(\rho)$ can be computed in polynomial time
and with polynomial-sized arithmetic circuits, see Proposition~7.4
in \cite{DBLP:books/daglib/0025071}.

For the \textbf{second part}, note that $w(\lambda)\geq4\sqrt{k}$
implies that $\lambda$ admits an onion $\rho=(2^{d(\lambda)},\theta,1^{z(\lambda)-\left\Vert \theta\right\Vert })$
that accommodates $\geq k$ edges. Additionally, $w(\lambda)\geq4\sqrt{k}$
also implies that $z(\lambda)-\left\Vert \theta\right\Vert \geq2k$.
Since $d(\lambda)\geq n-2k$ holds by assumption, all conditions for
the second part of Lemma~\ref{lem: staircase-graph} are thus fulfilled,
and we can construct a graph $G$ such that $\coeff{x^{2k}}{\CC(G,\rho)}=k!\cdot\match(H,k)$
and any cycle cover of weight $x^{2k}$ in $x$ contains exactly $z(\lambda)-\left\Vert \theta\right\Vert $
self-loops and at least $d(\lambda)$ digons. Corollary~\ref{cor: onion-formats}
then implies that $\coeff{x^{2k}}{\imm_{\lambda}(G)}=\chi_{\lambda}(\rho)\cdot\coeff{x^{2k}}{\CC(G,\rho)}$
with $\chi_{\lambda}(\rho)\neq0$. We conclude that 
\[
\match(H,k)=\frac{\coeff{x^{2k}}{\imm_{\lambda}(G)}}{k!\cdot\chi_{\lambda}(\rho)}.
\]
This proves the lemma.
\end{proof}

\section{Exploiting non-vanishing tetrominos\label{sec: tetrominos}}

We show how to count $k$-matchings with access to $\lambda$-immanants
for partitions $\lambda$ with large non-vanishing tetromino number
$s(\lambda)$: In Section~\ref{subsec: tetro-construct}, we use
edge gadgets to construct particular immanants $\imm_{\lambda}(G)$
that count matchings up to a multiplicative constant factor. We then
show, in Section~\ref{subsec: tetro-char}, that the factor arising
in the above construction is non-zero. To this end, we prove that
the character $\chi_{\lambda}$ does not vanish on a particular partition
product. Finally, we combine these insights in Section~\ref{subsec: tetro-hardness}
to obtain a reduction from counting matchings.

\subsection{\label{subsec: tetro-construct}Main construction}

Consider the edge gadget $Q$ depicted on page~\pageref{fig: eq-gadget}.
Intuitively speaking, this graph fragment ensures that unwanted cycle
covers in $G$ annihilate in $\imm_{\lambda}(G)$. In the remaining
\emph{consistent} cycle covers $C$, the edges within each edge gadget
$Q$ cover either \emph{both} endpoints (in an active state, as listed
in Figure~\ref{fig: eq-states}) or \emph{none} of them (in a passive
state). In particular, this allows us to interpret an active gadget
$Q$ as a matching edge between its endpoints, since the active state
prevents any edge outside of $Q$ from being incident with the endpoints
of $Q$. A similar approach was taken by the author together with
Bläser~\cite{DBLP:conf/mfcs/BlaserC11} to establish hardness of
the so-called \emph{cover polynomial}.
\begin{lem}
\label{lem: gadget-consistent}Let $G$ be a directed graph containing
copies $Q_{1},\ldots,Q_{t}$ of the edge-gadget $Q$, with endpoints
$e_{i}=\{u_{i},v_{i}\}$ for $i\in[t]$, such that distinct edge-gadgets
intersect only at endpoints. Let $\mathcal{C}^{*}(G)$ denote the
set of \emph{consistent} cycle covers $C$ in $G$: In such cycle
covers, the restriction $C\cap E(Q_{i})$ for $i\in[t]$ is an active
or passive state, as depicted in Figure~\ref{fig: eq-states}. Then
we have
\[
\imm_{\lambda}^{*}(G):=\sum_{C\in\mathcal{C}^{*}(G)}\chi_{\lambda}(C)\prod_{e\in C}w(e)\ =\ \imm_{\lambda}(G).
\]
\end{lem}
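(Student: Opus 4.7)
The plan is to establish $\imm_\lambda^*(G) = \imm_\lambda(G)$ by means of a sign-reversing, format-preserving involution on the set of \emph{inconsistent} cycle covers $\mathcal{C}(G) \setminus \mathcal{C}^*(G)$, i.e.\ those cycle covers $C$ in which the restriction $C \cap E(Q_i)$ of some gadget $Q_i$ is neither active nor passive. Since $\chi_\lambda$ is a class function, cancellation requires the involution to preserve the cycle format $\rho(C)$, so that the weights $\chi_\lambda(C) \prod_e w(e)$ of paired covers differ only in sign and cancel in the defining sum of $\imm_\lambda(G)$. The remaining consistent covers are precisely those summed over by $\imm_\lambda^*(G)$, so the two quantities are equal.

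To construct the involution $\phi$, I would fix the ordering $Q_1, \ldots, Q_t$ and, given an inconsistent cover $C$, locate the smallest index $i$ at which the local restriction $C \cap E(Q_i)$ fails to be one of the states from Figure~\ref{fig: eq-states}. Within the gadget $Q_i$ I would exhaustively enumerate all \emph{local} edge configurations compatible with a cycle cover of $G$, i.e.\ all choices of incoming and outgoing edges at $u_i$, $v_i$ together with every internal vertex having in-degree and out-degree one. The active and passive states account for a subset of these local configurations; the remaining ``bad'' configurations are then paired up so that the two partners in a pair (i) use exactly the same external interface at $u_i$ and $v_i$, ensuring that swapping one for the other inside $C$ still yields a cycle cover of $G$ and keeps all gadgets $Q_j$ for $j \neq i$ unchanged; (ii) differ by exactly one use of the $-1$-weighted edge, producing the sign reversal $\prod_e w(\phi(C)_e) = -\prod_e w(C_e)$; and (iii) induce the same permutation of the internal and boundary vertices of $Q_i$, so the cycle format of $C$ is preserved.

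Because the swap is confined to $E(Q_i)$ and the bad configurations at all gadgets $Q_j$ with $j<i$ are left untouched, the index of the first bad gadget for $\phi(C)$ is still $i$. Hence applying $\phi$ twice returns to $C$, so $\phi$ is an involution. Combining the three bullet-points, each pair $\{C, \phi(C)\}$ contributes
\[
\chi_\lambda(C)\prod_{e\in C} w(e) + \chi_\lambda(\phi(C))\prod_{e \in \phi(C)} w(e) = 0
\]
to $\imm_\lambda(G)$. Subtracting these vanishing contributions leaves exactly the sum over $\mathcal{C}^*(G)$, proving the lemma.

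The main obstacle is the case analysis at a single gadget $Q_i$: one must enumerate the inconsistent local states and, for each, exhibit a canonical partner obtained by toggling a small set of edges that includes the unique $-1$-weighted edge of $Q_i$, while verifying that this toggle (a) preserves the local cycle structure (so no cycles of $C$ are split or merged), and (b) reuses the same external endpoints. The particular design of the edge gadget in Figure~\ref{fig: eq-gadget}, with the $-1$ edge positioned to offer exactly one such alternative routing in each bad configuration, is what makes this case analysis go through; the rest of the proof is bookkeeping.
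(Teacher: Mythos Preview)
Your proposal is correct and follows essentially the same argument as the paper: locate the first inconsistent gadget, swap its local configuration for a canonical partner with the same external interface and opposite weight, and observe that this is a fixed-point-free, format-preserving involution on $\mathcal{C}(G)\setminus\mathcal{C}^*(G)$. One minor caveat: your condition (iii) cannot hold literally (two distinct edge-sets cannot induce the \emph{same} permutation), and your condition (a) that no cycles are split or merged is stronger than needed; the paper only verifies, via the case analysis in Figure~\ref{fig: eq-states}, that the partner has the same cycle \emph{format}, which is all that $\chi_\lambda$ sees.
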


\begin{proof}
For $i\in[t]$, we say that edges from $E(G)\setminus E(Q_{i})$ are
$i$-external. For any cycle cover $C\in\mathcal{C}(G)$, the total
number of $i$-external incoming edges at the endpoints $u_{i}$ and
$v_{i}$ of $Q_{i}$ equals the total number of $i$-external outgoing
edges at $u_{i}$ and $v_{i}$: Otherwise, the vertices of $Q_{i}$
cannot be covered by cycles. This gives six possible in/out-degree
combinations at $u_{i}$ and $v_{i}$. 
\begin{figure}
\centering \def\svgwidth{16cm} 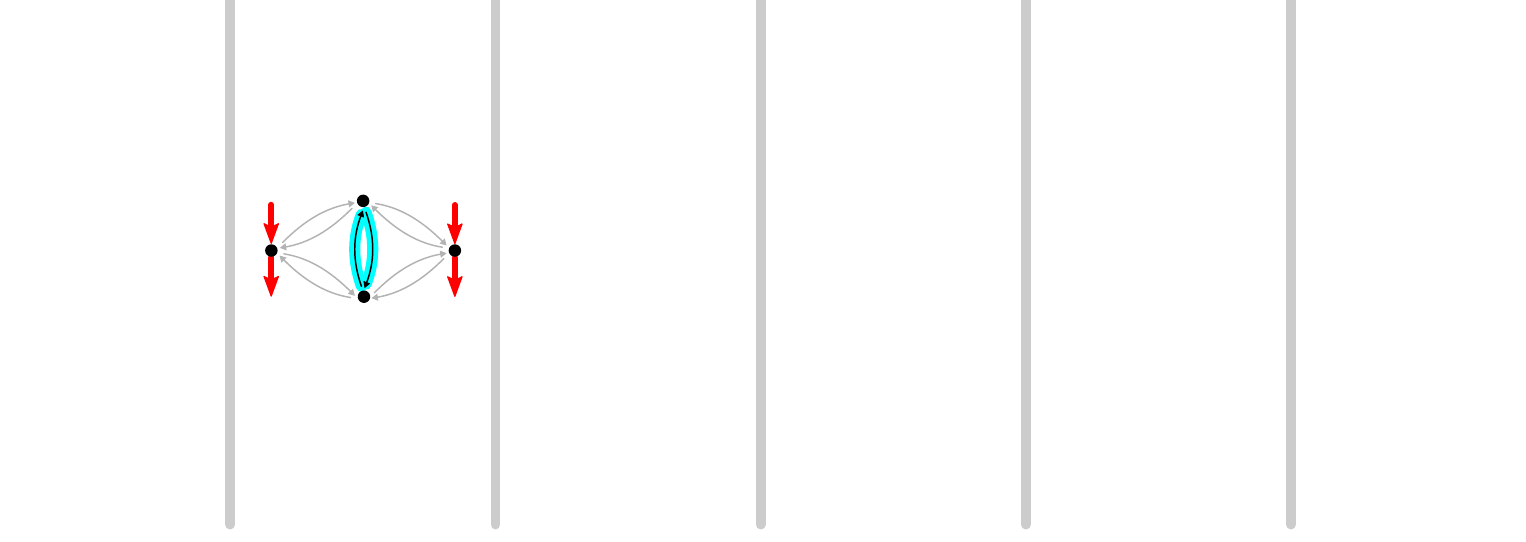

\caption{\label{fig: eq-states}Each column lists the edge-sets $C_{i}$ that
can arise for fixed combinations of endpoint in/out-degrees at gadget
$Q_{i}$ in $E(G)\setminus E(Q_{i})$. The first column lists \emph{active}
states, the second column shows the \emph{passive} state. All other
states come with annihilating partners and cancel out, as shown in
Lemma~\ref{lem: gadget-consistent}.}
\end{figure}
 For each of these combinations, the possible restrictions $C\cap E(Q_{i})$
are listed in Figure~\ref{fig: eq-states}.

Using this list, we show that the set $\mathcal{C}\setminus\mathcal{C}^{*}$
of non-consistent cycle covers can be partitioned into pairs $\{C,\overline{C}\}$
whose constituents have equal cycle formats $\rho(C)=\rho(\overline{C})$
and opposite weights $w(C)=-w(\overline{C})$. Indeed, given $C\in\mathcal{C}\setminus\mathcal{C}^{*}$,
let $i\in[t]$ be minimal such that $C_{i}=C\cap E(Q_{i})$ is not
consistent. Then $C_{i}$ corresponds to one of the edge-sets in the
last four columns of Figure~\ref{fig: eq-states}. Exchanging $C_{i}$
for the other edge-set in its column yields a cycle cover $\overline{C}\in\mathcal{\mathcal{C}\setminus\mathcal{C}^{*}}$
of the same cycle format as $C$, but with opposite weight. We thus
have $\chi_{\lambda}(C)\cdot w(C)=-\chi_{\lambda}(\overline{C})\cdot w(\overline{C})$.
As the map from $C$ to $\overline{C}$ induces a fixed-point free
involution on $\mathcal{C}\setminus\mathcal{C}^{*}$, the pairs $\{C,\overline{C}\}$
indeed partition $\mathcal{C}\setminus\mathcal{C}^{*}$, and we conclude
that
\[
\imm_{\lambda}(G)=\sum_{C\in\mathcal{C}^{*}}\chi_{\lambda}(C)\cdot w(C)+\underbrace{\sum_{C\in\mathcal{C}\setminus\mathcal{C}^{*}}\chi_{\lambda}(C)\cdot w(C)}_{=0},
\]
which proves the lemma.
\end{proof}
Using this lemma, we show how to transform instances $(H,k)$ for
$\match$ into digraphs $G$ such that $\imm_{\lambda}(G)$ equals
$\match(H,k)$ up to a multiplicative constant. Our constructions
only rely on non-vanishing tetrominos and dominos; the staircase of
$\lambda$ is ``discarded'' by introducing $z(\lambda)$ padding
vertices with self-loops. The constant arising in this construction
needs some attention. To define it, consider the partition products
$\{(2^{2}),(4)\}^{\times s}$ induced by the cycle formats of active
states; we pad the partitions in this product to partitions of $n'$
with $d-2s$ dominos and $z(\lambda)$ singletons. Formally, for fixed
$\lambda\vdash n'$ and $s\leq s(\lambda)$, define 
\begin{equation}
\theta_{s}:=\{(2^{2}),(4)\}^{\times s}\times\{(2^{d(\lambda)-2s},1^{z(\lambda)})\}.\label{eq: theta_s}
\end{equation}
In the next subsection, we then establish the crucial fact that $\chi_{\lambda}(\theta_{s})\neq0$
for relevant choices of $s$.
\begin{lem}
\label{lem: skew-G-perf}Let $H$ be a graph with $n$ vertices and
$m$ edges and let $\lambda\vdash n'$ be a partition.
\begin{enumerate}
\item If $\lambda$ has skew tetromino number $s(\lambda)\geq n/2$ and
$m-n/2$ additional dominos, i.e., $d(\lambda)\geq n/2+m$, then we
can construct an $n'$-vertex graph $G$ in polynomial time such that
\begin{equation}
\imm_{\lambda}(G)=\underbrace{(-1)^{m-n/2}\cdot2^{n/2}\cdot\chi_{\lambda}(\theta_{n/2})}_{=:c_{1}}\cdot\PerfMatch(H).\label{eq: skewgraph-eq-perf}
\end{equation}
\item For any $k\leq\frac{s(\lambda)}{3}$ such that $d(\lambda)\geq m+n+2kn-5k$:
We can construct an $n'$-vertex graph $G$ in polynomial time such
that
\[
\imm_{\lambda}(G)=\underbrace{(-1)^{m+2kn-3k}\cdot(2k)!\cdot2^{3k}\cdot\chi_{\lambda}(\theta_{3k})}_{=:c_{2}}\cdot\match(H,k).
\]
\end{enumerate}
\end{lem}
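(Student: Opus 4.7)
The plan is to reduce $\match$ (and $\PerfMatch$) to $\imm_\lambda$ by replacing each edge of $H$ with a copy of the edge gadget $Q$ from page~\pageref{fig: eq-gadget} and padding the result so that its cycle formats align with $\theta_s$ as defined in (\ref{eq: theta_s}). In both parts, I construct $G$ by replacing each edge $e=uv$ of $H$ with a copy $Q_e$ of $Q$ that shares only $u,v$ with its neighbors, adding $z(\lambda)$ isolated vertices carrying self-loops (to absorb the singletons in $\theta_s$), and adding a suitable number of isolated padding digons (to absorb the surplus dominos). The total vertex count is arranged to be exactly $n' = 2d(\lambda) + z(\lambda)$, and the cardinality hypotheses on $d(\lambda)$ are designed to make this possible. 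By Lemma~\ref{lem: gadget-consistent}, $\imm_\lambda(G)$ restricts without loss to \emph{consistent} cycle covers, in which every $Q_e$ is either in an active state (cycle format $(2^2)$ or $(4)$ on its four vertices) or in the passive state (a digon on its two internal vertices, contributing a sign factor $-1$).

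For Part~1, every consistent cycle cover is in one-to-one correspondence with a choice of perfect matching $M$ of $H$ together with a choice of active-state realization on each of the $n/2$ active gadgets $Q_e$ for $e\in M$. For a fixed $M$, summing $\chi_\lambda$ over all active-state realizations produces a character sum over the partition product $\{(2^2),(4)\}^{\times n/2}$ concatenated with the fixed $d(\lambda)-n$ passive digons plus padding dominos and the $z(\lambda)$ singletons from self-loops. By Lemma~\ref{lem: mn-rule extended} this sum equals $\chi_\lambda(\theta_{n/2})$, and the $2^{n/2}$ factor records the two directed-orientation realizations of each active format. Each passive gadget contributes an overall factor $-w(e)$ and each active gadget contributes $w(e)$, so that after summing over $M$ the product of edge weights reassembles $\PerfMatch(H)$ and the claimed constant $c_1 = (-1)^{m-n/2}\cdot 2^{n/2}\cdot \chi_\lambda(\theta_{n/2})$ follows.

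For Part~2, the construction must additionally distinguish $k$ matching edges from the remaining $m-k$ and, crucially, $2k$ matched vertices from $n-2k$ unmatched ones. The plan is to attach at each $v\in V(H)$ a small vertex-selector gadget which either (a) forces exactly one incident edge gadget into an active state, contributing two further active non-vanishing tetrominos per matched vertex, or (b) absorbs $v$ into its own internal digon, keeping $v$ unmatched. A $k$-matching $M$ in $H$ then corresponds to $k$ active edge gadgets and $2k$ active vertex-selector gadgets for a total of $3k$ active non-vanishing tetrominos, matching the partition product in $\theta_{3k}$. Padding brings the total digons to $d(\lambda)-6k$ and the total singletons to $z(\lambda)$, which is supported by the hypothesis $d(\lambda)\geq m+n+2kn-5k$. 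Applying Lemma~\ref{lem: mn-rule extended} to $\theta_{3k}$ yields the character factor $\chi_\lambda(\theta_{3k})$, the factor $2^{3k}$ again records the two directed-orientation realizations of each active tetromino, the $(2k)!$ factor counts orderings of matched vertices across selector gadgets, and the signs aggregate to $(-1)^{m+2kn-3k}$ coming from the $-1$-edges in passive and selector gadgets.

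The main obstacle is the careful bookkeeping of signs, multiplicities, and vertex counts, and (for Part~2) the design of the vertex-selector gadget. The selector must itself be consistency-invariant, so that any non-consistent completion in it cancels via an involution analogous to the one used in Lemma~\ref{lem: gadget-consistent}. One must further verify that active-state orientations are double-counted precisely (yielding the $2^s$), that the partition product produced by summing over active-state realizations is exactly $\theta_s$ (and not a more general product that would obstruct Lemma~\ref{lem: mn-rule extended}), and that the edge-weight product across all gadgets assembles into $\prod_{e\in M}w(e)$ without extra constants. The conceptual step---using the extended Murnaghan--Nakayama rule to factor the character sum over active-gadget configurations into $\chi_\lambda(\theta_s)$---is clean; the combinatorial verification is where the real work lies.
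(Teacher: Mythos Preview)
Your Part~1 outline is essentially the paper's construction, but one detail is wrong and would break the argument if taken literally: the passive state of $Q$ is a digon on the two \emph{internal} vertices and carries weight $-1$, not $-w(e)$; the $w$-weighted edges appear only in active states. With your claimed passive weight $-w(e)$, every consistent cycle cover for a matching $M$ would have weight $(-1)^{m-n/2}\prod_{e\in E(H)}w(e)$, independent of $M$, and summing over $M$ would not produce $\PerfMatch(H)$. Once the passive weight is corrected to $-1$, the rest goes through. (Your appeal to Lemma~\ref{lem: mn-rule extended} is also misplaced: $\chi_\lambda(\theta_{n/2})$ is by definition the sum over the partition product, so identifying the multiset of active-state formats with $2^{n/2}$ copies of $\theta_{n/2}$ is all that is needed here; the extended rule is only used later, in Section~\ref{subsec: tetro-char}, to show non-vanishing.)

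Part~2 has a genuine gap. A per-vertex selector that independently chooses ``matched'' or ``unmatched'' at each $v$ cannot by itself enforce that exactly $2k$ vertices are matched; you would obtain a sum over \emph{all} matchings, not $k$-matchings. Your sketch also gives no source for the factor $(2k)!$---with one local selector per vertex there is nothing to order---and is internally inconsistent about whether a matched vertex contributes one or two extra tetrominos. The paper achieves the size constraint globally: it attaches to each $v$ a switch vertex $s_v$ joined by a plain digon, and adds $2k$ \emph{receptor} vertices, each connected to every $s_v$ by an edge gadget. In any consistent cover, each receptor is covered by an active gadget to some $s_v$; this consumes exactly $2k$ switch vertices (and the $(2k)!$ is precisely the number of receptor-to-switch bijections), the remaining $n-2k$ switch digons cover the unmatched $H$-vertices, and the $2k$ exposed $H$-vertices must be covered by $k$ active edge gadgets inside $H$. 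This yields $k+2k=3k$ active tetrominos (hence $\theta_{3k}$) and $m+2kn-3k$ passive gadgets (hence the sign $(-1)^{m+2kn-3k}$), which is also what drives the hypothesis $d(\lambda)\ge m+n+2kn-5k$.
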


\begin{proof}
For the \textbf{first part}, we define $G$ as follows:
\begin{enumerate}
\item Replace each edge $uv\in E(H)$ with a fresh copy $Q_{uv}$ of the
edge gadget $Q$. Identify $u$ and $v$ with the endpoints of $Q$.
\item Add $d'=d(\lambda)-(n/2+m)$ \emph{padding digons} and $z(\lambda)$
isolated \emph{padding vertices} with self-loops.
\end{enumerate}
By Lemma~\ref{lem: gadget-consistent} we have $\imm_{\lambda}(G)=\imm_{\lambda}^{*}(G)$,
where $\imm_{\lambda}^{*}(G)$ sums over the set $\mathcal{C}^{*}$
of consistent cycle covers. Any cycle cover $C\in\mathcal{C}^{*}$
includes all padding elements. Apart from padding elements, $C$ consists
of the active states of some gadgets and the passive states of the
remaining gadgets; let $M(C)\subseteq E(H)$ denote the set of pairs
$uv$ such that $Q_{uv}$ is active in $C$. Since active gadget states
must be vertex-disjoint, and all vertices of $G$ must be covered
by cycles in $C$, the set $M(C)$ induces a perfect matching $M(C)\in\mathcal{M}_{n/2}(H)$
in $H$. Conversely, given a perfect matching $M$ of $H$, let $\mathcal{C}_{M}^{*}\subseteq\mathcal{C}^{*}$
denote the set of consistent cycle covers with $M(C)=M$. By grouping
the terms in $\imm_{\lambda}^{*}(G)$, we obtain 
\begin{equation}
\imm_{\lambda}^{*}(G)=\sum_{M\in\mathcal{M}_{n/2}(H)}\underbrace{\sum_{C\in\mathcal{C}_{M}^{*}}\chi_{\lambda}(C)\cdot w(C)}_{=:a(M)}.\label{eq: imm-match-expanded-1}
\end{equation}
To calculate $a(M)$, we investigate the set $\mathcal{C}_{M}^{*}$:
Each cycle cover $C\in\mathcal{C}_{M}^{*}$ is obtained by 
\begin{itemize}
\item choosing an active state for each of the $n/2$ gadgets $Q_{uv}$
with $uv\in M$, each inducing the weight $w(uv)$,
\item adding the passive state (of weight $-1$) at the remaining gadgets,
for a total weight of $(-1)^{m-n/2}$, and
\item adding all padding elements, all of weight $1$.
\end{itemize}
The total weight is thus $(-1)^{m-n/2}$. As choices can only be made
at active states, the formats of cycle covers in $\mathcal{C}_{M}^{*}$
are given by the partition product 
\[
\{(2^{2}),(2^{2}),(4),(4)\}^{\times n/2}\times\{(2^{d(\lambda)-n},1^{z(\lambda)})\}.
\]
Viewed as a multiset of partitions, this product amounts to $2^{n/2}$
copies of $\theta_{n/2}$: After choosing one of the formats $\{(2^{2}),(4)\}$
for each of $n/2$ entries, we can choose the first or second copy
of this format. This shows that $a(M)=(-1)^{m-n/2}\cdot2^{n/2}\cdot\chi_{\lambda}(\theta_{n/2})$
and thus proves (\ref{eq: skewgraph-eq-perf}).

For the \textbf{second part}, we construct the graph $G$ in a similar
way, but we need to add some additional structures to account for
the fact that most vertices are unmatched in a $k$-matching for $k\ll n$.
\begin{enumerate}
\item Replace all edges of $H$ by edge gadgets.
\item For each vertex $v\in V(H)$, add a \emph{switch vertex} $s_{v}$
and connect it to $v$ with a \emph{switch digon}.
\item Add $2k$ \emph{receptor vertices}. Add a receptor edge between each
pair of receptor and switch vertex, then replace these edges by edge
gadgets.
\item Add $d'=d(\lambda)-(m+n+2kn-5k)$ isolated digons and $z(\lambda)$
isolated vertices.
\end{enumerate}
Any cycle cover $C\in\mathcal{C}^{*}$ then consists of the following
cycles:
\begin{itemize}
\item Each receptor vertex must be covered by an active edge gadget. Then
the other endpoint of that gadget is the switch vertex of some vertex
in $H$. The remaining $2k\cdot(n-1)$ edge gadgets incident with
receptor vertices are passive. There are $(2k)!$ ways of matching
the $2k$ receptor vertices to $2k$ fixed switch vertices with active
gadgets.
\item The $n-2k$ switch vertices not touched by active gadgets from receptor
vertices must be covered by switch digons.
\item By the previous item, $2k$ vertices in $H$ are left to be covered
by $k$ active edge gadgets that represent edges in $H$. As active
edge gadgets are vertex-disjoint, they induce a $k$-matching $M(C)$
in $H$, and they contribute weight $\prod_{uv\in M(C)}w(uv)$.
\item Overall, there are $m-k+2k\cdot(n-1)=m+2kn-3k$ passive edge gadgets
in $C$, each contributing weight $-1$. With padding digons and loops
in $G$, there are $d(\lambda)-6k$ digons and $z(\lambda)$ loops
in $C$.
\end{itemize}
The third item describes how $C$ induces a $k$-matching $M(C)\in\mathcal{M}_{k}(H)$.
Conversely, we can observe (as in the first case) that any $k$-matching
$M$ of $H$ induces consistent cycle covers with a total contribution
of 
\[
(-1)^{m+2kn-3k}\cdot(2k)!\cdot2^{3k}\cdot w(M)\cdot\chi_{\lambda}(\theta_{3k}).
\]
Note that the factor $(2k)!$ stems from the different ways receptor
vertices can match to switch vertices.
\end{proof}
\begin{rem*}
Why not use self-loops to handle the switch vertices of matched vertices?
It seemed to work fine when dealing with staircases in Lemma~\ref{lem: staircase-graph}?
The reason is that doing so may require us to peel less than $d(\lambda)$
dominos from $\lambda$ to accommodate these self-loops. As a toy
example for the complications that can arise this way, consider $\lambda=(3,1^{2})$
with $d(\lambda)=2$ and $\rho=(2,1^{3})$. There is only one domino
in $\rho$, and we have $\chi_{\lambda}(\rho)=0$, since this single
domino can be peeled as $\tHDom$ or $\tVDom$, leading to a cancellation.
If we peel \emph{all }$d(\lambda)$ dominos from $\lambda$, then
no such cancellations can occur, since Lemma~\ref{lem: domino-tiling-sign}
guarantees the same parity of the peeled domino tilings.
\end{rem*}

\subsection{\label{subsec: tetro-char}Analyzing the character values}

In this section, we analyze $\chi_{\lambda}(\theta_{s})$ for the
partition product $\theta_{s}$ defined in (\ref{eq: theta_s}). First,
we collect a few facts on $F$ that can be checked by simple manual
calculations. In the following, recall that a \emph{domino tiling}
of a $2t$-box skew shape $\gamma$ is a border strip tableau of format
$(2^{t})$. We call it \emph{even/odd} if its number of vertical $\tVDom$
pieces is even/odd. By Lemma~\ref{lem: domino-tiling-sign}, all
domino tilings of $\gamma$ have the same parity. Also recall Section~\ref{subsec: The-Murnaghan-Nakayama-rule}
for the required definitions related to partition products.
\begin{fact}
\label{fact: nonvan-tetro}For $F=\{(2^{2}),(4)\}$, we have $\alpha_{F}(\gamma)\neq0$
iff $\gamma$ is a non-vanishing tetromino. Furthermore, for any non-vanishing
tetromino, the sign of $\alpha_{F}(\gamma)$ is given by the parity
of its domino tilings, see Table~\ref{tab: nv-tetro}:
\[
\alpha_{F}(\gamma)=\begin{cases}
+2 & \text{\ensuremath{\gamma}}\text{ has even domino tilings},\\
-2 & \text{\ensuremath{\gamma}}\text{ has odd domino tilings.}
\end{cases}
\]
\end{fact}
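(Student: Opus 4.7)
The plan is to compute $\alpha_F(\gamma)$ directly from Definition~\ref{def: Gamma_F} by splitting off the contributions of the two formats in $F$, and then checking the resulting quantity on the finitely many relevant $4$-box skew shapes. Writing
\[
\alpha_F(\gamma) \;=\; \underbrace{\sum_{T \in \mathcal{B}(\gamma,(2,2))} \height(T)}_{A(\gamma)} \;+\; \underbrace{\sum_{T \in \mathcal{B}(\gamma,(4))} \height(T)}_{B(\gamma)},
\]
the term $B(\gamma)$ is $0$ unless $\gamma$ is itself a $4$-box border strip, in which case it equals $(-1)^{r(\gamma)-1}$, where $r(\gamma)$ denotes the number of rows occupied by $\gamma$.

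To handle $A(\gamma)$, observe that each $(2,2)$-tableau is a choice of a domino tiling of $\gamma$ together with a legal peeling order of its two dominoes. By Lemma~\ref{lem: domino-tiling-sign} all domino tilings of $\gamma$ share the same parity, and hence every summand of $A(\gamma)$ carries the common sign $(-1)^{p(\gamma)}$, where $p(\gamma)\in\{0,1\}$ is the number of vertical dominoes modulo $2$ in any tiling of $\gamma$. Writing $t(\gamma)$ for the total number of such tableaux, we obtain $A(\gamma) = (-1)^{p(\gamma)} \cdot t(\gamma)$. Since $\gamma$ has only four boxes, a tiling uses exactly two dominoes and admits at most two legal peeling orders, and a brief inspection shows that among $4$-box skew shapes only the $2{\times}2$ square admits more than one tiling, so $t(\gamma)\leq 2$ and hence $|A(\gamma)|\leq 2$.

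Combined with $|B(\gamma)|\leq 1$, the values of $\alpha_F(\gamma)$ are a priori confined to $\{-3,\ldots,3\}$, and the proof is completed by a direct enumeration. The connected $4$-box skew shapes are listed in Table~\ref{tab: conn-shapes}, and the relevant disconnected ones are the disjoint unions of two dominoes in disjoint rows and columns, in accordance with the convention stated earlier on corner-connected domino pairs. For each such $\gamma$ one reads off $t(\gamma)$, $p(\gamma)$, and (when $\gamma$ is a border strip) $r(\gamma)$, and verifies that $\alpha_F(\gamma)\in\{-2,0,+2\}$, with the nonzero values occurring exactly on the shapes of Table~\ref{tab: nv-tetro}. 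The sign claim then drops out automatically: on a non-vanishing tetromino either $B(\gamma)=0$ and $t(\gamma)=2$ (the disconnected tetrominoes and the $2{\times}2$ square, giving $A(\gamma)=2(-1)^{p(\gamma)}$), or $B(\gamma)\neq 0$ and a shape-by-shape check yields $(-1)^{r(\gamma)-1}=(-1)^{p(\gamma)}$, so that $A(\gamma)$ and $B(\gamma)$ add coherently to $2(-1)^{p(\gamma)}$.

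The main---and quite mild---obstacle lies in the cancellation analysis for the connected $4$-box shapes that fail to be non-vanishing tetrominoes, such as the vertical $4$-line, the vertical L $(2,1,1)$, its horizontal mirror, and the Z-shape. For each of these, both $A(\gamma)$ and $B(\gamma)$ are nonzero but of opposite signs, and the fact that they cancel exactly reduces to comparing $p(\gamma)$ with $r(\gamma)-1$ in each instance; this is what makes the non-vanishing set match Table~\ref{tab: nv-tetro} and no more.
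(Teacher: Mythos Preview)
Your proof is correct and follows essentially the same approach as the paper's: both split into the three cases of disconnected domino pairs, the $2\times 2$ square, and connected $4$-box border strips, and in each case count the $(2^2)$- and $(4)$-tableaux and check the signs. Your explicit decomposition $\alpha_F = A + B$ and the notation $t(\gamma)$, $p(\gamma)$, $r(\gamma)$ make the bookkeeping slightly more systematic, but the underlying case analysis and the key observation that non-vanishing on a border strip amounts to $(-1)^{r(\gamma)-1} = (-1)^{p(\gamma)}$ are identical to the paper's argument.
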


\begin{proof}
Let $\Gamma_{F}$ denote the skew shapes that admit a border strip
tableau with a format from $F=\{(2^{2}),(4)\}$. We observe that every
shape in $\Gamma_{F}$ admits a domino tiling with two dominos, and
that $\Gamma_{F}$ therefore is a subset of the nine connected $4$-box
skew shapes from Table~\ref{tab: conn-shapes} and the four disconnected
two-domino skew shapes. Furthermore:
\begin{itemize}
\item We have $\alpha_{F}(\gamma)\in\{-2,2\}$ if $\gamma$ consists of
two disconnected dominos, and the sign is positive iff both dominos
have the same orientation. Indeed, $\gamma$ admits exactly two border
strip tableaux $T_{1},T_{2}$ with formats from $F$, which are domino
tilings with the same height sign.
\item We have $\alpha_{F}(\tSquare)=2$, since the square admits two border
strip tableaux with block formats from $F$. Both tableaux are domino
tilings with the same height sign.
\item The remaining connected $4$-box shapes $\gamma$ are border strips.
As such, they admit one border strip tableau of format $(2^{2})$
and $(4)$ each, implying that $\alpha_{F}(\gamma)\in\{-2,0,2\}$.
We have $\alpha_{F}(\gamma)\neq0$ iff the height of $\gamma$ and
the unique domino tiling of $\gamma$ agree in parity. (Recall that
the height $\height(\gamma)$ is the number of touched rows \emph{minus
one.}) The sign is positive iff the tiling is even.
\end{itemize}
These observations together prove the statement.
\end{proof}
\begin{table}
\begin{centering}
\begin{tabular}{c>{\centering}m{1.5cm}|>{\centering}m{1.5cm}|>{\centering}m{1.5cm}|>{\centering}m{1.5cm}>{\centering}m{1.5cm}}
positive: & \ydiagram{4} & \ydiagram{2,2} & \ydiagram{1+1,2,1} & \multicolumn{1}{>{\centering}m{1.5cm}|}{\ydiagram{3+2,2}} & \ydiagram{2+1,2+1,1,1}\tabularnewline
 & \multicolumn{1}{>{\centering}m{1.5cm}}{~} & \multicolumn{1}{>{\centering}m{1.5cm}}{} & \multicolumn{1}{>{\centering}m{1.5cm}}{} &  & \tabularnewline
 & \multicolumn{1}{>{\centering}m{1.5cm}}{~} & \multicolumn{1}{>{\centering}m{1.5cm}}{} & \multicolumn{1}{>{\centering}m{1.5cm}}{} &  & \tabularnewline
negative: & \ydiagram{3,1} & \ydiagram{2+1,3} & \ydiagram{2+2,1,1} & \ydiagram{3+1,3+1,2} & \tabularnewline
\end{tabular}
\par\end{centering}
~

\caption{\label{tab: nv-tetro}The shapes $\gamma$ with $\alpha_{F}(\gamma)\protect\neq0$,
grouped by sign.}
\end{table}

We can now prove that the relevant characters in the tetromino-based
reduction do not vanish. 
\begin{lem}
For any partition $\lambda\vdash n'$ and integer $s\leq s(\lambda)$,
we have $\chi_{\lambda}(\theta_{s})\neq0$.
\end{lem}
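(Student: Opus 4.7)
My proof plan is to expand $\chi_\lambda(\theta_s)$ via the extended Murnaghan--Nakayama rule (Lemma~\ref{lem: mn-rule extended}), use Fact~\ref{fact: nonvan-tetro} to identify the surviving terms, and then exploit a global parity invariance to show that all surviving terms share the same sign. Setting $a := d(\lambda) - 2s$ and $b := z(\lambda)$, the product $\theta_s$ has $s+1$ factors and Lemma~\ref{lem: mn-rule extended} expresses $\chi_\lambda(\theta_s)$ as a sum over skew shape tableaux $S \in \mathcal{S}(\lambda,\tilde{\Gamma})$ peeling shapes $\gamma_1,\ldots,\gamma_s$ of four boxes and a tail $\gamma_{s+1}$ of size $2a+b$. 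By Fact~\ref{fact: nonvan-tetro}, the contribution vanishes unless every $\gamma_i$ ($i\in[s]$) is a non-vanishing tetromino, in which case $\alpha_i(\gamma_i) = 2(-1)^{v(\gamma_i)}$, where $v(\gamma_i)$ is the parity-invariant count of $\tVDom$ pieces in any domino tiling of $\gamma_i$. Since singletons have height zero, the tail coefficient simplifies to $\alpha_{s+1}(\gamma_{s+1}) = \sum_T (-1)^{v(T)}$, with $T$ ranging over border strip tableaux of $\gamma_{s+1}$ with format $(2^a,1^b)$ and $v(T)$ counting the vertical dominoes in $T$.

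The central technical step is to show that the global parity $\sum_{i=1}^s v(\gamma_i) + v(T)$ is one and the same constant $\sigma$ for every valid pair $(S,T)$. Each such pair induces a dissection of $\lambda/\mu^{*}$, where $\mu^{*}$ is the partition of size $b$ left over after all peelings, into $s$ non-vanishing tetrominoes plus $a$ dominoes. Replacing each tetromino by its two constituent dominoes yields a full domino tiling of $\lambda/\mu^{*}$, whose vertical-domino parity is shape-determined by Lemma~\ref{lem: domino-tiling-sign}. I would also invoke the chessboard colouring from the proof of Lemma~\ref{lem: domino-tiling-sign} to establish that peeling anything tileable by dominoes preserves the black-minus-white box count of a partition, and hence its $2$-core; combined with $|\mu^{*}| = z(\lambda) = |\mu|$ and the uniqueness of staircases of a given size (Fact~\ref{fact: extract-staircase}), this forces $\mu^{*} = \mu$ uniformly across all valid pairs. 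Consequently,
\[
\chi_\lambda(\theta_s) = 2^s (-1)^\sigma N,
\]
where $N$ is the number of valid pairs $(S,T)$. The assumption $s \leq s(\lambda)$ supplies an initial sequence of $s$ non-vanishing tetromino peelings from $\lambda$; by the $2$-core invariance just established, the remaining partition can be peeled by $a$ further dominoes down to $\mu$ and then by $b$ singletons, so $N \geq 1$ and the lemma follows.

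The main obstacle I anticipate is the careful justification that peeling a non-vanishing tetromino as a single four-box unit preserves the $2$-core. Since the natural intermediate step (``half-peeling'' one of its two constituent dominoes) need not yield a partition, one cannot directly appeal to the classical fact that individual domino peelings preserve the $2$-core. The clean workaround is the direct chessboard argument already used in Lemma~\ref{lem: domino-tiling-sign}: every non-vanishing tetromino is tileable by two dominoes and therefore covers exactly two black and two white boxes, so peeling it preserves the black-minus-white count. Once this is in place, the rest of the argument reduces to routine bookkeeping.
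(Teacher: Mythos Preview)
Your approach mirrors the paper's: expand via Lemma~\ref{lem: mn-rule extended}, use Fact~\ref{fact: nonvan-tetro} to restrict the first $s$ shapes to non-vanishing tetrominos, replace each tetromino by two dominoes, and invoke Lemma~\ref{lem: domino-tiling-sign} to see that all surviving terms carry the same sign. The only cosmetic difference is that the paper breaks the last factor $\{(2^{a},1^{b})\}$ further into $a$ copies of $\{(2)\}$ and $b$ copies of $\{(1)\}$ before applying Lemma~\ref{lem: mn-rule extended}, so that the tail already comes as a sequence of individual dominoes and boxes rather than as one large skew shape with an internal border-strip tableau.

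You are in fact more explicit than the paper on one point: you flag the need to show that the residual partition $\mu^{*}$ is always the staircase $\mu$, whereas the paper simply asserts that ``the tetrominos and dominos tile $\lambda/\mu$''. Your proposed justification, however, contains a small slip. The colouring in the proof of Lemma~\ref{lem: domino-tiling-sign} is a \emph{row} colouring, under which a horizontal domino covers two boxes of the \emph{same} colour; it therefore does \emph{not} preserve the black-minus-white count, and your sentence ``every non-vanishing tetromino \ldots\ covers exactly two black and two white boxes'' is false for that colouring (consider $\tLine$). For the $2$-core invariance you need the standard chequerboard colouring by $(i{+}j)\bmod 2$: then every domino covers one box of each colour, so peeling any domino-tileable skew shape preserves the net colour count. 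Since distinct staircases have pairwise distinct net colour (the values run $0,1,-1,2,-2,\ldots$), this count determines the $2$-core, and $\mu^{*}=\mu$ follows. Alternatively, one can simply check case by case that every connected non-vanishing tetromino in Table~\ref{tab: nv-tetro} can be peeled from any partition as two successive dominoes with a valid intermediate partition; for the disconnected ones this holds by convention.
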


\begin{proof}
By Fact~\ref{fact: nonvan-tetro}, we have $\gamma\in\Gamma_{F}$
and $\alpha_{F}(\gamma)\neq0$ for a $4$-box skew shape $\gamma$
iff $\gamma$ is a non-vanishing tetromino. Let $D=\{(2)\}$ and $S=\{(1)\}$;
then $\Gamma_{D}=\{\tHDom,\tVDom\}$ and $\Gamma_{S}=\{\boxempty\}$.
Define the set of skew shape tableaux
\[
\mathcal{S}=\mathcal{S}(\lambda,\underbrace{\Gamma_{F},\ldots,\Gamma_{F}}_{s\text{ times}},\underbrace{\Gamma_{D},\ldots,\Gamma_{D}}_{d(\lambda)-2s\text{ times}},\underbrace{\Gamma_{S},\ldots,\Gamma_{S}}_{z(\lambda)\text{ times}})
\]
and let $t=d(\lambda)-s+z(\lambda)$ be the number of sets of skew
shapes in the above list. For $i\in[t]$, let $\alpha_{i}\in\{\alpha_{F},\alpha_{D},\alpha_{S}\}$
be the coefficient function for the $i$-th set in the list. By Lemma~\ref{lem: mn-rule extended},
we have
\begin{equation}
\chi_{\lambda}(\theta_{s})=\sum_{\substack{S\in\mathcal{S}\text{ with}\\
\mathrm{shapes\ }\gamma_{1}\ldots\gamma_{t}
}
}\ \prod_{i=1}^{t}\alpha_{i}(\gamma_{i}).\label{eq: MN-R proof-1}
\end{equation}
It follows that every tableau $S\in\mathcal{S}$ with non-zero weight
in the above sum peels $s$ non-vanishing tetrominos from $\lambda$,
followed by $d(\lambda)-2s$ dominos, and $z(\lambda)$ singleton
boxes. The tetrominos and dominos tile $\lambda/\mu$, where $\mu$
is the staircase of $\lambda$.

Since $s\leq s(\lambda)$, at least one tableau $S\in\mathcal{S}$
exists, and we can prove the lemma by showing that all tableaux are
counted with the same sign in (\ref{eq: MN-R proof-1}). Towards this,
note that any skew shape tableau $S\in\mathcal{S}$ can be turned
into a border strip tableau $B(S)$ of $\lambda$ that peels $d(\lambda)$
dominos and $z(\lambda)$ singleton boxes from $\lambda$: Simply
tile each tetromino $\gamma$ in $S$ arbitrarily with two dominos.
By Fact~\ref{fact: nonvan-tetro}, we know that $\alpha_{F}(\gamma)$
is positive/negative iff the tiling of $\gamma$ is even/odd, so $\prod_{i=1}^{t}\alpha_{i}(\gamma_{i})$
is positive/negative iff the domino tiling of $\lambda/\mu$ induced
by $B(S)$ is even/odd. (Singleton boxes can be ignored, as they contribute
the factor $+1$.) But by Lemma~\ref{lem: domino-tiling-sign}, all
domino tilings of $\lambda/\mu$ have the same parity. Therefore,
all terms in (\ref{eq: MN-R proof-1}) have the same sign.
\end{proof}
\begin{cor}
\label{cor: skew-coeffs}The coefficients $c_{1}$ and $c_{2}$ defined
in Lemma~\ref{lem: skew-G-perf} are both non-zero.
\end{cor}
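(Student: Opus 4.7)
The plan is to observe that $c_1$ and $c_2$ each factor into clearly nonzero scalar pieces times a character value of the form $\chi_\lambda(\theta_s)$, and then appeal to the preceding lemma to conclude nonvanishing of that character value.

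Concretely, I would first inspect the definitions
\[
c_1 = (-1)^{m-n/2}\cdot 2^{n/2}\cdot\chi_\lambda(\theta_{n/2}),\qquad
c_2 = (-1)^{m+2kn-3k}\cdot (2k)!\cdot 2^{3k}\cdot\chi_\lambda(\theta_{3k}),
\]
from Lemma~\ref{lem: skew-G-perf} and peel off the factors that are trivially nonzero over $\mathbb{Q}$: the signs $\pm 1$, the powers of $2$, and the factorial $(2k)!$. The only substantive step is to show that the character values $\chi_\lambda(\theta_{n/2})$ and $\chi_\lambda(\theta_{3k})$ are nonzero.

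For this, I would invoke the lemma proved immediately above, which states that $\chi_\lambda(\theta_s)\neq 0$ for every $s\leq s(\lambda)$. The only thing to verify is that the hypotheses of Lemma~\ref{lem: skew-G-perf} force the relevant indices to lie in this range. For part 1 the assumption is $s(\lambda)\geq n/2$, so the lemma applies directly with $s=n/2$ and gives $\chi_\lambda(\theta_{n/2})\neq 0$, hence $c_1\neq 0$. For part 2 the assumption $k\leq s(\lambda)/3$ yields $3k\leq s(\lambda)$, so the lemma applies with $s=3k$ and gives $\chi_\lambda(\theta_{3k})\neq 0$, hence $c_2\neq 0$.

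There is essentially no obstacle here: the corollary is a direct bookkeeping consequence of the previous lemma, and the main thing to take care of is matching the side conditions on $s$ in that lemma with the side conditions on $n/2$ and $3k$ in the two parts of Lemma~\ref{lem: skew-G-perf}. The heavy combinatorial lifting (grouping skew-shape tableaux via the extended Murnaghan--Nakayama rule and collapsing signs via the parity invariance of domino tilings from Lemma~\ref{lem: domino-tiling-sign}) has already been done in the proof of the nonvanishing lemma, so no further calculation is needed.
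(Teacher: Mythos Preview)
Your proposal is correct and is exactly the (implicit) argument the paper intends: the corollary is stated without proof precisely because the preceding lemma gives $\chi_\lambda(\theta_s)\neq 0$ for $s\le s(\lambda)$, and the hypotheses of Lemma~\ref{lem: skew-G-perf} ensure $n/2\le s(\lambda)$ and $3k\le s(\lambda)$, so the remaining scalar factors are trivially nonzero.
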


\subsection{\label{subsec: tetro-hardness}Reductions}

As in Section~\ref{subsec: staircase-hardness}, we collect the previous
arguments to obtain a reduction from counting matchings to immanants
for partitions with large non-vanishing tetromino number.
\begin{lem}
\label{lem: finalred-tetro}The following can be achieved in polynomial
time and with polynomial-sized arithmetic circuits: 
\begin{enumerate}
\item Given an $n$-vertex graph $H$ of maximum degree $3$ and a partition
$\lambda$ with $s(\lambda)\geq n/2$ and $d(\lambda)\geq2n$, compute
a digraph $G$ and a number $c\in\mathbb{Q}$ such that $\PerfMatch(H)=c\cdot\imm_{\lambda}(G)$.
\item Given an $n$-vertex graph $H$ and $k\in\mathbb{N}$, and a partition
$\lambda$ with $s(\lambda)\geq3k$ and $d(\lambda)\geq n^{2}+n+2kn-5k$,
compute a digraph $G$ and a number $c\in\mathbb{Q}$ such that $\match(H,k)=c\cdot\imm_{\lambda}(G)$.
\end{enumerate}
\end{lem}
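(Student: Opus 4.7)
The plan is to derive the lemma directly from Lemma~\ref{lem: skew-G-perf} and Corollary~\ref{cor: skew-coeffs}: the hypotheses on $s(\lambda)$ and $d(\lambda)$ in the present statement are worst-case substitutes for the edge-count-dependent hypotheses appearing there, tailored so that the graph-theoretic input $H$ determines them.

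For the first part, I would observe that a max-degree-$3$ graph $H$ on $n$ vertices has $m \leq 3n/2$ edges, so $n/2 + m \leq 2n \leq d(\lambda)$. Combined with $s(\lambda) \geq n/2$, this matches the hypotheses of part 1 of Lemma~\ref{lem: skew-G-perf}, which I then invoke to obtain in polynomial time an $n'$-vertex digraph $G$ with $\imm_\lambda(G) = c_1 \cdot \PerfMatch(H)$, where $c_1 \neq 0$ by Corollary~\ref{cor: skew-coeffs}. Setting $c := 1/c_1$ gives the desired equation. For the second part, any $n$-vertex graph has $m \leq n^2$, so $m + n + 2kn - 5k \leq n^2 + n + 2kn - 5k \leq d(\lambda)$; together with $k \leq s(\lambda)/3$ (from $s(\lambda) \geq 3k$), this matches the hypotheses of part 2 of Lemma~\ref{lem: skew-G-perf}, yielding $G$ with $\imm_\lambda(G) = c_2 \cdot \match(H,k)$ and $c_2 \neq 0$; set $c := 1/c_2$.

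The only step that warrants a second thought is polynomial-time computability of $c$, because at first glance $\chi_\lambda(\theta_s)$ is a sum over the $2^s$ partitions constituting the partition product $\theta_s$, which is exponential in $s$. However, $\theta_s$ contains only $s+1$ distinct partitions up to reordering: for each $j \in \{0,\ldots,s\}$, choosing $(2^2)$ from $j$ of the $s$ positions and $(4)$ from the remaining $s-j$ positions (and adjoining the fixed factor $(2^{d(\lambda)-2s}, 1^{z(\lambda)})$) produces the multi-set $(2^{d(\lambda)-2s+2j}, 4^{s-j}, 1^{z(\lambda)})$, and this outcome arises in exactly $\binom{s}{j}$ of the $2^s$ choices. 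Since class functions depend only on multi-sets, I would rewrite
\[
\chi_\lambda(\theta_s) \;=\; \sum_{j=0}^{s} \binom{s}{j} \, \chi_\lambda\!\bigl((2^{d(\lambda)-2s+2j}, 4^{s-j}, 1^{z(\lambda)})\bigr),
\]
which is a sum of $s+1$ individual character values. Each of these is polynomial-time computable and admits a polynomial-size arithmetic circuit by standard algorithms on symmetric-group characters (Proposition~7.4 of \cite{DBLP:books/daglib/0025071}, as already invoked in the proof of Lemma~\ref{lem: finalred-staircase}).

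In summary, the work in this lemma is almost entirely bookkeeping once Lemma~\ref{lem: skew-G-perf} and Corollary~\ref{cor: skew-coeffs} are in place; the ``main obstacle'' is the superficially exponential sum defining $c$, which collapses to a polynomial-size computation via the multi-set grouping above.
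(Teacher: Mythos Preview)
Your proposal is correct and follows essentially the same approach as the paper: both derive the lemma by checking that the hypotheses on $s(\lambda)$ and $d(\lambda)$ imply those of Lemma~\ref{lem: skew-G-perf} (via $m\le 3n/2$ in part~1 and $m\le n^2$ in part~2), then invoke Corollary~\ref{cor: skew-coeffs} for $c_1,c_2\neq 0$.

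You in fact go beyond the paper on one point: the paper's proof does not explicitly justify that $c$ (equivalently $\chi_\lambda(\theta_s)$) is computable in polynomial time, whereas you correctly observe that the $2^s$ partitions in $\theta_s$ collapse to $s+1$ distinct multi-sets with binomial multiplicities, each evaluable in polynomial time. This is a genuine detail the paper glosses over, and your argument for it is sound.
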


\begin{proof}
For the \textbf{first part}, let $H$ be a graph with $n$ vertices
and maximum degree $3$, so that $m\leq\frac{3}{2}n$. Note that $d(\lambda)\geq n+m$
by assumption. We construct a graph $G$ via Lemma~\ref{lem: skew-G-perf}
such that $\imm_{\lambda}(G)=c_{1}\cdot\PerfMatch(H)$. For the \textbf{second
part}, we can use Lemma~\ref{lem: skew-G-perf} to construct a graph
$G$ such that $\imm_{\lambda}(G)=c_{2}\cdot\match(H,k)$. We have
$c_{1},c_{2}\neq0$ by Corollary~\ref{cor: skew-coeffs}.
\end{proof}

\section{\label{sec:Completing-the-proof}Completing the proofs}

Let $\Lambda$ be a family of partitions with unbounded $b(\Lambda)$.
We compose the constructions from the preceding sections to an overall
hardness proof for $\immProb(\Lambda)$. This requires us to find
sequences of partitions within $\Lambda$ that are dense enough and
supply sufficiently many boxes to the right of the first column. In
the sub-polynomial growth regime, we also need to ensure a sufficiently
large supply of dominos; this can be achieved by having a large number
of boxes in the first column.
\begin{defn}
Given a polynomial-time computable function $g:\mathbb{N}\to\mathbb{N}$,
a family of partitions $\Lambda$ \emph{supports growth }$g$ if there
is a sequence $\Lambda'=(\lambda^{(1)},\lambda^{(2)},\ldots)$ in
$\Lambda$ such that $\lambda^{(n)}$ satisfies $b(\lambda^{(n)})\geq g(n)$
and $\left\Vert \lambda^{(n)}\right\Vert =\Theta(n)$. We also say
that $\Lambda$ supports growth $g$ \emph{via} $\Lambda'$. We say
that $\Lambda$ \emph{computationally} supports growth $g$ if $\lambda^{(n)}$
can be computed in polynomial time from $n$.
\end{defn}

We are ready to prove the main theorems of this paper. To this end,
we distinguish whether $\Lambda$ supports polynomial growth (for
a reduction from $\PerfMatch$) or only sub-polynomial growth (for
a reduction from $\match^{(g)}$ for any growth $g$ supported by
$\Lambda$). Recall that, by known algorithms~\cite{doi:10.1080/03081088508817680,DBLP:books/daglib/0025071},
we have $\immProb(\Lambda)\in\P$ and $\immProb(\Lambda)\in\VP$ for
any family $\Lambda$ with $b(\Lambda)<\infty$.
\begin{proof}[Proof of Theorem~\ref{thm: main-poly}]
The tractability part is known. For hardness, we reduce from $\PerfMatch$:
Let $H$ be an $n$-vertex bipartite graph for which we want to compute
$\PerfMatch(H)$. Let $\alpha>0$ be maximal such that $\Lambda$
supports growth $\Omega(n^{\alpha})$. We find a partition $\lambda\in\Lambda$
with $\abs{\lambda}=\Theta(n^{1/\alpha})$ such that $b(\lambda)\geq20n$.
Then Lemma~\ref{lem: mine-main} guarantees that at least one of
$s(\lambda)\geq3.5n$ or $w(\lambda)\geq\sqrt{20n}-1$ holds. In either
case, using the first cases of Lemmas~\ref{lem: finalred-staircase}
and \ref{lem: finalred-tetro}, we compute a graph $G$ on $\abs{\lambda}$
vertices and a number $c\in\mathbb{Q}$ such that $\PerfMatch(H)=c\cdot\imm_{\lambda}(G)$
holds. Overall, this yields polynomial-time and c-reductions from
$\PerfMatch$ to $\immProb(\Lambda)$, showing $\sharpP$-hardness
and $\VNP$-completeness of the latter. The lower bound under $\sharpETH$
follows since $\abs{\lambda}=\Theta(n^{1/\alpha})$.
\end{proof}
If $\Lambda$ supports only sub-polynomial growth, the proof proceeds
similarly. In this case, we can find a sequence of partitions in which
most rows have width $1$, which allows us to peel a large amount
of dominos from the left-most column.
\begin{proof}[Proof of Theorem~\ref{thm: main-param}]
If $\Lambda$ supports polynomial growth, we use Theorem~\ref{thm: main-poly}.
Otherwise, let $g\in\omega(1)$ be a growth supported by $\Lambda$.
We reduce from $\match^{(h)}$ with $h(n)=\sqrt{g(n)}/24$: Let $(H,k)$
be an instance for $\match^{(h)}$ with an $n$-vertex graph $H$
and $k\leq h(n)$. Using the growth condition on $\Lambda$ and $g\in O(n^{0.1})$,
we find a partition $\lambda\in\Lambda$ with $b(\lambda)\geq24k$
and at least $2n^{2}+b(\lambda)$ boxes in the first column, which
implies $d(\lambda)\geq n^{2}$. With the bound on $b(\lambda)$,
Lemma~\ref{lem: mine-main} yields that $s(\lambda)\geq3k$ or $w(\lambda)\geq\sqrt{24k}-1$.
In either case, using the second case of Lemmas~\ref{lem: finalred-staircase}
and \ref{lem: finalred-tetro}, we compute a graph $G$ on $\abs{\lambda}$
vertices and a number $c\in\mathbb{Q}$ such that $\match(H,k)=c\cdot\coeff{x^{t}}{\imm_{\lambda}(G)}$,
where $t=2k$ for the staircase-based reduction (Lemma~\ref{lem: finalred-staircase})
and $t=0$ for the tetromino-based reduction (Lemma~\ref{lem: finalred-tetro}). 

Note that we can compute $\coeff{x^{t}}{\imm_{\lambda}(G)}$ via polynomial
interpolation from the values $\imm_{\lambda}(G_{i})$ for $i\in\{0,\ldots,\abs{\lambda}\}$,
where $G_{i}$ is the graph obtained from $G$ by replacing each edge
of weight $x$ with an edge of weight $i$. Overall, we obtain a polynomial-time
Turing reduction from $\match^{(h)}$ to $\immProb(\Lambda)$, which
implies by Lemma~\ref{thm: match-bit-hardness} that $\immProb(\Lambda)\notin\P$
unless $\FPT=\sharpWone$. An analogous statement holds in the algebraic
setting: As polynomial interpolation amounts to solving a system of
linear equations, which can be performed with polynomial-sized circuits,
we obtain a parameterized c-reduction from the p-family $\match^{(h)}$
to the p-family $\immProb(\Lambda')$.
\end{proof}

\section{Conclusion and future work\label{sec: conclusion}}

We established a dichotomy for the complexity of immanants, concluding
a sequence of partial results obtained throughout the last four decades~\cite{doi:10.1080/03081088508817680,DBLP:journals/siamcomp/Burgisser00a,DBLP:journals/corr/cs-CC-0301024,DBLP:journals/toc/MertensM13,DBLP:conf/cie/Rugy-Altherre13}. 

Let us note that immanants are not the only way of generalizing permanents
and determinants into a family of matrix forms. Other examples include
the \emph{cover polynomials} \cite{DBLP:journals/jct/ChungG95,DBLP:journals/cc/BlaserDF12}
and the \emph{fermionants} \cite{DBLP:journals/toc/MertensM13,DBLP:journals/algorithmica/BjorklundKW19},
which are also sum-products over row-column permutations of a matrix.
Unlike the immanant however, these families do \emph{not} exhibit
gradual progressions from easy to hard cases, and they feature no
non-trivial easy cases beside the determinant.

Our result for immanants also prompts several interesting follow-up
questions.

\paragraph{Modular immanants.}

Writing $\lambda'$ for the transpose of a partition $\lambda$, it
is known that $\chi_{\lambda}$ and $\chi_{\lambda'}$ are equivalent
modulo $2$. This renders $\lambda$-immanants tractable over $\mathbb{Z}_{2}$
for partitions $\lambda$ with constantly many boxes outside of the
first column \emph{or row}. Are these the only tractable immanants
over $\mathbb{Z}_{2}$? Which immanants are tractable over $\mathbb{Z}_{p}$
for odd primes $p$? These questions lead into the tricky territory
of representation theory over fields of positive characteristic.

\paragraph{Planar graphs.}

The permanent is polynomial-time solvable for bi-adjacency matrices
of planar bipartite graphs. This is a consequence of the classical
FKT algorithm~\cite{doi:10.1080/14786436108243366}, which expresses
the number of perfect matchings in a planar graph $G$ as a determinant.
The relevant matrix is derived from the adjacency matrix of $G$ by
flipping signs according to a \emph{Pfaffian orientation} of $G$.
Can this algorithmic idea be generalized to more general immanants
on bi-adjacency matrices of planar bipartite graphs? Which immanants
remain hard on planar graphs?

\paragraph{Removing weights.}

Our proof establishes hardness for matrices with general entries from
$\mathbb{Z}$: Even though the source problems $\PerfMatch$ and $\match$
are hard for unweighted graphs, the edge-gadget and interpolation
steps introduce non-unit weights. It may however still be possible
to establish hardness for $0$-$1$ matrices, as known for the permanent.
While gadgets for simulating edge-weights are known, they change the
formats of the relevant cycle covers, and it is more difficult to
argue about character values on these formats.

\section*{Acknowledgments}

I thank Christian Engels for introducing me to the immanants and Nathan
Lindzey for introducing me to the representation theory of $S_{n}$.

\bibliographystyle{plain}
\bibliography{imm}

\end{document}